\newenvironment{proofof}[1]{\smallskip
\noindent {\bf Proof of #1.  }}{\hfill$\Box$
\smallskip}
\newenvironment{reminder}[1]{\smallskip
\noindent {\bf Reminder of #1  }\em}{}
\def \sgn {\textrm{\it sign}}
\def\poly{\text{poly}}
\def \AM {\text{\sf AM}}
\def \eps {\varepsilon}
\def \R {{\mathbb R}}
\def \Z {{\mathbb Z}}
\def \F {{\mathbb F}}
\def \Q {{\mathbb Q}}
\def \P {{\sf P}}
\def \PH {{\sf PH}}
\def \BP {{\sf BP}}
\def \AC {{\sf AC}}
\def\THR {{\sf LTF}}
\def \NP {{\sf NP}}
\def \BPP {{\sf BPP}}
\def \MOD {{\sf MOD}}
\def \AND {{\sf AND}}
\def \XOR {{\sf XOR}}
\def \MAJ {{\sf MAJ}}
\def \SYM {{\sf SYM}}
\newcommand{\ip}[2]{\ensuremath{\left<#1,#2\right>}}
\def \E {{\sf E}}
\title{Probabilistic Rank and Matrix Rigidity}
\author{Josh Alman\footnote{Computer Science Department, Stanford University, \texttt{jalman@cs.stanford.edu}. Supported by NSF DGE-114747.} \and 
Ryan Williams\footnote{Computer Science Department, Stanford University. Supported in part by a Microsoft Research Faculty Fellowship and NSF CCF-1552651 (CAREER). Any opinions, findings and conclusions or recommendations expressed in this material are those of the authors and do not necessarily reflect the views of the National Science Foundation.} 
}
\date{}
\begin{document}

\newtheorem{fact}{Fact}[section]
\newtheorem{rules}{Rule}[section]
\newtheorem{conjecture}{Conjecture}[section]
\newtheorem{theorem}{Theorem}[section]
\newtheorem{hypothesis}{Hypothesis}
\newtheorem{remark}{Remark}
\newtheorem{proposition}{Proposition}
\newtheorem{corollary}{Corollary}[section]
\newtheorem{lemma}{Lemma}[section]
\newtheorem{claim}{Claim}
\newtheorem{definition}{Definition}[section]

\maketitle

\begin{abstract} 
We consider a notion of probabilistic rank and probabilistic sign-rank of a matrix, which measures the extent to which a matrix can be probabilistically represented by low-rank matrices. We demonstrate several connections with matrix rigidity, communication complexity, and circuit lower bounds. The most interesting outcomes are:

\smallskip

{\bf The Walsh-Hadamard Transform is \emph{Not} Very Rigid.} We give surprising \emph{upper bounds} on the rigidity of a family of matrices whose rigidity has been extensively studied, and was conjectured to be highly rigid. For the $2^n \times 2^n$ Walsh-Hadamard transform $H_{n}$ (a.k.a. Sylvester matrices, a.k.a. the communication matrix of Inner Product modulo $2$), we show how to modify only $2^{\eps n}$ entries in each row and make the rank of $H_{n}$ drop below $2^{n(1-\Omega(\eps^2/\log(1/\eps)))}$, for all small $\eps > 0$, over any field. That is, it is \emph{not possible} to prove arithmetic circuit lower bounds on Hadamard matrices such as $H_{n}$, via L.~Valiant's matrix rigidity approach. We also show non-trivial rigidity upper bounds for $H_n$ with smaller target rank. 

\smallskip

{\bf Matrix Rigidity and Threshold Circuit Lower Bounds.} We give new consequences of rigid matrices for Boolean circuit complexity. First, we show that explicit $n \times n$ Boolean matrices which maintain rank at least $2^{(\log n)^{1-\delta}}$ after $\frac{n^2}{2^{(\log n)^{\delta/2}}}$ modified entries (over \emph{any} field, for any $\delta > 0$) would yield an explicit function that does not have sub-quadratic-size $\AC^0$ circuits with two layers of arbitrary linear threshold gates.
Second, we prove that explicit 0/1 matrices over $\R$ which are modestly more rigid than the best known rigidity lower bounds for \emph{sign-rank} would imply \emph{exponential-gate} lower bounds for the infamously difficult class of depth-two linear threshold circuits with arbitrary weights on both layers ($\THR \circ \THR$). In particular, we show that matrices defined by these seemingly-difficult circuit classes actually have low probabilistic rank and sign-rank, respectively.

\smallskip

{\bf An Equivalence Between Communication, Probabilistic Rank, and Rigidity.} It has been known since Razborov [1989] that explicit rigidity lower bounds would resolve longstanding lower-bound problems in communication complexity, but it seemed possible that communication lower bounds could be proved without making progress on matrix rigidity. We show that for every function $f$ which is randomly self-reducible in a natural way (the inner product mod $2$ is an example), bounding the communication complexity of $f$ (in a precise technical sense) is \emph{equivalent} to bounding the rigidity of the matrix of $f$, via an equivalence with probabilistic rank.
\end{abstract}

\thispagestyle{empty}
\newpage
\setcounter{page}{1}

\section{Introduction}

Let $R$ be a ring. In analogy with the notion of a probabilistic polynomial, we define a \emph{probabilistic matrix over $R$} to be a distribution of matrices ${\cal M} \subset R^{n \times n}$. A probabilistic matrix ${\cal M}$ \emph{computes} a matrix $A \in R^{n \times n}$ with error $\eps > 0$ if for every entry $(i,j) \in [n]^2$,
\[\Pr_{B \sim {\cal M}}[A[i,j] = B[i,j]] \geq 1-\eps.\] In this way, a probabilistic matrix is a \emph{worst-case randomized representation} of a fixed matrix. A probabilistic matrix ${\cal M}$ has rank $r$ if the maximum rank of a $M \sim {\cal M}$ is $r$.

We define the \emph{$\eps$-probabilistic rank} of a matrix $M \in R^{n \times n}$ to be the minimum rank of a probabilistic matrix computing $M$ with error $\eps$. Such probabilistic matrices are of interest and potentially very useful, because some full rank matrices can be represented by probabilistic matrices of rather low rank. For example, every identity matrix has $\eps$-probabilistic rank $O(1/\eps)$ over any field, by simulating a protocol for EQUALITY using $\log(1/\eps) + O(1)$ communication that computes random inner products (cf. Theorem~\ref{lem:eq-rank}).

We began studying probabilistic rank in the hopes of better understanding the use of probabilistic polynomials in algorithm design. Recent work has shown how substituting low-degree probabilistic polynomials in place of common subroutines can be very useful for speeding up the best known running times for many core problems~\cite{Williams14a,Williams14f,AbboudWY15,JoshRyan,Alman-Chan-Williams16,SystemsofPolyEqns}. However, almost every algorithmic application ends up embedding the low-degree polynomial evaluation problem in a \emph{fast multiplication of two low-rank (rectangular) matrices}. That is, this algorithmic work is really using the fact that that various circuits and subroutines from core algorithms have \emph{low probabilistic rank}, and is applying low-rank representations to obtain an algorithmic speedup. Because ``low probabilistic rank'' is potentially a far broader notion than that of ``low-degree probabilistic polynomials'', it makes more sense to study probabilistic rank directly, in the hopes of finding stronger algorithmic applications.

In this paper, we consider complexity-theoretic aspects of probabilistic rank. We demonstrate how probabilistic rank is a powerful notion for understanding the age-old problem of matrix rigidity, and some models of communication complexity where knowledge is still sparse.

\paragraph{Matrix Rigidity.} A central part of our paper connects the probabilistic rank of a matrix to its rigidity. 
The \emph{rank-$r$ rigidity} of a matrix $A \in R^{n\times n}$, denoted by ${\cal R}_A(r)$, is the minimum Hamming distance from $A$ to an $n \times n$ matrix of rank $r$ over $R$. 
That is, ${\cal R}_A(r)$ is the number of entries of $A$ that must be modified in order for the rank to drop to $r$. 
(Sometimes we'll want to work over a particular field $K$; in that case we'll speak of ``${\cal R}_A(r)$ over $K$.'')  Matrix rigidity was introduced by Leslie Valiant~\cite{Valiant77} in 1977, as a path towards arithmetic circuit lower bounds for linear transformations. Valiant showed that for a field $\F$, and every linear transformation $T : \F^n \rightarrow \F^n$ computable by a circuit of $O(n)$ addition gates of bounded fan-in (with scalar multiplications on the wires) and $O(\log n)$ depth, ${\cal R}_T(O(n/\log \log n)) \leq n^{1+\eps}$, for every fixed $\eps > 0$. Thus to prove a circuit lower bound for $T$, it suffices to lower bound the rigidity of $T$ for rank $O(n/\log \log n)$. Valiant proved that random 0/1 matrices over a field are highly rigid (whp), and strong rigidity lower bounds are known when one allows exponential (or infinite) precision in the matrix entries~\cite{Lokam06,Kumar2014}. 
However, no explicit rigid matrices $T$ with (say) ${\cal R}_T(O(n/\log \log n)) > n^{1.0001}$ are known\footnote{An infinite family of matrices $\{M_n \mid n \in S\}$ is said to be explicit if there is a polynomial time algorithm $A$ such than $A(1^n)$ prints $M_n$ when $n \in S$.}, despite decades of effort (see the surveys~\cite{Codenotti2000a,Lokam09} and the recent work~\cite{GoldreichT16}). The best known lower bounds for explicit $M$ yield only ${\cal R}_M(r) \geq \Omega(\frac{n^2}{r} \cdot \log(n/r))$~\cite{Friedman93,Shokrollahi1997}, and Lokam~\cite{Lokam2000} argues that known methods (``untouched minor arguments'') cannot prove rigidity lower bounds larger than this.
Very recently, Goldreich and Tal showed an improved ``semi-explicit'' rigidity lower bound: for random $n \times n$ Toeplitz matrices $M$, they proved that ${\cal R}_M(r) \geq \Omega(\frac{n^3}{r^2 \log n})$ whp, when rank $r \geq \sqrt{n}$~\cite{GoldreichT16} (note that such matrices can be generated with $\tilde{O}(n)$ bits of randomness).

In 1989, Razborov~\cite{Razborov89} (see also \cite{Wunderlich12}) described a connection between matrix rigidity and communication complexity: Letting $f$ be a function in $\PH^{cc}$ (the communication complexity equivalent of the polynomial-time hierarchy), the $2^n \times 2^n$ communication matrix $M_f$ of $f$ has ${\cal R}_{M_f}(2^{\log^c(n/\eps)}) \leq \eps \cdot 4^n$, where $\eps > 0$ is arbitrary and $c > 0$ is a constant depending only on $f$, but not $n$. (Razborov's proof uses low-degree polynomials which approximate $\AC^0$ functions.) Thus, explicit rigidity lower bounds in the ``low'' rank and ``high'' error setting would imply long-open communication lower bounds. 

Among the many attempts to prove arithmetic circuit lower bounds via rigidity, perhaps the most commonly studied explicit matrix has been the Walsh-Hadamard transform~\cite{PudlakS88,Alon90,Grigor,Nisan,KashinR98,Codenotti2000a,Lokam2001,LandsbergTV03,Mid05,Wolf06,Rashtchian16}:

\begin{definition} For vectors $x,y \in \R^d$, let $\langle x,y \rangle$ denote their inner product. Let $v_1,\ldots,v_{2^n} \in \{0,1\}^n$ be the enumeration of all $n$-bit vectors in lexicographical order. The \emph{Walsh-Hadamard matrix} $H_n$ is the $2^n \times 2^n$ matrix defined by $H_n(v_i,v_j) := (-1)^{\langle v_i,v_j \rangle}$. 
\end{definition}

It was believed that $H_n$ is rigid because its rows are mutually orthogonal (i.e., $H_n$ is Hadamard), so in several of the above references, only that property was assumed of the matrices. The best rigidity lower bounds known for $H_n$ have the form ${\cal R}_{H_n}(r) \geq \Omega(4^n/r)$; for the target rank $r = O(2^n/\log n)$ in Valiant's problem, the lower bound is only $\Omega(2^n \log n)$. It was a folklore theorem that one can modify only $O(n)$ entries of an $n \times n$ Hadamard matrix and make its rank at most $n/2$~\cite{Satya-hw}, but it was believed that for lower rank many more entries would require modification.

\paragraph{Hadamard Ain't So Rigid.} We give a good excuse for the weakness of these lower bounds:

\begin{theorem}[Non-Rigidity of Hadamard Matrices]\label{nonrigid-IP2} For every field $K$, for every sufficiently small $\eps > 0$, and for all $n$, we have ${\cal R}_{H_n}\left(2^{n-f(\eps) n}\right) \leq 2^{n(1+\eps)}$ over $K$, for a function $f$ where $f(\eps)=\Theta(\eps^2/\log(1/\eps))$.  
\end{theorem}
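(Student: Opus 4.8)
The plan is to show that $H_n$ has low probabilistic rank with tiny error, and then convert that into a rigidity upper bound. The first observation is that $H_n$ is the communication matrix of inner product mod $2$, i.e. $H_n(x,y) = (-1)^{\langle x,y\rangle}$ where the inner product is taken over $\F_2$. So I want to build a probabilistic polynomial (or more generally a probabilistic low-rank representation) for the function $g(x,y) = (-1)^{\langle x,y\rangle \bmod 2}$ with small error. The natural route is to notice that $\langle x,y\rangle = \sum_{k=1}^n x_k y_k$ is a sum of $n$ bits, and its parity is what we need; a probabilistic polynomial for $\XOR$ of $n$ bits can be obtained by subsampling the indices — pick a random subset $S \subseteq [n]$ (say by including each index with probability $1/2$, or by a more carefully tuned process) and output $\bigoplus_{k \in S} x_k y_k$. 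This single bilinear form $\sum_{k\in S} x_k y_k$ over $\F_2$ is a rank-$|S|$ matrix (as a function of $x,y$ it factors through $x \mapsto (x_k)_{k\in S}$), so it gives a rank-$O(n)$ matrix; but its error in computing $\langle x,y\rangle \bmod 2$ is only bounded by $1/2$, which is useless.

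The key trick to amplify is to first reduce the Hamming weight of the relevant coordinates. Here one uses the standard idea behind probabilistic polynomials for symmetric functions / the OR-amplification of Valiant-Vazirani flavor: hash the $n$ coordinates into $m \approx O(\log(1/\eps))$ buckets (or recursively halve), so that with good probability at most a constant number of the ``active'' coordinates $x_k y_k = 1$ land together, and then a low-degree polynomial exactly computes parity on a bounded number of ones. Concretely, I would take $\ell$ independent random restrictions/subsamplings of the index set $[n]$, compute the corresponding $\ell$ bilinear forms, and combine them with a polynomial of degree roughly $\log(1/\eps)$ in these $\ell$ outputs to drive the error down to $\eps$. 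The degree-$d$ combination of $\ell$ rank-$n$ bilinear forms (over $\F_2$, lifted to $K$ via $(-1)^{(\cdot)}$) yields a matrix of rank at most $\binom{\ell \cdot n'}{\le d}$ where $n'$ is the reduced dimension — but I have to be careful, because a polynomial in the $\langle x, y\rangle_S$ values is a polynomial in the entries of a *bounded-rank* matrix, and a degree-$d$ polynomial of $r$ rank-one pieces has rank $\binom{r+d}{d}$, not exponential. Balancing $d = \Theta(\eps \cdot n / \log(1/\eps))$ against the union-bound error, the rank comes out to $2^{n(1 - \Theta(\eps^2/\log(1/\eps)))}$ and the error to at most $\eps$.

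Once I have a probabilistic matrix $\mathcal{M}$ over $K$ of rank $r = 2^{n - f(\eps)n}$ computing $H_n$ with entrywise error $\eps$, I convert to rigidity by an averaging argument: sample $B \sim \mathcal{M}$; the expected number of entries where $B$ disagrees with $H_n$ is at most $\eps \cdot 4^n$, so there is a fixed $B$ of rank $r$ with Hamming distance at most $\eps \cdot 4^n$ from $H_n$. That already gives ${\cal R}_{H_n}(r) \le \eps \cdot 4^n = 2^{2n}\eps$, but the theorem wants $2^{n(1+\eps)}$ errors, which is far smaller — so the averaging must be done *per row*, not globally. The fix is to design the probabilistic polynomial so that the error is $\eps' = 2^{-n(1-\eps)}$ rather than a constant $\eps$ (this only changes the $\log(1/\eps')$ factors by a constant once we set $\eps' = 2^{-\eps n}$, after re-tuning; here the ``$\eps$'' in the error and the ``$\eps$'' in the statement get identified up to constants in $f$), and then for each of the $2^n$ rows, averaging over $B \sim \mathcal{M}$ gives some $B$ with at most $\eps' \cdot 2^n$ disagreements in that row — but we need a *single* $B$ good for all rows simultaneously. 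This is where I expect the main obstacle, and I believe the resolution is that the probabilistic matrix is built from the random subsampling described above in a way that is *symmetric across rows*: the construction computes $(-1)^{\langle x,y\rangle}$ via a procedure whose randomness does not depend on $x$, so a single draw of the random subsets/restrictions yields a matrix $B$ whose disagreement pattern with $H_n$ has the same distribution in every row; then a union bound over the $2^n$ rows (costing a factor $2^n$, i.e. replacing $\eps'$ by $\eps'/2^n$, hence the final $2^{-2\eps n}$-type error and the slightly worse $f(\eps)$) produces one $B$ of rank $r$ differing from $H_n$ in at most $2^{\eps n}$ entries *per row*, for a total of $2^{n(1+\eps)}$, which is exactly the claimed bound. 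So the real work is (i) getting the degree/rank/error tradeoff of the probabilistic polynomial for $\mathrm{IP}_2$ right — this is the combinatorial heart — and (ii) arranging the construction to be row-homogeneous so the per-row union bound is affordable.
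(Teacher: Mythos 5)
Your overall route (build a probabilistic low-rank representation of $H_n$ with per-entry error $\eps' \approx 2^{-n(1-\eps)}$, then average to get a single low-rank matrix within $\eps' 4^n = 2^{n(1+\eps)}$ entries) is different from the paper's, which is entirely deterministic, and in principle the averaging step is fine --- in fact global averaging already suffices for the stated theorem, so your per-row union-bound machinery is both unnecessary and, as described, broken (Markov gives only a constant-probability guarantee per row, which cannot be union-bounded over $2^n$ rows without a concentration argument you have not supplied). The genuine gap is that the entire weight of the proof now rests on constructing a probabilistic polynomial/rank representation of $(-1)^{\langle x,y\rangle}$ with error as small as $2^{-n(1-\eps)}$ and rank $2^{n(1-\Omega(\eps^2/\log(1/\eps)))}$, and your sketch does not deliver it. The bucket-hashing idea (``hash into $O(\log(1/\eps))$ buckets so few active coordinates collide'') is an OR/Valiant--Vazirani amplification trick; for parity it fails, since a typical input has $\approx n/4$ active coordinates $x_ky_k=1$ and the parity of the whole is the parity of all bucket sums, which forces essentially full degree when you try to extract bucket parities from integer bucket sums. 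The known probabilistic polynomials for parity have degree $\Theta(\sqrt{m\log(1/\delta)})$, which at $\delta = 2^{-n(1-\eps)}$ is $\Theta(n)$ with no control of the constant, and your ``balancing'' ($d=\Theta(\eps n/\log(1/\eps))$, error $2^{-n(1-\eps)}$) is asserted without a mechanism; taken literally it would give rank $2^{O(\eps)n}$ with only $2^{n(1+\eps)}$ errors, contradicting the known lower bound ${\cal R}_{H_n}(r)\ge \Omega(4^n/r)$. Note also that, by the paper's equivalence between probabilistic rank and rigidity for $H_n$ (Theorem~\ref{prob-rank-rigidity-IP2}), the low-error low-rank probabilistic representation you want is essentially \emph{equivalent} to the rigidity bound being proved, so assuming it without a new construction is close to circular.

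What is missing is precisely the paper's mechanism for making the error set tiny. The paper takes an exact interpolating polynomial (Lemma~\ref{IP2-polynomial}) of degree $(1/2-\eps)n$ in the products $x_iy_i$, which is correct whenever $\langle x,y\rangle\in[2\eps n,(1/2+\eps)n]$ and has sparsity $2^{n-\Omega(\eps^2 n)}$; by itself this errs on a $2^{-\Theta(\eps^2 n)}$ fraction of entries, i.e.\ about $2^{2n-\Theta(\eps^2 n)}$ entries, far more than $2^{n(1+\eps)}$. The crucial additional steps are (i) a cheap row/column correction (Lemma~\ref{row-col-correction} and Corollary~\ref{correct-WH}) fixing all rows and columns indexed by vectors of atypical Hamming weight at a rank cost of only $n\cdot 2^{n-\Omega(\eps^2 n)}$, and (ii) a counting argument (Lemma~\ref{small-IP}) showing that for typical-weight $x$ and $y$ the inner product is almost never below $2\eps n$, leaving only $2^{O(\eps\log(1/\eps))n}$ errors per row. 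Your proposal contains no analogue of either step, and without them a low-degree ``interpolate on the sum'' or subsampling construction cannot push the error anywhere near $2^{-n(1-\eps)}$ per entry.
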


In fact, we show a strong non-rigidity upper bound: by modifying at most $2^{\eps n}$ entries in each row of $H_n$, the rank of $H_n$ drops to $2^{n-f(\eps) n}$. That is, the matrix rigidity approach to arithmetic circuit lower bounds \emph{does not} apply to Hadamard matrices such as the Walsh-Hadamard transform. We would have required lower bounds of the form ${\cal R}_{H_n}(2^{n}/(\log n)) \geq 2^{n(1+\eps)}$ for some $\eps > 0$ to obtain circuit lower bounds; the upper bound of Theorem~\ref{nonrigid-IP2} shows this is impossible. The proof is in Section~\ref{non-rigidity-WH}.

We do not (yet) believe that the Walsh-Hadamard transform has $O(2^n)$-size $O(n)$-depth circuits; a more appropriate conclusion is that rigidity is too coarse to adequately capture the lower bound problem in this case. Having said that, Theorem~\ref{nonrigid-IP2} does imply new circuit constructions: it follows that there is a depth-two unbounded fan-in arithmetic circuit for the Walsh-Hadamard transform with $2^{n+O(\eps\log(1/\eps)) n} + 2^{2n-\Omega(\eps^2 n)}$ gates; setting $\eps > 0$ appropriately, we have a $4^{\delta n}$-size circuit for some $\delta < 1$.

We also show non-trivial rigidity upper bounds for $H_n$ in the regime that would be useful for communication complexity, where the rigidity is much closer to $4^n$. 

\begin{theorem}[Non-Rigidity of Hadamard Matrices, Part II]\label{high-error-IP2} For every integer $r \in [2^{2n}]$, one can modify at most $2^{2n}/r$ entries of $H_n$ and obtain a matrix of rank $(n/\ln(r))^{O(\sqrt{n \log(r)})}$.
\end{theorem}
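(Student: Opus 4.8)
The plan is to realize the required low‑rank matrix as $B := P(S)$, where $S \in \Q^{2^n \times 2^n}$ is the matrix of inner products, $S(x,y) := \langle x, y\rangle$ (read as an integer in $\{0,1,\dots,n\}$), where $P \in \Q[z]$ is a carefully chosen univariate polynomial of low degree, and where $P(S)$ means applying $P$ entrywise. The first observation is that $S$ itself has tiny rank: writing $S = X Y^{\top}$ with $X, Y \in \{0,1\}^{2^n \times n}$ the matrices whose $x$-th row is $x$ itself, we get $\mathrm{rank}(S) \le n$. The second observation is simply that $H_n(x,y) = (-1)^{\langle x,y\rangle}$, so $B$ agrees with $H_n$ on exactly the pairs $(x,y)$ for which $P(\langle x,y\rangle) = (-1)^{\langle x,y\rangle}$. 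So the whole problem reduces to choosing $P$ so that: (i) this identity holds for all but $2^{2n}/r$ pairs $(x,y)$, and (ii) $\deg P = O(\sqrt{n\log r})$.

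To see that (ii) yields the stated rank bound, I would use that applying a low‑degree polynomial entrywise to a low‑rank matrix keeps the rank low. For each $k$, the entrywise power $S^{\odot k}$, whose $(x,y)$ entry is $\langle x,y\rangle^k = \bigl(\sum_i x_i y_i\bigr)^k$, expands multinomially as a sum over exponent vectors $(j_1,\dots,j_n)$ with $\sum_i j_i = k$ of a scalar multiple of the rank‑one matrix $\bigl(\prod_i x_i^{j_i}\bigr)\bigl(\prod_i y_i^{j_i}\bigr)$, so $\mathrm{rank}(S^{\odot k}) \le \binom{n+k-1}{k}$. Hence if $\deg P = d$ then $\mathrm{rank}(P(S)) \le \sum_{k=0}^{d}\binom{n+k-1}{k} = \binom{n+d}{d} \le \bigl(e(n+d)/d\bigr)^d$, which for $d = \Theta(\sqrt{n\log r})$ is $(n/\ln r)^{O(\sqrt{n\log r})}$ in the regime where this is nontrivial.

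The heart of the argument is the choice of $P$ for (i), and the key point is that the entries of $S$ are extremely concentrated: for a uniformly random pair $(x,y)$, $\langle x,y\rangle = \sum_i x_i y_i$ is a sum of $n$ independent $\mathrm{Bernoulli}(1/4)$ variables, hence $\mathrm{Bin}(n,1/4)$, concentrated around $n/4$. So I would pick $t := \Theta(\sqrt{n\log r})$ (via Hoeffding's inequality) so that $\Pr_{(x,y)}\bigl[\,|\langle x,y\rangle - n/4| > t\,\bigr] \le 1/r$, let $I := \{\, s \in \Z : |s - n/4| \le t \,\}$ (an interval of at most $2t+1$ integers, contained in $\{0,\dots,n\}$ for $r$ in the nontrivial range), and let $P$ be the unique polynomial of degree $\le |I|-1 \le 2t$ with $P(s) = (-1)^s$ for all $s \in I$ (Lagrange interpolation on distinct integer nodes). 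Then $B(x,y) = P(\langle x,y\rangle) = (-1)^{\langle x,y\rangle} = H_n(x,y)$ whenever $\langle x,y\rangle \in I$, so $B$ and $H_n$ differ in at most $2^{2n}\cdot\Pr_{(x,y)}[\langle x,y\rangle \notin I] \le 2^{2n}/r$ entries, while $\deg P \le 2t = O(\sqrt{n\log r})$. Combining with the previous paragraph gives $\mathcal{R}_{H_n}\bigl((n/\ln r)^{O(\sqrt{n\log r})}\bigr) \le 2^{2n}/r$, as claimed. (Note this construction is completely deterministic — the concentration is used only to count bad entries, not to randomize the matrix.)

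I expect the remaining points to be bookkeeping rather than conceptual. The one genuine subtlety is the field: the argument as stated lives over $\Q$ (or any field of characteristic $0$ or $>n$), since it needs the integers $0,1,\dots,n$ to remain distinct so that $(-1)^s$ is a well‑defined function of the entry $S(x,y)$; in characteristic $2$ the statement is vacuous ($H_n$ is all‑ones). The second is the range of $r$: the construction is meaningful only while $t < n/4$ (so $I \subseteq \{0,\dots,n\}$) and $d < n$ (so $\binom{n+d}{d}$ beats the trivial bound $2^n$); for $r$ outside this window the asserted rank bound degenerates and holds trivially. Everything else — the multinomial rank estimate, the hockey‑stick identity $\sum_{k\le d}\binom{n+k-1}{k} = \binom{n+d}{d}$, and fixing the constant in $t$ via Hoeffding — is routine.
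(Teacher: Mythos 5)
Your proof is correct, but it takes a genuinely different route from the paper's. The paper gets Theorem~\ref{high-error-IP2} in two lines from machinery it already has: it writes $IP2(x,y)=PARITY(x_1y_1,\dots,x_ny_n)$, invokes the \emph{probabilistic} polynomial for symmetric functions of degree $O(\sqrt{n\log(1/\eps)})$ (Theorem~\ref{thm:sym-poly}, with $\eps=1/r$), substitutes the monomials $x_iy_i$, counts monomials after multilinearization, and then converts worst-case per-entry error into a rigidity bound by drawing a ``typical'' matrix (Corollary~\ref{cor:poly-to-rigid}). You instead build a single \emph{deterministic} matrix: Lagrange-interpolate a univariate $P$ of degree $O(\sqrt{n\log r})$ matching $(-1)^s$ on the interval of inner-product values around $n/4$, apply it entrywise to the rank-$n$ Gram matrix $S(x,y)=\langle x,y\rangle$, and bound the rank of $P(S)$ by $\binom{n+d}{d}$; the error count comes from concentration of $\mathrm{Bin}(n,1/4)$ under the uniform distribution. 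This works precisely because rigidity only charges errors against the uniform distribution, so no random shifting is needed; it is more self-contained (no black-box probabilistic polynomial, no randomness) and is closer in spirit to the paper's proof of Theorem~\ref{nonrigid-IP2} (interpolation on a band of inner-product values, via Lemma~\ref{IP2-polynomial-interpolation}). What the paper's route buys is (i) brevity given Theorem~\ref{thm:sym-poly}, and (ii) a statement over \emph{any} field and, directly, a low \emph{probabilistic-rank} bound (worst-case per-entry error), which the paper uses later; your rigidity bound recovers probabilistic rank for $H_n$ only through Theorem~\ref{prob-rank-rigidity-IP2}. One small caveat: plain Lagrange interpolation needs the nodes $0,\dots,n$ to be distinct in the field, so as written your argument covers characteristic $0$ or $>n$; to match the paper's any-field scope you would interpolate over $\Z$ (finite differences of $(-1)^s$ are integers, or use Lemma~\ref{IP2-polynomial-interpolation}) and reduce mod $p$. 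The degenerate behavior of the stated bound when $\log r = \Omega(n)$ is an artifact of the theorem statement that both your argument and the paper's share.
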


See Appendix~\ref{sec:high-error-WH} for the proof. While the product of rank and rigidity (a natural measure) of $H_n$ is only known to be at least $\Omega(4^n)$, Theorem~\ref{high-error-IP2} provides an upper bound of $4^n \cdot n^{O\left(\sqrt{n \log(r)}\right)}/r$, which is  not small enough to refute the conjectured rigidity lower bounds required for communication complexity applications. But as we show later, these upper bounds still have non-trivial consequences for the communication complexity of IP2. 

\paragraph{New Applications of Explicit Rigid Matrices.} Rigidity has been studied primarily for its connections to communication complexity and to lower bounds on arithmetic circuits computing linear transformations. We show new implications of constructing explicit rigid matrices for Boolean circuit complexity.

First, we show how explicit rigidity lower bounds would yield Boolean circuit lower bounds where only somewhat weak results are known:

\begin{theorem}\label{THR-circuits-rigidity} Let $K$ be an arbitrary field, and $\{M_n\}$ be a family of Boolean matrices such that (a) $M_n$ is $n\times n$, (b) there is a $\poly(\log n)$ time algorithm $A$ such that $A(n,i,j)$ prints $M_n(i,j)$, and (c) there is a $\delta > 0$ such that for infinitely many $n$, 
\vspace{-5mm}
\[{\cal R}_{M_n}\left(2^{(\log n)^{1-\delta}}\right) \geq \frac{n^2}{2^{(\log n)^{\delta/2}}} \text{ over $K$}.\] 
Then the language $\{(n,i,j)\mid M_n(i,j)=1\} \in \P$ does not have $\AC^0 \circ \THR \circ \AC^0 \circ \THR$ circuits of $n^{2-\eps}$-size and $o(\log n/\log \log n)$-depth, for all $\eps > 0$.
\end{theorem}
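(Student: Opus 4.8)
The plan is to prove the contrapositive. Suppose the language $\{(n,i,j)\mid M_n(i,j)=1\}$ is computed by a family of $\AC^0\circ\THR\circ\AC^0\circ\THR$ circuits $\{C_n\}$ of size $n^{2-\eps}$ and depth $d(n)=o(\log n/\log\log n)$. I will show that, for every $\delta>0$ and all large $n$, $\mathcal{R}_{M_n}(2^{(\log n)^{1-\delta}})<n^2/2^{(\log n)^{\delta/2}}$ over $K$, contradicting (c). The reduction goes through probabilistic rank: if $M_n$ has an $\eps'$-probabilistic matrix of rank $r$ over $K$, then averaging the number of disagreements over the distribution yields a single rank-$r$ matrix agreeing with $M_n$ on all but $\le\eps' n^2$ entries, so $\mathcal{R}_{M_n}(r)\le\eps' n^2$; hence it suffices to prove that $M_n$ has $\eps'$-probabilistic rank at most $2^{(\log n)^{1-\delta}}$ for $\eps'=2^{-(\log n)^{\delta/2}}$.

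To construct such a representation, process $C_n$ from the inputs upward, carrying for each wire a probabilistic matrix over the row/column split $(i,j)\in[n]\times[n]$, and after each layer sparsify by charging any low-density correction to the (very large) error budget $\eps' n^2=n^{2-o(1)}$. The bottom $\THR$ layer is the well-behaved one: a gate $\sgn(\alpha(i)+\beta(j)-\gamma)$, after sorting rows by $\alpha$-value and columns by $\beta$-value, is a GREATER-THAN matrix on at most $n$ points, which I would represent with probabilistic rank roughly $2^{O(\sqrt{\log n\cdot\log(1/\eta)})}$ by a multi-scale block-comparison scheme (compare $i$ and $j$ at geometrically refining granularities, testing equality of coarse blocks with the EQUALITY protocol of Theorem~\ref{lem:eq-rank}, and charging the intra-block entries to the error budget). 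Each $\AC^0$ block is handled by replacing every gate by a probabilistic polynomial over $K$ (or over $\mathbb{Z}$ and then specializing) and composing, so a size-$s$, depth-$d'$ block becomes a probabilistic polynomial of degree $D=(\log(s/\eta))^{O(d')}$ in its input wires; realizing this polynomial as a matrix while keeping its product/composition structure rather than expanding into monomials multiplies the rank to roughly $(s\cdot r_{\mathrm{in}})^{O(D)}$ when $r_{\mathrm{in}}$ bounds the rank of the input wires, and a union bound over the $\le s$ gates controls the error. With $s=n^{2-\eps}$ and $d=o(\log n/\log\log n)$ one has $D=2^{o(\log n)}$; keeping the resulting rank below $2^{(\log n)^{1-\delta}}$ with error $2^{-(\log n)^{\delta/2}}$ is the quantitative heart of the argument, and is where the sparsification steps and the non-monomial realization of the probabilistic polynomials are essential. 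The parameter $\eps$ affects only constants in these estimates, which is why the conclusion holds for all $\eps>0$.

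The genuine conceptual obstacle is the \emph{middle} $\THR$ layer, which cannot be treated in isolation: a gate $\sgn(\sum_k w_k v_k-\theta)$ reads $0/1$ functions $v_k(i,j)$ depending on both halves of the input, and the sign of a weighted sum of low-probabilistic-rank $0/1$ matrices need not have low probabilistic rank (set-disjointness-type matrices are the cautionary example). The fix is to fuse the middle $\THR$ layer with the $\AC^0$ block immediately below it: by the previous step each $v_k$ is a degree-$D$ probabilistic polynomial in the bottom $\THR$ outputs $g_1,\dots,g_L$, so $\sum_k w_k v_k$ is again such a polynomial $p$ (with bounded integer coefficients), and the middle gate equals $\sgn(p(g_1,\dots,g_L)-\theta)$. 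Because each $g_\ell$ is a partitioned threshold, this quantity depends on $i$ (resp.\ $j$) only through the vector of sorted positions of $(\alpha_1(i),\dots,\alpha_L(i))$ (resp.\ $(\beta_1(j),\dots,\beta_L(j))$); the crux lemma---a multidimensional relative of the staircase bound---is that a threshold of a degree-$D$ polynomial in $L$ partitioned thresholds has, after sparsification, a probabilistic matrix over $K$ of rank $2^{\poly(D,\log L,\log n,\log(1/\eta))}$. Granting this, the top $\AC^0$ layer is processed exactly like the first, the parameters compose, and the contradiction with (c) follows.
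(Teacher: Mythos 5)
Your reduction from probabilistic rank to rigidity, your treatment of the bottom $\THR$ layer (sort by the two linear forms and compare, as in Lemma~\ref{lem:ltf-rank}), and your use of probabilistic polynomials for the $\AC^0$ blocks (Theorem~\ref{thm:tarui-ac0}) all track the paper's proof of Theorem~\ref{thm:ATAT-rigidity}. But the step you yourself flag as the crux --- the middle $\THR$ layer --- is exactly where your argument has a genuine gap, and it is not one you can wave through. Your ``crux lemma,'' that a threshold of a degree-$D$ polynomial in $L$ partitioned thresholds has probabilistic rank $2^{\poly(D,\log L,\log n,\log(1/\eta))}$ over a field, already at $D=1$ asserts a low probabilistic-rank representation for an arbitrary $\THR$ of two-party $\THR$ functions; no such bound is known, and the paper itself only achieves probabilistic \emph{sign}-rank for $\THR \circ \THR$ (Theorem~\ref{thm:thrthr-sign-rank}), which is much weaker than rank over $K$. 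The paper avoids this obstacle entirely by a different decomposition: each middle $\THR$ gate (fan-in at most $s$) is rewritten via Maciel--Therien (Theorem~\ref{thm:thr-to-maj}) as an $\AC^0 \circ \MAJ$ circuit whose $\MAJ$ gates have fan-in $n^{(2-\delta)(1+\delta/2)} = n^{2-\delta^2/2}$, and those $\MAJ$ gates are then replaced by the degree-$O(\sqrt{k\log(1/\eps)})$ probabilistic polynomial for symmetric functions (Theorem~\ref{thm:sym-poly}); composing with Tarui and substituting the rank-one terms of the bottom gates' representations, the monomial count gives rank $2^{O(n^{1-\gamma})\log(1/\eps)}$. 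Note this is precisely where the sub-quadratic size bound is used --- it makes the $\MAJ$ fan-in sub-quadratic and hence the symmetric-polynomial degree sublinear, and the final exponent $\gamma < \delta^2/4$ depends on the size exponent --- so your remark that ``$\eps$ affects only constants'' is incorrect in spirit: the size hypothesis is doing essential quantitative work, not bookkeeping.

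Even granting your crux lemma, your own accounting does not reach the target. You set up the problem with circuits of size $s = n^{2-\eps}$ in the \emph{matrix dimension} $n$ (exponential in the $O(\log n)$ input bits), obtain degree $D = (\log(s/\eta))^{O(d)} = 2^{o(\log n)}$, and a rank of the shape $(s\cdot r_{\mathrm{in}})^{O(D)} = 2^{2^{o(\log n)}\cdot\Theta(\log n)}$, which dwarfs not only the target $2^{(\log n)^{1-\delta}}$ but even the trivial bound $n^2$; likewise the $2^{\poly\log n}$ rank promised by the crux lemma already exceeds $2^{(\log n)^{1-\delta}}$. The ``quantitative heart'' you defer is therefore not a calculation left to the reader but the point at which the approach, as parameterized, fails. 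The intended reading consistent with the paper's derivation is that the circuit size and depth are measured in the circuit's input length (so Theorem~\ref{thm:ATAT-rigidity} applies with polynomially many gates, each bottom gate of probabilistic rank $\poly(\log n)\cdot 2^{(\log n)^{\delta/2}}$, and composed degree $O((\log n)^{1-\beta})$); under that reading the paper's Maciel--Therien route closes the argument, while your route still lacks any proved replacement for the middle layer.
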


The theorem is obtained by giving non-trivial probabilistic rank bounds for such circuits, building on Lokam~\cite{Lokam2001}. Therefore, proving rigidity (or probabilistic rank) lower bounds for explicit 0/1 matrices over a field $K$ would imply nearly-quadratic size lower bounds for $\AC^0 \circ \THR \circ \AC^0 \circ \THR$ circuits of unbounded depth, a powerful class of Boolean circuits. (The best known lower bounds are that functions in the huge class $\E^{\NP}$ do not have such circuits~\cite{Alman-Chan-Williams16}.) See Appendix~\ref{sec:rigid-circuit-lbs} for these results.

\paragraph{Sign-Rank Rigidity.} The sign rank of a $-1/1$ matrix $M$ is the lowest rank of a matrix $N$ such that $\text{sign}(M[i,j])=\text{sign}(N[i,j])$, for all $(i,j)$. Lower bounds on the sign-rank of matrices were used 15 years ago to prove exponential lower bounds against $\THR \circ \MAJ$ and $\THR \circ \SYM$ circuits~\cite{forster02,ForsterKLMSS01}, i.e. restricted versions of depth-two threshold circuits. We extend the sign-rank connection to a circuit class for which strong lower bounds have long been open: explicit matrices with high rigidity under sign-rank would imply strong depth-two threshold circuit lower bounds. (Here, sign-rank rigidity is defined in the natural way, with ``rank'' replaced with ``sign-rank'' in the rigidity definition.)

A corollary of a theorem of Razborov and Sherstov~\cite{Razborov-Sherstov10} (see Appendix~\ref{sec:sign-rank}) is that for all $n$, $H_n$ has sign-rank $r$-rigidity at least $\Omega(4^n/r)$, just as in the case of normal rank rigidity. We show that even a somewhat minor improvement would already imply exponential-size lower bounds for depth-two linear threshold circuits with unbounded weights on both layers, a problem open for decades~\cite{Hajnal93,KaneW16}:

\begin{theorem} Suppose the sign rank $r$-rigidity of $H_n$ is $\Omega(4^n/r^{.999})$ for some rank bound $r \geq 2^{\alpha n}$ and some $\alpha > 0$. Then the Inner Product Modulo $2$ requires $2^{\Omega(n)}$-size $\THR \circ \THR$ circuits.
\end{theorem}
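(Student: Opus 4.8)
The plan is to prove the contrapositive. If the conclusion fails, a standard reformulation gives, for infinitely many $n$, a $\THR\circ\THR$ circuit for IP2 on $2n$ bits of size $s = 2^{o(n)}$. The chain I would run is: such a circuit forces the $\{-1,1\}$-valued matrix $H_n$ (which is $\THR\circ\THR$-equivalent to the communication matrix of IP2) to have small probabilistic sign-rank; a routine averaging step converts this into an upper bound on the sign-rank rigidity of $H_n$; and that contradicts the hypothesis. Concretely, I will show $\eps$-probabilistic sign-rank $O(\poly(s,n)/\eps)$ for every $\eps>0$, hence sign-rank $r$-rigidity $O(\poly(s,n)\cdot 4^n/r)$ under a suitable choice of $\eps$; since $\poly(s,n)=2^{o(n)}<r^{.001}$ once $r\ge 2^{\alpha n}$ and $n$ is large, this undercuts the assumed $\Omega(4^n/r^{.999})$.

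The technical core is a probabilistic sign-rank bound for $\THR\circ\THR$, which I would isolate as a lemma: any $f:\{0,1\}^n\times\{0,1\}^n\to\{-1,1\}$ computed by a $\THR\circ\THR$ circuit of size $s$ has $\eps$-probabilistic sign-rank $O(s^2(n\log n)^2/\eps)$ over $\R$. First I would normalize: by the classical fact (Muroga) that a threshold function on $N$ inputs is realizable with integer weights of magnitude $2^{O(N\log N)}$, assume every gate has integer weights, with each bottom gate $g_k$ computing $[\langle u_k,x\rangle+\langle v_k,y\rangle\ge\theta_k]$ and the two linear forms confined to an interval of length $m=2^{O(n\log n)}$. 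Second, each bottom gate is cheap: collapsing equal-valued rows and columns turns the matrix of $g_k$ into a row/column duplication of the $m\times m$ GREATER-THAN matrix, and writing $[a\ge b]=[a=b]+\sum_t[\,a,b\text{ agree on all bits above }t\,]\cdot[a_t=1]\cdot[b_t=0]$ over the $O(\log m)$ bit positions expresses $GT_m$ as a sum of $O(\log m)$ equality-test matrices, each masked by a rank-one $0/1$ pattern; since an equality test has $\eps'$-probabilistic rank $O(1/\eps')$ over any field (Theorem~\ref{lem:eq-rank}) and masking by a rank-one $0/1$ pattern cannot increase rank, a union bound over the $O(\log m)$ summands with $\eps'=\eps_0/O(\log m)$ gives $g_k$ probabilistic rank $O(\log^2 m/\eps_0)=O((n\log n)^2/\eps_0)$ at error $\eps_0$. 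Third, I would assemble: replacing each of the $\le s$ bottom gates by its probabilistic matrix $B_k$, the single matrix $N:=\sum_k a_k B_k-\phi J$ (with $a_k,\phi$ the top gate's weights and threshold, $\phi$ translated by $\tfrac12$ so its value is never $0$) has rank at most $\sum_k\mathrm{rank}(B_k)+1=O(s(n\log n)^2/\eps_0)$ --- crucially, the magnitudes of the top weights $a_k$ are irrelevant, since scaling rows cannot raise rank --- and a union bound over the $s$ gates shows $\mathrm{sign}(N[x,y])=f(x,y)$ for each fixed $(x,y)$ with probability $\ge 1-s\eps_0$. Taking $\eps_0=\eps/s$ proves the lemma; entries where $N$ happens to vanish lie among positions already counted as errors.

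Next I would record the routine averaging step: if $M\in\{-1,1\}^{N\times N}$ has $\eps$-probabilistic sign-rank $\rho$, then sampling $N'$ from the witnessing distribution, the expected number of positions with $\mathrm{sign}(N'[i,j])\ne M[i,j]$ is at most $\eps N^2$, so some fixed $N'_0$ of sign-rank $\le\rho$ disagrees in sign with $M$ on at most $\eps N^2$ entries, and overwriting those entries of $M$ by $\mathrm{sign}(N'_0)$ exhibits sign-rank $\rho$-rigidity $\le\eps N^2$. Applying this with $N=2^n$ and $M=H_n$: given the target rank $r\ge 2^{\alpha n}$ from the hypothesis, choose $\eps:=\Theta(s^2(n\log n)^2/r)$, which lies in $(0,1)$ because $s^2(n\log n)^2=2^{o(n)}\ll r$; the lemma then gives probabilistic sign-rank $\le r$, hence sign-rank $r$-rigidity of $H_n$ at most $\eps\cdot 4^n=\Theta(s^2(n\log n)^2)\cdot 4^n/r$, which is $<4^n/r^{.999}$ for large $n$ since $s^2(n\log n)^2=2^{o(n)}<r^{.001}\le r\cdot r^{-.999}$-worth of slack. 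This contradicts the assumed bound $\Omega(4^n/r^{.999})$, completing the contrapositive.

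I expect the main obstacle to be the bottom-gate estimate inside the lemma: forcing each halfspace down to probabilistic rank that is $\poly(n)$ and, crucially, only \emph{linear} in $1/\eps$. A naive GREATER-THAN simulation (recursing on which half of the bit-string carries the first difference) produces a $(1/\eps)^{\Theta(\log\log m)}$-type dependence, which would be fatal here: the hypothesis improves the known $\Omega(4^n/r)$ only to $\Omega(4^n/r^{.999})$, so we have merely a $2^{o(n)}$ multiplicative cushion, and the probabilistic sign-rank must therefore scale essentially as $1/\eps$, not as any larger power of it. The remaining work --- tracking the nested union bounds (over the $O(\log m)$ bit positions inside a gate and over the $s$ gates) and checking that the chosen $\eps$ is an admissible error parameter --- is bookkeeping. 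The one conceptually pleasant point worth flagging is that the arbitrarily large weights on the top $\THR$ gate, which are precisely what makes $\THR\circ\THR$ lower bounds so notoriously hard, cost nothing in this argument, since the rank of a matrix is invariant under scaling its rows.
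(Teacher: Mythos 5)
Your proposal is correct and follows essentially the same route as the paper: show that size-$s$ $\THR\circ\THR$ circuits have $\eps$-probabilistic sign-rank $\poly(s,n)/\eps$ by giving each bottom threshold gate an equality-based probabilistic rank bound linear in $1/\eps$, absorb the top-layer weights for free, draw a typical matrix to convert this into a sign-rank rigidity upper bound for $H_n$, and compare parameters against the hypothesis (this is exactly the paper's Theorems~\ref{thm:thrthr-sign-rank} and~\ref{sign-rank-rigid-IP2}). The only deviation is cosmetic: you invoke Muroga's integer-weight bound to make each bottom gate's linear forms $O(n\log n)$-bit and then decompose GREATER-THAN by the first differing bit, whereas the paper's Lemma~\ref{lem:ltf-rank} avoids any weight bound by replacing $a(x)$ and $b(y)$ with their indices in the sorted list of all attained values, reducing to $LEQ_{n+1}$ and yielding $O(n^2/\eps)$ in place of your $O((n\log n)^2/\eps)$ --- both bounds are more than sufficient for the theorem.
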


Theorem~\ref{sign-rank-rigid-IP2} gives a more general statement. Under the hood is an upper bound: matrices defined by small $\THR \circ \THR$ circuits have low \emph{probabilistic sign-rank}: for every such circuit of $s$ gates, viewing its truth table as a $2^{n/2} \times 2^{n/2}$ matrix, there is a distribution of $O(s^2 n^2/\eps)$-rank matrices which sign-represent the truth table in a worst-case probabilistic way with error $\eps$.

\paragraph{Rigidity, Communication, and Probabilistic Rank: An Equivalence.} Probabilistic rank arises very naturally in studying generalized models of communication complexity. For a Boolean function $f : \{0,1\}^n \times \{0,1\}^n \rightarrow \{0,1\}$, let $M_f$ be the $2^n \times 2^n$ \emph{truth table matrix of $f$} with $M_f[x,y] = f(x,y)$ for all $x,y$. The following correspondence between probabilistic rank and communication complexity is immediate (one could even take the proposition as a \emph{definition} of $\BP \cdot \MOD_m \P$ communication complexity). 

\begin{proposition}\label{rank-equals-cc} Let $m > 1$ be an integer, let $f : \{0,1\}^n \times \{0,1\}^n \rightarrow \{0,1\}$, and let $M_f$ be its truth table matrix. The $\BP \cdot \MOD_m \P$ communication complexity of $f$ with error $\eps$ equals the (base-2) logarithm of the $\eps$-probabilistic rank of $M_f$ over $\Z_m$ (within additive constants). 
\end{proposition}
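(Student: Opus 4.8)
The plan is to prove the two inequalities separately, using the standard dictionary between matrix rank and two-party protocols: over any ring a rank-one matrix is an outer product of two vectors (and the indicator of a combinatorial rectangle is a $0/1$ rank-one matrix), the logarithm of the number of nondeterministic certificates matches a number of extra communicated bits, and counting accepting computations modulo $m$ is exactly the job of a $\MOD_m\P$ protocol. Throughout, view the $0/1$ matrix $M_f$ over $\Z_m$ by identifying $0,1$ with the corresponding residues, let $N = 2^n$, and take the rank of a matrix over $\Z_m$ to be the least $r$ for which it is a sum of $r$ outer products (the notion implicitly used for probabilistic rank).

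For the direction ``probabilistic rank $\Rightarrow$ communication,'' suppose $\mathcal M$ is a probabilistic matrix over $\Z_m$ of rank $r$ computing $M_f$ with error $\eps$. Alice and Bob use public randomness to sample $B \sim \mathcal M$ together with a decomposition $B = \sum_{k=1}^{r} u_k v_k^{\top}$ over $\Z_m$, with representatives $u_k[x], v_k[y] \in \{0,\dots,m-1\}$. The $\MOD_m\P$ protocol uses a certificate $z = (k,a,b)$ with $k \in [r]$ and $a,b \in \{0,\dots,m-1\}$; on $(x,y)$ and certificate $z$, Alice sends the bit $[a < u_k[x]]$ and Bob accepts iff that bit and $[b < v_k[y]]$ are both $1$. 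For each fixed $k$ the number of accepting pairs $(a,b)$ is exactly $u_k[x]\cdot v_k[y]$ as integers, so the total number of accepting certificates is $\sum_k u_k[x] v_k[y] \equiv B[x,y] \pmod m$. The certificate has length $\log r + O(\log m)$ and the communication is $O(1)$, so the cost is $\log r + O(1)$ (for fixed $m$); and whenever $B[x,y] = f(x,y)$, which occurs with probability $\ge 1-\eps$ over the public coins, the accepting count is $\equiv f(x,y)\pmod m$. This is a $\BP\cdot\MOD_m\P$ protocol of cost $\log r + O(1)$ and error $\eps$.

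For the converse, fix the public coins of a $\BP\cdot\MOD_m\P$ protocol of cost $c$ and error $\eps$ to get a distribution over deterministic $\MOD_m\P$ protocols, each with a certificate set of size $2^\ell$ and an underlying deterministic protocol of communication $c_0$, where $\ell + c_0 \le c + O(1)$. For each certificate $z$, the underlying protocol splits $[N]^2$ into at most $2^{c_0}$ combinatorial rectangles labeled accept/reject, so its acceptance-indicator matrix $A_z$ is a sum of at most $2^{c_0}$ rank-one $0/1$ matrices and thus has rank at most $2^{c_0}$ over $\Z_m$. Then $B := \sum_z A_z$, reduced entrywise modulo $m$, satisfies $B[x,y] \equiv (\#\text{accepting certificates on }(x,y)) \pmod m$, equals $f(x,y)$ with probability $\ge 1-\eps$, and has rank at most $\sum_z \mathrm{rank}_{\Z_m}(A_z) \le 2^{\ell}\cdot 2^{c_0} \le 2^{c+O(1)}$. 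Hence the distribution of these $B$ is a probabilistic matrix over $\Z_m$ of rank $2^{c+O(1)}$ computing $M_f$ with error $\eps$, so the $\eps$-probabilistic rank is at most $2^{c+O(1)}$. Together the two bounds give the claimed equality within additive constants.

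The only delicate point is the modular bookkeeping: a $\MOD_m\P$ protocol's only nontrivial power is counting accepting leaves mod $m$, so the ring product $u_k[x]\cdot v_k[y]$ must be realized inside this model, which is why the certificate carries the auxiliary pair $(a,b)$ and multiplication is implemented as $\#\{(a,b): a<u_k[x],\,b<v_k[y]\}$; and one must pin down the convention that a $\MOD_m\P$ protocol computes $f$ by having its accepting count be $\equiv f(x,y)\pmod m$ (rather than merely nonzero), under which the correspondence is exact --- which is consistent with reading the proposition itself as the definition of $\BP\cdot\MOD_m\P$ communication. The remaining steps are the routine rank-one/rectangle and certificate/communicated-bit translations.
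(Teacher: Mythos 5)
Your proposal is correct, and the first direction (sampling a factorization $B=\sum_k u_k v_k^{\top}$ with public coins and realizing the $\Z_m$ inner product inside the counting model) is essentially the paper's protocol, just with the mod-$m$ multiplication made explicit via the auxiliary certificate coordinates $(a,b)$ and the threshold tests $[a<u_k[x]]$, $[b<v_k[y]]$. The converse direction, however, is genuinely different from the paper's. The paper's argument has each player, after the public coins are fixed, sum over the nondeterministic bits \emph{together with an alleged transcript} and verify acceptance unilaterally; this yields a $2^n\times 2^{2c}$ factorization, hence only $\log(\text{rank})\leq 2c$ in the stringent model (this is exactly why the appendix states the two-sided bound $C_{\eps}(f)\leq\log_2(\text{rank}+1)\leq 2C_{\eps}(f)$ and then remarks that exact coincidence holds if one adopts the relaxed ``certificate $=$ transcript'' definition). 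You instead fix the certificate $z$ and use the leaf/rectangle structure of the resulting deterministic protocol, bounding $\mathrm{rank}_{\Z_m}(A_z)\leq 2^{c_0}$ and summing, which gives rank at most $2^{\ell+c_0}\leq 2^{c+O(1)}$ --- i.e., the additive-constant equivalence claimed in the proposition, directly in the stringent cost measure (nondeterministic plus communicated bits), rather than via a change of model. What each buys: the paper's transcript-guessing trick makes the equivalence essentially definitional in the relaxed model and needs no structure of deterministic protocols, while your rectangle-based bound is sharper and matches the statement as quoted. Two fine points you handle appropriately but should keep explicit if this were written up: the convention that a $\MOD_m\P$ protocol's accepting count must be $\equiv f(x,y)\pmod m$ (the paper is informal here, and invites taking the proposition as the definition), and the fact that in the protocol direction the certificate $(k,a,b)$ must be visible to both players (public nondeterminism), which is also implicit in the paper's claimed $\lceil\log_2(r+1)\rceil$-bit inner-product step.
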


Similarly, $\AM$ (Arthur-Merlin communication complexity) is equivalent to probabilistic Boolean rank. 

It's easy to see that if a matrix has $\eps$-probabilistic rank $r$, then its rank-$r$ rigidity is at most $\eps 2^{2n}$; thus rigidity lower bounds imply communication lower bounds. But conversely, it seems easier to prove lower bounds on probabilistic rank compared to rigidity: with probabilistic rank, we need to rule out a ``distribution'' of erroneous matrix entries which are required to ``spread the errors'' around; with rigidity, we have to rule out \emph{any} adversarial choice of bad entries. 

We show (in Appendix~\ref{sec:general-self-reduction}) that for every randomly self-reducible function $f : \{0,1\}^{2n} \rightarrow R$ in which the self-reduction makes $k$ non-adaptive queries, low rigidity implies low probabilistic rank: the $\eps$-probabilistic rank of its corresponding matrix is at most $(kr)^k$ if its rank-$r$ rigidity is at most $\eps \cdot 4^n$. Thus there is a strong relationship between $\eps$-probabilistic rank (and communication complexity, by Proposition~\ref{rank-equals-cc}) and the rank for which the rigidity is an $\eps$-fraction of the matrix. For the Walsh-Hadamard transform, we prove (in Section~\ref{sec:equivalence}) that the probabilistic rank of $H_n$ and the rigidity of $H_n$ are equivalent concepts over fields:

\begin{theorem}\label{prob-rank-rigidity-IP2} For every field $K$ and for every $n$, ${\cal R}_{H_n}(r) \leq \eps \cdot 4^n$ over $K$ if and only if $H_n$ has $\eps$-probabilistic rank $r$ over $K$.
\end{theorem}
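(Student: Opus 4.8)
The plan is to prove both directions of the equivalence separately. The $(\Leftarrow)$ direction is a generic averaging argument valid for any matrix, and the $(\Rightarrow)$ direction is where the special structure of $H_n$ enters: $\mathrm{IP}_2$ is randomly self-reducible via a single non-adaptive linear query, and crucially the ``correction'' factors in that self-reduction are rank-$1$, so rank is preserved \emph{exactly} rather than blown up as in the general $(kr)^k$ bound.

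For $(\Leftarrow)$: suppose $H_n$ has $\eps$-probabilistic rank $r$ over $K$, witnessed by a distribution ${\cal M}$ supported on matrices of rank $\le r$. For each entry $(x,y)$ we have $\Pr_{B\sim{\cal M}}[B[x,y]\ne H_n[x,y]]\le\eps$, so by linearity of expectation $\mathbb{E}_{B\sim{\cal M}}\big[\#\{(x,y): B[x,y]\ne H_n[x,y]\}\big]\le\eps\cdot 4^n$. Hence some $B$ in the support has Hamming distance at most $\eps\cdot 4^n$ from $H_n$ and rank $\le r$, so ${\cal R}_{H_n}(r)\le\eps\cdot 4^n$.

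For $(\Rightarrow)$: suppose ${\cal R}_{H_n}(r)\le\eps\cdot 4^n$ over $K$, so there is a matrix $B$ of rank $\le r$ with $\#\{(x,y): B[x,y]\ne H_n[x,y]\}\le\eps\cdot 4^n$. I would use the identity $\langle x\oplus a,\,y\oplus b\rangle \equiv \langle x,y\rangle+\langle x,b\rangle+\langle a,y\rangle+\langle a,b\rangle \pmod 2$ (checked coordinatewise), which yields $H_n[x\oplus a,y\oplus b]=H_n[x,y]\cdot H_n[x,b]\cdot H_n[a,y]\cdot H_n[a,b]$ over any field. For each $a,b\in\{0,1\}^n$ define $B_{a,b}[x,y]:=H_n[x,b]\cdot H_n[a,y]\cdot H_n[a,b]\cdot B[x\oplus a,\,y\oplus b]$, and let ${\cal M}$ draw $a,b$ uniformly at random and output $B_{a,b}$. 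Two checks: (i) \emph{Rank:} the matrix $(x,y)\mapsto B[x\oplus a,y\oplus b]$ is a row-and-column permutation of $B$, hence has rank $\le r$; multiplying by $H_n[x,b]H_n[a,b]$ (which for fixed $a,b$ depends only on $x$) rescales rows by units, and multiplying by $H_n[a,y]$ rescales columns by units, so $\mathrm{rank}(B_{a,b})\le r$. (ii) \emph{Error:} if $B[x\oplus a,y\oplus b]=H_n[x\oplus a,y\oplus b]$ then substituting the identity gives $B_{a,b}[x,y]=H_n[x,y]\cdot (H_n[x,b])^2(H_n[a,y])^2(H_n[a,b])^2 = H_n[x,y]$, since $1^2=(-1)^2=1$ in any field. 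Because $(a,b)\mapsto(x\oplus a,y\oplus b)$ is a bijection for fixed $(x,y)$, $\Pr_{a,b}[B_{a,b}[x,y]\ne H_n[x,y]]\le \Pr_{a,b}[B[x\oplus a,y\oplus b]\ne H_n[x\oplus a,y\oplus b]] = \#\{(x',y'): B[x',y']\ne H_n[x',y']\}/4^n \le \eps$. Thus ${\cal M}$ is a rank-$r$ probabilistic matrix computing $H_n$ with error $\eps$.

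The only real ``insight'' is recognizing that the self-reduction has a single query with rank-$1$ (row/column rescaling) corrections, so this is genuinely tighter than the general randomly-self-reducible case and gives an equivalence; there is no hard calculation. The one thing worth remarking on for field-generality is that the argument uses only that $-1$ is a unit with $(-1)^2=1$, which is automatic; in characteristic $2$ the matrix $H_n$ degenerates (it becomes all-ones, rank $1$), but the argument still goes through verbatim.
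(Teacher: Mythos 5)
Your proof is correct and follows essentially the same route as the paper: the easy direction by averaging, and the converse by XOR-shifting the indices by uniform $(a,b)$ and correcting with the factors $H_n[x,b]\,H_n[a,y]\,H_n[a,b]$, which preserve rank because they rescale rows and columns by units. The paper packages the same construction as a sum of outer products of modified vectors $a_k^{(x,y)}\otimes b_k^{(y,x)}$ rather than as a permutation plus diagonal rescaling, but the two formulations are identical.
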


The matrices $H_n$ represent the communication matrices of the widely-studied Inner Product Modulo $2$ (IP2) function. By Proposition~\ref{rank-equals-cc}, the $\BP \cdot \MOD_p \P$ communication complexity of IP2 and the rigidity of $H_n$ over $\F_p$ are really equivalent concepts.  Applying this theorem, our earlier rigidity upper bounds also imply some modest but interesting improvements on communication complexity protocols.  From the rigidity upper bound of Theorem~\ref{high-error-IP2}, we obtain a communication protocol for IP2 with 
$O(\sqrt{n \log(1/\eps)}\log(\frac{n}{\log(1/\eps)}))$ bits and error $\eps$ in the $\BP \cdot \MOD_p \P$ communication model, for every prime $p$. (Aaronson and Wigderson gave an ${\sf MA}$ protocol for IP with $O(\sqrt{n} \log(n/\eps))$ communication complexity and error $\eps$~\cite{AW09}; ours is more efficient for $\eps \ll 1/2^{\sqrt{\log n}}$.) Applying Theorem~\ref{nonrigid-IP2} yields an IP2 protocol with $n(1-\Omega(\eps^2/\log(1/\eps)))$ communication and only $1/2^{n-\eps n}$ error. We are skeptical that our rigidity upper bounds for $H_n$ are tight; we hope these results will aid future work (to prove rigidity upper bounds, one only has to think about communication protocols for IP2).

\subsection{Related Work}

Besides the many references already mentioned earlier,  there are a few other related works we know of.

\noindent{\bf Toggle Rank.} By Yao's minimax principle~\cite{Yao83}, $\BP \cdot \MOD_m \P$ communication complexity (randomized communication with ``counting modulo $m$'' power) equals worst-case distributional $\MOD_m \P$ communication complexity. In matrix terms, putting an arbitrary distribution ${\cal P}$ on the pairs $\{0,1\}^n \times \{0,1\}^n$, the worst-case $\eps$-distributional complexity of $M$ is the lowest rank (over $\Z_m$) of a $2^n \times 2^n$ matrix $N$ with error $||M-N||\leq \eps$ over ${\cal P}$. Wunderlich \cite{Wunderlich12} calls this rank notion the \emph{approximate toggle rank}. 
Proposition~\ref{rank-equals-cc} shows that probabilistic rank and approximate toggle rank are very closely related, but they are \emph{\bf \emph{not}} the same as the usual rigidity concept, which corresponds to the uniform distribution on pairs. For structured functions like IP2, we prove (Theorem~\ref{prob-rank-rigidity-IP2}) that the uniform distribution \emph{is} the worst case.

\noindent{\bf Sign-rank Rigidity and AC0-MOD2 circuits.} A tantalizing open problem that has gained popularity in recent years~\cite{ServedioV12,AkaviaBGKR14,CheraghchiGJWX16} is whether IP2 has polynomial-size $\AC^0 \circ \MOD_2$ circuits: i.e., circuits of $O(1)$-depth over AND/OR/NOT, but with a layer of gates computing PARITY at the bottom nearest the inputs. Servedio and Viola~\cite{ServedioV12} propose an interesting attack: in our terminology, they note that $\AC^0 \circ \MOD_2$ circuits of size $s$ have $n^{O(\log^{d-1} s)\log(1/\eps)}$ sign-rank rigidity at most $\eps 2^{2n}$ over $\R$, and prove a lower bound on the correlation of signs of sparse polynomials (taken as a proxy for low-rank sign-matrices) with IP2. That is, they prove a weak sign-rank rigidity lower bound (note Razborov and Sherstov prove an analogous lower bound for sign-rank rigidity of IP2; see Appendix~\ref{sec:sign-rank}). Our results have two consequences for this sort of approach. First, Theorem~\ref{thm:thrthr-sign-rank} shows that a sign-rank rigidity lower bound would prove something much stronger: a lower bound for \emph{depth-two threshold circuits} computing IP2, a longstanding open problem. Second, our non-trivial upper bounds on the rank rigidity of the IP2 matrix (which is $H_n$) suggest that IP2 may have much lower sign-rank rigidity than expected. 

\noindent{\bf Sign-Rank Rigidity and Margin Complexity.} 
Linial and Shraibman~\cite{LinialS09} 
prove (in our terminology) that the sign-rank rigidity of an $n \times n$ matrix $A$ is at most $\eps n^2$ for target rank $O(mc(A)^2 \log(1/\eps))$, where $mc(A)$ is the ``margin complexity'' of $A$. Thus the margin complexity of a matrix can be used to upper bound sign-rank rigidity. They also study rigidity notion based on $mc$, conjecture that high $mc$ implies high margin-complexity rigidity, and show that high margin-complexity also implies communication complexity lower bounds (for similar parameters as the standard rank-rigidity setting).

\noindent{\bf Approximate Rank.} A different ``approximating'' rank notion has been studied in~\cite{buhrman2001communication,klivans2010lower,alon2013approximate}, with connections to quantum computing and approximation algorithms. The $\eps$-approximate rank of $M \in \R^{n\times n}$ is the lowest rank of a matrix $A$ such that $||M-A||_{\infty} \leq \eps$. That is, we can obtain one matrix from the other by perturbing each entry by at most $\eps > 0$. The appropriate analogy here seems to be that probabilistic polynomials are to probabilistic rank, as $\ell_{\infty}$-approximate polynomials are to approximate rank: both are natural generalizations of polynomial representations to matrix representations, with different properties. 

\section{Preliminaries}

We assume basic familiarity with complexity theory. For circuit complexity, we use ${\cal C} \circ {\cal D}$ to denote depth-two circuits where the output gate is of type ${\cal C}$ and the ``hidden'' layer is of type ${\cal D}$, e.g., $\THR \circ \THR$ denotes ``depth-two linear threshold circuits'',  $\THR \circ \MOD_2$ denotes ``linear threshold function of parities'', etc. For variables $x_1,\ldots,x_n$, we use $\vec{x}$ to denote $(x_1,\ldots,x_n)$, and for $\vec{x} \in \{0,1\}^n$, we use $|\vec{x}| = \sum_i x_i$ to denote its Hamming weight. We use the Iverson bracket $[P] : \{0,1\}^n \rightarrow \{0,1\}$ to denote the Boolean function which outputs $1$ if and only if property $P$ is true of the $n$ inputs. Below we describe some basic properties relating probabilistic polynomials, probabilistic rank, and rigidity.

\begin{definition}
Let $R$ be any ring, and $f : \{0,1\}^{2n} \to R$ be any function on $2n$ Boolean variables. The \emph{truth table matrix} $M_f$ of $f$ is the $2^n \times 2^n$ matrix given by
$$M_f(v_i, v_j) = f(v_i, v_j),$$
where $v_1,\ldots,v_{2^n} \in \{0,1\}^n$ is the enumeration of all $n$-bit vectors in lexicographical order.
\end{definition}

Given the above definition, it is natural to define the probabilistic rank of a function:

\begin{definition}
The $\eps$-probabilistic rank of a function $f : \{0,1\}^{2n} \to R$ is the $\eps$-probabilistic rank of its truth table matrix $M_f$. The rank of $f$ and the rigidity of $f$ are defined similarly.
\end{definition}

\begin{definition}[Razborov~\cite{Razborov}, Smolensky~\cite{Smolensky87}]
Let $R$ be a ring, and let $f : \{0,1\}^n \to R$. A \emph{probabilistic polynomial} for $f$ with error $\eps$ and degree $d$ is a distribution ${\cal P}$ on polynomials $p : \{0,1\}^n \to R$ of degree at most $d$ such that for \emph{every} $x \in \{0,1\}^n$, $\Pr_{p \sim {\cal P}}[p(x) = f(x)] \geq 1 - \eps$. We may similarly refer to a probabilistic polynomial with $m$ monomials.
\end{definition}

The following simple mapping from sparse polynomials to low-rank matrices is very useful:

\begin{lemma} \label{lemma:monomial-to-rank}
Let $R$ be any ring, and $f : \{0,1\}^{2n} \to R$. Let $p : R^{2n} \to R$ be a polynomial with $m$ monomials such that $p(x,y) = f(x,y)$ for any $x,y \in \{0,1\}^n$. Then the rank of $f$ is at most $m$.
\end{lemma}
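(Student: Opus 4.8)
The plan is to show that each monomial of $p$ contributes a matrix of rank at most $1$ to the truth table matrix $M_f$, and then to invoke subadditivity of rank.

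First I would split the $2n$ variables into the ``row block'' $x = (x_1,\dots,x_n)$, corresponding to the first argument (the row index $v_i$), and the ``column block'' $y = (y_1,\dots,y_n)$, corresponding to the second argument (the column index $v_j$). Write $p = \sum_{k=1}^m \mu_k$, where each $\mu_k$ is a single monomial: a coefficient from $R$ times a product of variables (in some fixed order, in case $R$ is noncommutative). Fix one monomial $\mu_k$, and let $S_k \subseteq [n]$ and $T_k \subseteq [n]$ be the sets of indices of the $x$-variables and $y$-variables occurring in it. When $\mu_k$ is evaluated at a Boolean point $(x,y) \in \{0,1\}^n \times \{0,1\}^n$, every variable takes a value in $\{0,1\}$; since $0$ and $1$ lie in the center of any ring, the value of $\mu_k$ does not depend on the order in which the variables (and the coefficient) are written, and it equals some fixed $c_k \in R$ whenever $x_i = 1$ for all $i \in S_k$ and $y_j = 1$ for all $j \in T_k$, and equals $0$ otherwise. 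Hence, as a function of the Boolean point,
\[
\mu_k(x,y) \;=\; \Bigl(c_k \prod_{i \in S_k} x_i\Bigr)\Bigl(\prod_{j \in T_k} y_j\Bigr) \;=\; a_k(x)\, b_k(y),
\]
where $a_k(x) := c_k \prod_{i \in S_k} x_i \in R$ depends only on $x$ and $b_k(y) := \prod_{j \in T_k} y_j \in \{0,1\}$ depends only on $y$.

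Next I would read this as a statement about matrices: the $2^n \times 2^n$ matrix $M_k$ with $M_k(v_i,v_j) = \mu_k(v_i,v_j)$ factors as an outer product $M_k = \vec{a_k}\,\vec{b_k}^{\,T}$, where $\vec{a_k} \in R^{2^n}$ has $v_i$-th entry $a_k(v_i)$ and $\vec{b_k} \in R^{2^n}$ has $v_j$-th entry $b_k(v_j)$; thus $M_k$ has rank at most $1$. Since $p$ agrees with $f$ on all of $\{0,1\}^{2n}$, the truth table matrix satisfies $M_f = \sum_{k=1}^m M_k$. Finally, rank is subadditive --- with the natural definition of rank over a ring, namely the least $r$ for which the matrix is a sum of $r$ outer-product (rank-one) matrices, this is immediate --- so the rank of $f$ is $\mathrm{rank}(M_f) \le \sum_{k=1}^m \mathrm{rank}(M_k) \le m$.

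There is essentially no serious obstacle here; the only point that warrants a moment of care is the reduction to the commutative case, which is handled by observing that within a monomial the coefficients are constants and the variables only ever take the central values $0$ and $1$, so any interleaving of $x$- and $y$-variables is harmless.
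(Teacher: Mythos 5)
Your proof is correct and follows essentially the same route as the paper's: split each monomial into an $x$-part and a $y$-part, realize each as a rank-one outer product, and sum them to recover $M_f$, giving rank at most $m$. The extra care about noncommutativity and the definition of rank over a ring is fine but does not change the argument.
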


\begin{proof}
Let $a_1, \ldots, a_m, b_1, \ldots, b_m : R^n \to R$ be monomials such that $p(x,y) = \sum_{i=1}^m a_i(x) \cdot b_i(y)$ is the monomial expansion of $p$. For $1 \leq i \leq m$, define vectors $\vec{\alpha_i}, \vec{\beta_i} \in R^{2^n}$ by $\vec{\alpha_i}[x] = a_i(x)$ and $\vec{\beta_i}[y] = b_i(y)$ for each $x,y \in \{0,1\}^n$. Then 
$M_f = \sum_{i=1}^m \vec{\alpha_i} \otimes \vec{\beta_i}$, where $\otimes$ denotes the outer product of vectors. Thus $\text{rank}(M_f) \leq m$.
\end{proof}

As a corollary, the probabilistic rank of $f$ is at most the sparsity of a probabilistic polynomial for $f$:

\begin{corollary}
Let $R$ be any ring, and $f : \{0,1\}^{2n} \to R$. If $f$ has a probabilistic polynomial ${\cal P}$ with at most $m$ monomials and error $\eps$, then the $\eps$-probabilistic rank of $f$ is at most $m$.
\end{corollary}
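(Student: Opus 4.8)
The plan is to combine Lemma~\ref{lemma:monomial-to-rank} with the definitions of a probabilistic polynomial and of $\eps$-probabilistic rank; the statement is essentially immediate once these are unpacked. First I would record what the hypothesis gives us: a probabilistic polynomial ${\cal P}$ for $f$ with error $\eps$ and at most $m$ monomials is a distribution over polynomials $p : R^{2n} \to R$, each having at most $m$ monomials, such that for every pair of inputs $(v_i, v_j) \in \{0,1\}^n \times \{0,1\}^n$ we have $\Pr_{p \sim {\cal P}}[p(v_i, v_j) = f(v_i, v_j)] \geq 1 - \eps$.

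Next I would exhibit the probabilistic matrix witnessing the bound. Define ${\cal M}$ to be the distribution obtained by sampling $p \sim {\cal P}$ and outputting the $2^n \times 2^n$ matrix $M_p$ with $M_p(v_i, v_j) := p(v_i, v_j)$. The key step is to bound the rank of every matrix in the support of ${\cal M}$: viewing the $2n$ variables of $p$ as split into an ``$x$-block'' $x = (x_1,\ldots,x_n)$ and a ``$y$-block'' $y = (x_{n+1},\ldots,x_{2n})$, each monomial of $p$ factors as a monomial in $x$ times a monomial in $y$, so $p(x,y) = \sum_{i=1}^{m'} a_i(x) b_i(y)$ with $m' \le m$. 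Lemma~\ref{lemma:monomial-to-rank} then gives $\mathrm{rank}(M_p) \le m' \le m$, so ${\cal M}$ has rank at most $m$.

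Finally I would check that ${\cal M}$ computes $M_f$ with error $\eps$: for each entry $(v_i, v_j)$,
\[
\Pr_{M_p \sim {\cal M}}[M_p(v_i, v_j) = M_f(v_i, v_j)] = \Pr_{p \sim {\cal P}}[p(v_i, v_j) = f(v_i, v_j)] \geq 1 - \eps,
\]
which is exactly the condition in the definition of computing a matrix with error $\eps$. Hence ${\cal M}$ is a rank-$m$ probabilistic matrix computing $M_f$ with error $\eps$, so the $\eps$-probabilistic rank of $M_f$ — equivalently of $f$ — is at most $m$. There is no real obstacle here; the only point requiring a little care is the bookkeeping that lets us treat the $2n$-variate polynomial $p$ as a bivariate polynomial in the two $n$-variable blocks, so that Lemma~\ref{lemma:monomial-to-rank} applies verbatim, and this is routine.
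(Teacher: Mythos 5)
Your proposal is correct and follows the same route as the paper: sample $p \sim {\cal P}$, take its truth table matrix $M_p$, bound $\mathrm{rank}(M_p) \le m$ via Lemma~\ref{lemma:monomial-to-rank}, and note that the per-entry error condition transfers directly from ${\cal P}$ to the matrix distribution. The extra remark about factoring monomials into an $x$-block and $y$-block is just re-deriving what the lemma already handles, so nothing further is needed.
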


\begin{proof}
Let $p$ be a polynomial in the support of the distribution ${\cal P}$. Since $p$ has at most $m$ monomials, by Lemma \ref{lemma:monomial-to-rank} the truth table matrix $M_p$ of $p$ (restricted to the domain $\{0,1\}^{2n}$) has rank at most $m$. The distribution of $M_p$ over $p$ drawn from ${\cal P}$ is therefore an $\eps$-probabilistic rank-$m$ distribution for $M_f$, since $M_f(x,y) = M_p(x,y)$ if and only if $f(x,y) = p(x,y)$.
\end{proof}

By drawing a `typical' matrix from the probabilistic rank distribution, we can always obtain a matrix rigidity upper bound from a sparse probabilistic polynomial.

\begin{corollary} \label{cor:poly-to-rigid}
Let $R$ be any ring, and $f : \{0,1\}^{2n} \to R$ be any function on $2n$ Boolean variables. If $f$ has a probabilistic polynomial $P$ with at most $m$ monomials and error $\eps$, then one can modify $\eps 2^{2n}$ entries of the truth table matrix $M_f$ and obtain a matrix of rank at most $m$.
\end{corollary}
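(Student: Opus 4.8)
The plan is to derive the rigidity bound from the previous corollary via a one-line averaging argument. By the corollary immediately above, the hypothesis gives that the $\eps$-probabilistic rank of $f$ is at most $m$; unpacking the definition, there is a distribution $\mathcal{M}$ supported on $2^n \times 2^n$ matrices, each of rank at most $m$, such that $\Pr_{B \sim \mathcal{M}}[B[x,y] = M_f[x,y]] \geq 1 - \eps$ for every entry $(x,y) \in \{0,1\}^n \times \{0,1\}^n$.

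Next I would bound the expected number of disagreements. Writing $D(B)$ for the number of entries on which $B$ and $M_f$ differ, linearity of expectation gives $\mathbb{E}_{B \sim \mathcal{M}}[D(B)] = \sum_{(x,y)} \Pr_{B \sim \mathcal{M}}[B[x,y] \neq M_f[x,y]] \leq \eps \cdot 2^{2n}$, since the sum ranges over the $2^{2n}$ entries and each term is at most $\eps$. Hence there is at least one matrix $B^\star$ in the support of $\mathcal{M}$ with $D(B^\star) \leq \eps \cdot 2^{2n}$.

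To conclude, observe that $B^\star$ has rank at most $m$ by construction and agrees with $M_f$ on all but at most $\eps \cdot 2^{2n}$ entries; so modifying exactly those entries of $M_f$ to their corresponding values in $B^\star$ produces the rank-$\leq m$ matrix $B^\star$, which is precisely the claimed bound. There is essentially no obstacle here: the only point worth stating carefully is that the per-entry error guarantee of probabilistic rank converts, by linearity of expectation, into a global bound on the fraction of corrupted entries — exactly the quantity that rigidity measures against the uniform distribution on entries. (One may also assume without loss of generality that $\mathcal{M}$ has finite support, so the "some $B^\star$ in the support" step needs no topological care.)
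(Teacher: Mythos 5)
Your argument is correct and is exactly the paper's intended route: the preceding corollary converts the sparse probabilistic polynomial into an $\eps$-probabilistic rank-$m$ distribution, and then averaging (``drawing a typical matrix'') yields a single rank-$\leq m$ matrix differing from $M_f$ in at most $\eps 2^{2n}$ entries. Nothing further is needed.
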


\section{Non-Rigidity of Walsh-Hadamard}
\label{non-rigidity-WH}

Now we prove that the Walsh-Hadamard matrices are not rigid enough for Valiant's program: 

\begin{reminder}{Theorem~\ref{nonrigid-IP2}} For every field $K$, for every sufficiently small $\eps > 0$, and for all $n$, we have ${\cal R}_{H_n}\left(2^{n-f(\eps) n}\right) \leq 2^{n(1+\eps)}$ over $K$, for a function $f$ where $f(\eps)=\Theta(\eps^2/\log(1/\eps))$.
\end{reminder}

For a vector $v \in \{0,1\}^n$, let $|v|$ be the number of ones in $v$. Let $H : [0,1] \to [0,1]$ denote the binary entropy function $$H(p) = -p \log_2 p - (1-p) \log_2 (1-p).$$
We will need some estimates of binomial coefficients. For $\eps \in (0,1/2)$:
\begin{equation}\label{entropy}
\binom{n}{\eps n} \leq n \cdot 2^{H(\eps) n}, \text{ and}
\end{equation}
\begin{equation}\label{entropy-half}
2^{n-O(\eps^2 n)} \leq \binom{n}{(1/2-\eps) n} \leq 2^{n-\Omega(\eps^2 n)}.
\end{equation}
Equation \eqref{entropy} is standard; equation \eqref{entropy-half} follows from standard tail bounds on the binomial distribution. In particular, the probability that a uniform random bit string has at most $(1/2-\eps)n$ ones is at most $2^{-c_1\eps^2 n}$ and at least $2^{-c_2 \eps^2 n}$, for universal constants $c_1, c_2 > 0$.

Our first (simple) lemma uses a polynomial to compute a large fraction of $H_n$'s entries with a low-rank matrix. However, this fraction won't be high enough; we'll need another idea to ``correct'' many entries later.

\begin{lemma}\label{IP2-polynomial}
For every field $K$, and for every $\eps \in (0,1/2)$, there is a multilinear polynomial $p(x_1, \ldots, x_n, y_1, \ldots, y_n)$ over $K$ with at most $2^{n-\Omega(\eps^2 n)}$ monomials, such that for all $\vec{x},\vec{y} \in \{0,1\}^n$ with $\langle \vec{x},\vec{y}\rangle \in [2 \eps n, (1/2 + \eps) n]$, \[p(\vec{x},\vec{y}) = (-1)^{\langle \vec{x},\vec{y}\rangle}.\]
\end{lemma}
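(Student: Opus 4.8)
The plan is to write $(-1)^{\langle \vec x,\vec y\rangle}$ as a low-degree polynomial in the single integer variable $s := \langle\vec x,\vec y\rangle = \sum_i x_i y_i$, valid on the range $s\in[2\eps n,(1/2+\eps)n]$, and then substitute $s = \sum_i x_iy_i$ to get a polynomial in $\vec x,\vec y$ whose multilinear monomial count I can control. Concretely, $(-1)^s = 1-2(s \bmod 2)$, and $s\bmod 2$ restricted to $s\in\{0,1,\dots,n\}$ is computed exactly by a univariate polynomial of degree $n$; but I do not need a polynomial valid on all of $\{0,\dots,n\}$, only on the window $[2\eps n,(1/2+\eps)n]$, which has length only about $(1/2-\eps)n$. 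So I can use an interpolating polynomial $q$ of degree roughly $(1/2-\eps)n$ that agrees with $t\mapsto(-1)^t$ on the integers in that window. (Alternatively, factor out: any polynomial agreeing with $(-1)^t$ on a set of $k$ consecutive integers can be taken of degree $k-1$.) Setting $p(\vec x,\vec y) := q\!\left(\sum_i x_iy_i\right)$ and then multilinearizing (replacing $x_i^j$ by $x_i$, $y_i^j$ by $y_i$, which is legitimate on Boolean inputs) gives a multilinear polynomial that is correct on the required range.

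The main work is bounding the number of monomials. Expanding $\left(\sum_i x_iy_i\right)^d$ and collecting multilinear terms, every surviving monomial has the form $\prod_{i\in S}x_iy_i$ for some $S\subseteq[n]$ with $|S|\le d$; thus the multilinear version of $q(\sum_i x_iy_i)$ has at most $\sum_{j\le d}\binom{n}{j}$ monomials. With $d \le (1/2+\eps)n$ this sum is $\sum_{j\le (1/2+\eps)n}\binom nj$. I then want to argue this is $2^{n-\Omega(\eps^2 n)}$. That is where I have to be slightly careful: the naive bound $2^n$ is not good enough. Here is the fix — I should choose the window more cleverly. Note $(-1)^t$ on a window $W$ of consecutive integers is interpolated by a polynomial of degree $|W|-1$, and I only need validity for $s\in[2\eps n,(1/2+\eps)n]$; but I have the freedom to first \emph{shift}: write $s = \langle\vec x,\vec y\rangle$, and observe that on the Boolean cube $\langle \vec x,\vec y\rangle\le |\vec x|$, so I can also use the complementary representation. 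Better: take the interpolating polynomial through the $\lceil(1/2-\eps)n\rceil+1$ integer points of the window; it has degree $\approx(1/2-\eps)n$, and then the monomial count is $\sum_{j\le(1/2-\eps)n}\binom nj \le 2^{n-\Omega(\eps^2n)}$ directly by the tail bound \eqref{entropy-half} (the lower tail of the binomial past $(1/2-\eps)n$ is exponentially small). So the key step is: the window $[2\eps n,(1/2+\eps)n]$ has length $(1/2-\eps)n$, hence an interpolating polynomial of degree $(1/2-\eps)n$ suffices, hence the multilinearized substitution has at most $\sum_{j\le(1/2-\eps)n}\binom nj = 2^{n-\Omega(\eps^2n)}$ monomials.

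I expect the only real obstacle is making sure the degree of the univariate interpolant is genuinely bounded by the \emph{length} of the window rather than by $n$ (i.e. exploiting that we only care about $(1/2-\eps)n$ consecutive values, not all $n+1$ of them), and then invoking \eqref{entropy-half} to convert the partial-sum-of-binomials bound into the clean $2^{n-\Omega(\eps^2n)}$ form. Everything else — the identity $(-1)^s = $ (interpolant), the substitution $s\mapsto\sum x_iy_i$, and the multilinearization on Boolean inputs — is routine. I would present it in that order: (1) reduce to a univariate interpolation problem on a short window; (2) bound the degree by the window length; (3) substitute and multilinearize; (4) bound the monomial count via \eqref{entropy-half}.
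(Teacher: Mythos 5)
Your overall route is the same as the paper's: restrict attention to the window of inner-product values $[2\eps n,(1/2+\eps)n]$, whose length $(1/2-\eps)n$ bounds the degree of an interpolant, substitute $z_i=x_iy_i$, multilinearize, and bound the monomial count by the lower binomial tail via \eqref{entropy-half}. That part of your argument is correct and matches the paper, which invokes Lemma~\ref{IP2-polynomial-interpolation} with $k=2\eps n-1$, $r=(1/2-\eps)n+1$, $c_i=(-1)^{k+i}$ and then bounds the monomials by $n\binom{n}{(1/2-\eps)n+1}$.

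The genuine gap is the clause ``for every field $K$.'' Your univariate Lagrange interpolant $q$ through the $(1/2-\eps)n+1$ window points has \emph{rational} coefficients with denominators involving factorials up to about $n/2$, so over a field of small positive characteristic $p$ (say $p=2$) it cannot simply be reduced into $K$; moreover, over such a field the expression ``$q(\sum_i x_iy_i)$'' does not compute a function of the \emph{integer} inner product, since the sum is evaluated mod $p$. What saves the construction is an integrality argument that you never supply: anchoring Newton's forward-difference expansion at the left endpoint of the window, the differences of $(-1)^w$ are $(\pm 2)^j\in\Z$, the basis polynomials $\binom{w-w_0}{j}$ are integer combinations of $\binom{w}{i}$, and $e_j(z_1,\ldots,z_n)=\binom{|z|}{j}$ on Boolean inputs, so one gets a symmetric multilinear polynomial $\sum_j a_j e_j(x_1y_1,\ldots,x_ny_n)$ of degree $(1/2-\eps)n$ with \emph{integer} coefficients, which can then be viewed over any $K$ by reducing mod the characteristic. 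This is exactly what the paper's cited Lemma~\ref{IP2-polynomial-interpolation} packages (``a multivariate polynomial of degree $r-1$ with integer coefficients''), and the paper's proof explicitly uses the integrality to pass to an arbitrary prime subfield. As written, your proof is complete only over fields containing $\Q$ (or of characteristic exceeding the degree); to cover all fields you must either prove the integer-coefficient interpolation statement or cite it.
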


The proof uses properties of multivariate polynomial interpolation over the integers. To be concrete, we will apply the following lemma from one of our previous papers:

\begin{lemma}[\cite{JoshRyan}, Lemma 3.1]\label{IP2-polynomial-interpolation}
For any integers $n,r,k$ with $n \geq r+k$ and any integers $c_1, \ldots, c_r$, there is a multivariate polynomial $p : \{0,1\}^n \to \Z$ of degree $r-1$ with integer coefficients such that $p(z) = c_i$ for all $\vec{z} \in \{0,1\}^n$ with Hamming weight $|\vec{z}| = k+i$.
\end{lemma}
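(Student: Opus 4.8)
The plan is to reduce the statement to a one-variable interpolation problem in the Hamming weight $s := z_1 + \cdots + z_n$: if $q \in \Q[s]$ has degree at most $r-1$ and satisfies $q(k+i) = c_i$ for $i = 1,\dots,r$, then $p(z_1,\dots,z_n) := q(z_1 + \cdots + z_n)$ already computes the right values on $\{0,1\}^n$ and has total degree at most $r-1$. The only real issue is to arrange that $p$ can be taken to have \emph{integer} coefficients and not merely integer values, and this is what the argument must work for.

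Here are the steps I would carry out. (1) Lagrange interpolation through the $r$ points $(k+1,c_1),\dots,(k+r,c_r)$ produces the unique $q \in \Q[s]$ of degree $\le r-1$ with $q(k+i)=c_i$; the hypothesis $n \ge r+k$ is used only to guarantee that each weight level $k+1,\dots,k+r$ is attained by some $\vec z \in \{0,1\}^n$, so that the claimed conclusion is non-vacuous. (2) Show $q$ is integer-valued at \emph{every} integer, not just at the $r$ interpolation points. Set $h(t) := q(k+1+t)$, so $h(0),\dots,h(r-1)$ are the integers $c_1,\dots,c_r$, and let $\Delta$ be the forward-difference operator $(\Delta h)(t) := h(t+1)-h(t)$. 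The Newton forward-difference expansion $h(t) = \sum_{j=0}^{r-1} (\Delta^j h)(0)\binom{t}{j}$ holds as a polynomial identity (both sides have degree $\le r-1$ and agree at $t = 0,1,\dots,r-1$), and each coefficient $(\Delta^j h)(0) = \sum_{\ell=0}^{j}(-1)^{j-\ell}\binom{j}{\ell} c_{\ell+1}$ is an integer; since every $\binom{t}{j}$ is integer-valued on $\Z$, so is $h$, hence so is $q(s) = h(s-k-1)$. (3) Define $p(z_1,\dots,z_n) := q(z_1 + \cdots + z_n)$, a degree-$\le r-1$ polynomial over $\Q$; for $\vec z \in \{0,1\}^n$ with $|\vec z| = k+i$ we get $p(\vec z) = q(k+i) = c_i$. (4) Replace $p$ by its multilinearization $\widetilde p$, the unique multilinear polynomial agreeing with $p$ on $\{0,1\}^n$, obtained by repeatedly applying the cube identities $z_i^2 = z_i$; multilinearization does not raise the degree, so $\deg \widetilde p \le r-1$, and by step (2) $\widetilde p$ is integer-valued on all of $\{0,1\}^n$. (5) Conclude $\widetilde p$ has integer coefficients: writing $\widetilde p = \sum_{S \subseteq [n]} a_S \prod_{i \in S} z_i$ and evaluating at the $0/1$ indicator vector $\chi_T$ of each $T \subseteq [n]$, Möbius inversion over the subset lattice gives $a_S = \sum_{T \subseteq S} (-1)^{|S|-|T|}\,\widetilde p(\chi_T) \in \Z$. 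Then $\widetilde p$ is the polynomial we want.

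The main obstacle — essentially the only content beyond bookkeeping — is the passage from integer \emph{values} to integer \emph{coefficients}, which rests on the two classical facts invoked in steps (2) and (5): a polynomial of degree $\le r-1$ that is integer-valued at $r$ consecutive integers is integer-valued at every integer (which is precisely why the binomial-coefficient basis, i.e.\ finite differences, is the right tool here), and a multilinear polynomial that is integer-valued on the Boolean cube automatically has integer coefficients. Given these, the degree bound and the prescribed evaluations are immediate. As an alternative to steps (4)--(5) one can keep everything polynomial-explicit: expand $\binom{s-k-1}{j}$ in the basis $\binom{s}{0},\dots,\binom{s}{r-1}$ using the Vandermonde convolution (the change-of-basis coefficients are integers), and then substitute $\binom{z_1+\cdots+z_n}{\ell} = e_\ell(z_1,\dots,z_n)$, valid on $\{0,1\}^n$, where $e_\ell$ is the elementary symmetric polynomial of degree $\ell$, which has $0/1$ coefficients; this produces an explicit integer-coefficient formula for $p$.
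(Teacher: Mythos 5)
Your proof is correct and complete. One caveat on the comparison: this paper does not actually prove the lemma — it is imported verbatim as Lemma 3.1 of the cited earlier work \cite{JoshRyan}, and the text here offers only a one-line dimension-counting intuition ("the space of degree-$(r-1)$ polynomials is large enough to satisfy the constraints"). Your argument is a genuine, self-contained proof rather than an existence heuristic: you reduce to univariate interpolation in the Hamming weight, establish integrality of values via the Newton forward-difference expansion in the binomial basis, and then convert integer values on the cube into integer coefficients by multilinearization plus M\"obius inversion over the subset lattice. That last conversion (and the finite-difference step) is exactly the content that the dimension-counting intuition glosses over, since a rational-coefficient interpolant is immediate but integrality of coefficients is not. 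Your alternative ending — expanding $\binom{s-k-1}{j}$ in the basis $\binom{s}{\ell}$ with integer change-of-basis coefficients via Vandermonde convolution and substituting $\binom{z_1+\cdots+z_n}{\ell}=e_\ell(z)$ on $\{0,1\}^n$ — is essentially the explicit symmetric-polynomial construction in the spirit of the cited source, so either route is faithful to how the lemma is actually established. Your observation that $n\ge r+k$ is used only to make the constraint set non-vacuous (the construction itself never needs it) is accurate and worth keeping.
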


Intuitively, Lemma~\ref{IP2-polynomial-interpolation} is true because the dimension of the space of degree-$(r-1)$ polynomials in $n$ variables is large enough that we can always construct a polynomial with the desired constraints. 

\begin{proof}[Proof of Lemma \ref{IP2-polynomial}] By Lemma \ref{IP2-polynomial-interpolation} with $k=2 \eps n - 1$, $r = (1/2 - \eps)n + 1$, and $c_i = (-1)^{k+i}$, one can construct a multivariate polynomial $q : \{0,1\}^n \to \Z$ with integer coefficients, of degree $(1/2 - \eps)n$, such that for all $\vec{z} \in \{0,1\}^n$ with $|\vec{z}| \in [2 \eps n, (1/2 + \eps) n]$, we have $q(\vec{z}) = (-1)^{|\vec{z}|}$. Since the prime subfield of every field $K$ is either $\Q$ or $\F_{m}$ for some prime $m$, and $q$ has integer coefficients, $q$ can be viewed as a polynomial over $K$ (by taking the coefficients modulo $m$ if appropriate). Then our desired polynomial is \[p(x_1, \ldots, x_n, y_1, \ldots, y_n) = q \left( x_1 y_1, x_2 y_2, \ldots, x_n y_n\right).\]
We can upper-bound the number of monomials in $p$ as follows. First, since we only care about the value of $p$ on $\{0,1\}^{2n}$, we can make $p$ multilinear by applying the equation $v^2 = v$ to all variables. Second, observe that for all $i=1,\ldots,n$, $x_i$ and $y_i$ appear in exactly the same monomials. So if we introduce a variable $z_i$ in place of each $x_i \cdot y_i$ in $p$, the number of monomials in our new $n$-variate polynomial $p'$ equals the number of monomials in $p$. 

Since $p'$ is multilinear and degree $(1/2 - \eps)n + 1$, the number of monomials is at most $n \binom{n}{(1/2 - \eps)n + 1}$, which by \eqref{entropy-half} is at most $2^{n-c_2\eps^2 n}$ for some constant $c_2 > 0$.
\end{proof}

Our second lemma says: fixing a vector $x$ with about $1/2$ ones, there is a strong upper bound the number of vectors which has about $1/2$ ones but has small (integer) inner product with $x$; we'll use this to upper bound the number of erroneous entries at the very end.

\begin{lemma} \label{small-IP} For every vector $x \in \{0,1\}^n$ with $|x| \in [(1/2 -a)n, (1/2+a)n]$, and any parameters $a, b \in (0,1/5)$, the probability that a uniformly random vector $y$ from $\{0,1\}^n$ satisfies both
\begin{itemize}
\item $|y| \in [(1/2 - a)n, (1/2 + a)n]$, and
\item $\sum_{k=1}^n x_k y_k \leq bn$
\end{itemize}
is at most $(2an+1)(bn+1) \cdot 2^{(f(a,b) - 1)n}$, where $f$ is a function such that $f(a,b) \to 0$ as $a,b \to 0$.
\end{lemma}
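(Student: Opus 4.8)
The plan is to directly count the number $N$ of vectors $y\in\{0,1\}^n$ meeting both conditions and show $N\le 2^{2H(2a+b)n}$ when $a,b$ are small; dividing by $2^n$ then gives the claimed estimate with $f(a,b)=2H(2a+b)$ (the polynomial factors $(2an+1)(bn+1)$ are slack we will not need). The crucial point is that the two conditions \emph{together} confine $y$ to a small Hamming ball around a fixed string. Write $S=\{\,k:x_k=1\,\}$, so $|S|=|x|=:m\in[(1/2-a)n,(1/2+a)n]$, and $\bar S=[n]\setminus S$, so $|\bar S|=n-m\le(1/2+a)n$. The condition $\sum_{k=1}^n x_ky_k\le bn$ says $y$ has at most $bn$ ones among the coordinates in $S$. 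Given that, the condition $|y|\ge(1/2-a)n$ forces $y$ to have at least $(1/2-a)n-bn=(1/2-a-b)n$ ones among the coordinates in $\bar S$, i.e.\ at most $(n-m)-(1/2-a-b)n\le(2a+b)n$ zeros there. So every valid $y$ differs from the indicator vector $\mathbf 1_{\bar S}$ in at most $bn$ positions of $S$ and at most $(2a+b)n$ positions of $\bar S$. (Only the lower bound on $|y|$ is used; discarding $|y|\le(1/2+a)n$ only enlarges the set being counted.)

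With this observation I would bound the count via an injection. The map sending a valid $y$ to the pair $(A,B)$, where $A\subseteq S$ is the set of coordinates on which $y=1$ and $B\subseteq\bar S$ is the set on which $y=0$, is injective, and $|A|\le bn$, $|B|\le(2a+b)n$; hence
\[
N\;\le\;\Bigl(\sum_{i=0}^{\lfloor bn\rfloor}\binom{n}{i}\Bigr)\Bigl(\sum_{i=0}^{\lfloor(2a+b)n\rfloor}\binom{n}{i}\Bigr)\;\le\;2^{2H(2a+b)n},
\]
where the last inequality uses the standard Hamming-ball bound $\sum_{i\le\lambda n}\binom{n}{i}\le 2^{H(\lambda)n}$, valid for $\lambda\le 1/2$ and hence applicable here once $2a+b\le 1/2$ (which holds for all sufficiently small $a,b$, e.g.\ $a,b\le 1/6$). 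Dividing by $2^n$, the probability in the statement is at most $2^{(2H(2a+b)-1)n}\le(2an+1)(bn+1)\,2^{(f(a,b)-1)n}$ with $f(a,b):=2H(2a+b)$; since $H(0)=0$ and $H$ is continuous, $f(a,b)\to 0$ as $a,b\to 0$, and in particular $f(a,b)<1$ for $a,b$ small.

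The one loose end is the sliver of parameters with $2a+b>1/2$ (possible since $a$ may be as large as $1/5$), where the Hamming-ball estimate on the second factor degrades to the trivial bound $2^n$. There any assignment $f(a,b)\ge 1$ is acceptable, because then $(2an+1)(bn+1)\,2^{(f(a,b)-1)n}\ge 1$ already bounds the probability; so we simply put $f(a,b):=1$ on that region, and the resulting piecewise $f$ still tends to $0$ as $a,b\to 0$, which is all the lemma asks. (More uniformly, one can replace $H$ everywhere by $\widehat H(p):=H(\min\{p,1/2\})$, which is continuous, non-decreasing, zero at $0$, and satisfies $\sum_{i\le\lambda n}\binom{n}{i}\le 2^{\widehat H(\lambda)n}$ for all $\lambda\in[0,1]$.) I expect this bookkeeping of the degenerate range to be the only fiddly part: the content is entirely in the single structural observation that $\sum_k x_ky_k\le bn$ together with $|y|\ge(1/2-a)n$ traps $y$ in a radius-$O((a+b)n)$ Hamming ball around $\mathbf 1_{\bar S}$, and the rest is a one-line count.
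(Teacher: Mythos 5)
Your proof is correct, but it takes a genuinely different route from the paper's. The paper conditions on the exact weight $|y|=k$ and the exact inner product $s\le bn$, writes the count as $\binom{|x|}{s}\binom{n-|x|}{k-s}$, sums over $k$ and $s$ (which is where the $(2an+1)(bn+1)$ prefactor comes from), then uses monotonicity of binomial coefficients on either side of the midpoint plus the single-coefficient entropy bound to land on an explicit $f(a,b)=(1/2+a)\bigl(4b\log(1/2b)+(8a+4b)\log(1/(4a+2b))\bigr)$. You instead make one structural observation -- that ``few ones inside $S$'' together with ``$|y|\ge(1/2-a)n$'' traps $y$ within Hamming distance roughly $(2a+2b)n$ of $\mathbf 1_{\bar S}$ -- encode $y$ injectively by its ones in $S$ and zeros in $\bar S$, and apply the Hamming-ball bound $\sum_{i\le\lambda n}\binom{n}{i}\le 2^{H(\lambda)n}$ once to each factor. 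This is shorter, needs no binomial monotonicity bookkeeping, and in fact gives a cleaner bound with no polynomial prefactor at all; your handling of the corner $2a+b>1/2$ (capping $f$ at $1$, or using $H(\min\{p,1/2\})$) is legitimate since the lemma only asserts existence of some $f$ with $f(a,b)\to 0$. One point worth noting for downstream use: the proof of Theorem~\ref{nonrigid-IP2} actually uses the quantitative rate $f(\eps,\eps)=O(\eps\log(1/\eps))$, not just $f\to 0$; your $f(a,b)=2H(2a+b)$ satisfies this too, since $H(3\eps)=O(\eps\log(1/\eps))$, so the substitution is harmless.
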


The usual toolbox of small-deviation estimates does not seem to yield the lemma; we give a direct proof.  

\begin{proof}
For all $x$ of the above form, every $k \in [(1/2 - a)n, (1/2 + a)n]$, and every $s \leq bn$, we count the number of $y \in \{0,1\}^n$ with $|y| = k$ and $\sum_{k=1}^n x_k y_k = s$. A vector $y$ satisfies these properties if and only if:
\begin{itemize}
\item there are exactly $s$ integers $i$ with $y[i]=1$ and $x[i]=1$, and
\item there are exactly $k-s$ integers $i$ with $y[i]=1$ and $x[i]=0$. 
\end{itemize}
So there are $\binom{|x|}{s} \binom{n-|x|}{k-s}$ such choices of $y$. The total probability is hence
\begin{align*}
\frac{1}{2^n} ~\sum_{k= (1/2 - a)n}^{(1/2 + a)n}~ \sum_{s=0}^{bn} \binom{|x|}{s} \binom{n-|x|}{k-s}
& = \frac{1}{2^n}~ \sum_{k= (1/2 - a)n}^{(1/2 + a)n}~ \sum_{s=0}^{bn}~   \binom{|x|}{s} \binom{n-|x|}{k-s} \\
& \leq \frac{1}{2^n}~ \sum_{k= (1/2 - a)n}^{(1/2 + a)n} ~\sum_{s=0}^{bn}~  \binom{(1/2+a)n}{s} \binom{(1/2+a)n}{k-s} \\
&\leq \frac{1}{2^n}~ \sum_{k= (1/2 - a)n}^{(1/2 + a)n}~(bn+1)\cdot \binom{(1/2+a)n}{bn} \binom{(1/2+a)n}{(1/2-a-b)n}. ~\text{(*)}
\end{align*}

Recall that if $k_1 < k_2 < n/3$ then $\binom{n}{k_1} < \binom{n}{k_2}$, and if $k_3 > k_4 > n/2$, then $\binom{n}{k_3} < \binom{n}{k_4}$. Step (*) therefore follows since $s \leq bn < \frac12 (1/2+a)n$ and $k-s \geq(1/2 - a - b)n > \frac12 (1/2 + a)n$ whenever $0<a,b<1/5$. Let $g(n) = (2an+1) \cdot (bn+1)$. Simplifying further, the above expression is at most 
\begin{align*}
&\frac{g(n)}{2^n} \binom{(1/2+a)n}{bn} \binom{(1/2+a)n}{(2a+b)n} \\
&\leq \frac{g(n)}{2^n} \cdot 2^{(1/2+a)n \cdot H(b/(1/2+a))} 2^{(1/2+a)n \cdot H((2a+b)/(1/2+a))} ~~~ \text{(by \eqref{entropy})} \\
&\leq \frac{g(n)}{2^n} \cdot 2^{(1/2+a)n \cdot H(2b)} 2^{(1/2+a)n \cdot H(4a+2b)} \\
&\leq \frac{g(n)}{4^n} \cdot 2^{(1/2+a)n \cdot 2 \cdot 2b \cdot \log(1/2b)} 2^{(1/2+a)n \cdot 2 \cdot (4a+2b) \cdot \log(1/(4a+2b))} ~~\text{($H(\eps) \leq 2\eps\log_2(1/\eps)$ for $\eps < 1/2$)}\\
&\leq \frac{g(n)}{2^n} \cdot 2^{f(a,b)n},
\end{align*}
where $f(a,b) = (1/2+a)(4b \log(1/2b) + (8a+4b)\log(1/(4a+2b)))$.
\end{proof}

Our third lemma is a simple linear-algebraic observation: given a low-rank matrix $M$ that computes another matrix $N$ on all but a small number of rows and columns, $N$ must also have relatively low rank. 

\begin{lemma}\label{row-col-correction} Let $M'$ be a matrix of rank $r$ which is equal to $M$ except in at most $k$ columns and $\ell$ rows. Then the rank of $M$ is at most $r+k+\ell$.
\end{lemma}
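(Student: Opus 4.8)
The plan is to massage $M$ into $M'$ by a sequence of at most $k+\ell$ rank-one modifications, each of which can increase the rank by at most one. Concretely, $M$ and $M'$ differ only in some set $S$ of at most $k$ columns and some set $T$ of at most $\ell$ rows; so $M$ and $M'$ agree on every entry $(i,j)$ with $i \notin T$ and $j \notin S$. I would first replace the $k$ bad columns: for each column index $j \in S$, let $u_j = M[\cdot,j] - M'[\cdot,j]$ be the difference vector in that column, and add the rank-one matrix $u_j \otimes e_j^{\top}$ (where $e_j$ is the $j$-th standard basis vector) to $M'$. After doing this for all $j \in S$, the resulting matrix $M''$ agrees with $M$ on every column in $S$, and was already equal to $M'$ (hence to $M$, off $T$) on the remaining columns — so $M''$ equals $M$ on all rows outside $T$. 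Then do the same trick on the $\ell$ rows of $T$: for each $i \in T$, add $e_i \otimes v_i^{\top}$ where $v_i = M[i,\cdot] - M''[i,\cdot]$. After these $\ell$ further rank-one corrections we obtain exactly $M$.

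Since $\mathrm{rank}(A+B) \le \mathrm{rank}(A)+\mathrm{rank}(B)$ for any two matrices over a field (or over the relevant ring, using that each correction matrix has rank at most one), and we started from $M'$ of rank $r$ and added at most $k+\ell$ rank-one matrices, we conclude $\mathrm{rank}(M) \le r + k + \ell$, as claimed.

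I don't anticipate a genuine obstacle here — the statement is a standard perturbation bound and the only thing to be a little careful about is the bookkeeping that correcting the columns first does not re-introduce errors outside $T$ (it doesn't, because the column corrections only touch columns in $S$, and on those columns we are setting things exactly equal to $M$). An equally clean alternative, which I might present instead for brevity, is the dimension-counting argument: $M'$ and $M$ agree on the submatrix indexed by $([\text{rows}]\setminus T) \times ([\text{cols}]\setminus S)$, so the row space of that submatrix has dimension at most $r$; the full matrix $M$ is obtained by appending $\le \ell$ extra rows and $\le k$ extra columns to (an extension of) this submatrix, each of which raises the rank by at most one. Either way the bound $r+k+\ell$ falls out immediately.
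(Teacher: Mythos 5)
Your proof is correct and follows essentially the same route as the paper: correct the at most $k$ bad columns and at most $\ell$ bad rows of $M'$ one at a time by adding rank-one matrices supported on a single column or row, then invoke subadditivity of rank. Your version just spells out the bookkeeping (that the column corrections leave only discrepancies in rows of $T$) a bit more explicitly than the paper does, which is fine.
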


\begin{proof}
We will start with $M'$, and add at most $k + \ell$ rank-one matrices to $M'$ so that it equals $M$.

Consider a column $c$ on which $M$ does not equal $M'$. We can add to $M'$ a correction matrix $C_c$ given by
$$
C_c(i,j)=
\begin{cases}
M(i,v) - M'(i,v) & \text{if } j=v, \\
0 & \text{otherwise.}
\end{cases}
$$
Then, $M' + C_c$ equals $M$ on column $c$, and is unchanged in any other column. Moreover, since $C_c$ is only nonzero on a single column, it has rank one. So all we have to do is add the correction matrix $C_c$ for each column $c$ on which $M$ and $M'$ differ. The rows of $M'$ can be corrected analogously.
\end{proof}

\begin{corollary}\label{correct-WH} Let $T$ be any $2^n \times 2^n$ matrix. Let $a \in (0,1/2)$, and let $M$ be a $2^n \times 2^n$  matrix of rank $r$, indexed by $n$-bit vectors. There is a $2^n \times 2^n$ matrix $M'$ of rank at most $r + 4 \cdot n\cdot 2^{n-\Omega(a^2 n)}$ such that $M'(v_i,v_j) = T(v_i,v_j)$ on all $v_i, v_j \in \{0,1\}^n$ where at least one of the following holds:
\begin{itemize}
\item $|v_i| \notin [(1/2-a)n, (1/2+a)n]$,
\item $|v_j| \notin [(1/2-a)n, (1/2+a)n]$, or,
\item $M(v_i,v_j) = T(v_i,v_j)$.
\end{itemize}
\end{corollary}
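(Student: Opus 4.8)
The plan is to read this off from Lemma~\ref{row-col-correction} together with the observation that very few $n$-bit vectors have Hamming weight far from $n/2$. Set $B = \{v \in \{0,1\}^n : |v| \notin [(1/2-a)n,(1/2+a)n]\}$, the set of ``extreme weight'' index vectors. First I would bound $|B|$: by the binomial tail estimates recalled after~\eqref{entropy-half}, the number of $n$-bit strings of weight at most $(1/2-a)n$ is at most $2^{n-\Omega(a^2 n)}$, and symmetrically for weight at least $(1/2+a)n$, so $|B| \le 2\cdot 2^{n-\Omega(a^2 n)}$.

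Next, define the candidate matrix $M'$ by overwriting, in $M$, every row indexed by a vector of $B$ and every column indexed by a vector of $B$ with the corresponding entries of $T$; that is, $M'(v_i,v_j) = T(v_i,v_j)$ whenever $v_i \in B$ or $v_j \in B$, and $M'(v_i,v_j) = M(v_i,v_j)$ otherwise. This is well defined (when both $v_i,v_j \in B$ the two ``overwrites'' agree, both being $T(v_i,v_j)$), and $M'$ differs from $M$ only within $|B|$ rows and $|B|$ columns. Hence Lemma~\ref{row-col-correction} gives $\mathrm{rank}(M') \le \mathrm{rank}(M) + |B| + |B| = r + 2|B| \le r + 4\cdot 2^{n-\Omega(a^2 n)} \le r + 4n\cdot 2^{n-\Omega(a^2 n)}$, as required. (Alternatively one can run the explicit rank-one correction argument of Lemma~\ref{row-col-correction}'s proof directly on these $2|B|$ rows and columns.)

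Finally I would verify the three agreement conditions. If $|v_i| \notin [(1/2-a)n,(1/2+a)n]$ then $v_i \in B$, so $M'(v_i,v_j) = T(v_i,v_j)$ by construction; symmetrically if $|v_j|$ is extreme. If instead $M(v_i,v_j) = T(v_i,v_j)$, then either at least one of $v_i,v_j$ lies in $B$ — and we are back in one of the first two cases — or neither does, in which case $M'(v_i,v_j) = M(v_i,v_j) = T(v_i,v_j)$. In all cases $M'$ agrees with $T$ where claimed.

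There is no real obstacle here: the statement is essentially a repackaging of Lemma~\ref{row-col-correction} plus the standard fact that extreme-weight vectors are exponentially rare. The only point requiring a moment's care is that overwriting both the bad rows \emph{and} the bad columns is consistent on their intersection, so the total count of modified rows plus modified columns stays $2|B|$ and Lemma~\ref{row-col-correction} applies with $k=\ell=|B|$; the $4n$ (rather than $4$) in the statement is just slack that makes the bound convenient to cite later.
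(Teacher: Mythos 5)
Your proposal is correct and follows essentially the same route as the paper: bound the number of extreme-weight index vectors via the binomial tail estimate, overwrite those rows and columns of $M$ with the entries of $T$, and apply Lemma~\ref{row-col-correction}. The only cosmetic difference is that you invoke the tail bound directly (getting $|B|\le 2\cdot 2^{n-\Omega(a^2n)}$) while the paper sums binomial coefficients to get $n\cdot 2^{n-\Omega(a^2n)}$; both fit comfortably inside the stated $4n\cdot 2^{n-\Omega(a^2n)}$ slack.
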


\begin{proof} The number of $v_i \in \{0,1\}^n$ with $|v_i| \notin [(1/2-a)n, (1/2+a)n]$ is at most \[\sum_{i=0}^{(1/2-a)n} \binom{n}{i} +\sum_{i=(1/2+a)n}^{n} \binom{n}{i} = 2 \sum_{i=0}^{(1/2-a)n} \binom{n}{i} \leq n \cdot 2^{n-\Omega(a^2 n)},\] by \eqref{entropy-half}. Applying Lemma~\ref{row-col-correction} to $M$ and $M'$ with $k$ and $\ell$ set to $2 \cdot n \cdot 2^{n-\Omega(a^2 n)}$, the result follows.  
\end{proof}

Let us outline how we'll use all of the above. First, we construct a matrix $M$ of rank about $2^{n-\Omega(\eps^2 n)}$ approximating $H_n$, using the polynomial from Lemma~\ref{IP2-polynomial} in a straightforward way. This matrix $M$ has far more erroneous entries than what we desire. But by Lemma~\ref{small-IP}, we can infer that the errors in $M$ are highly concentrated on a relatively small number of rows and columns. Applying Corollary~\ref{correct-WH}, the rows and columns can be ``corrected'' in a way that increases the rank of $M$ by only $2^{n-\Omega(\eps^2 n)}$. By Lemma~\ref{small-IP}, each row of the matrix left over will have $2^{O(\eps\log(1/\eps)n)}$ erroneous entries.

\begin{proof}[Proof of Theorem \ref{nonrigid-IP2}] In fact, we prove that one only has to modify $2^{O(\eps\log(1/\eps)n)}$ entries in each row of $H_n$, to obtain the desired rank.

Let $\eps > 0$ be given. By Lemma~\ref{IP2-polynomial}, there is a polynomial $p(x,y)$ in $2n$ variables with $m = 2^{n-\Omega(\eps^2 n)}$ monomials which computes $(-1)^{\langle x,y\rangle}$ correctly, on all $(x,y) \in \{0,1\}^{2n}$ such that $\langle x,y\rangle \in [2 \eps n, (1/2 + \eps) n]$.

Construct a $2^n \times 2^n$ matrix $M$ of rank $m$ as in Corollary \ref{lemma:monomial-to-rank}, so that $M(x,y) = p(x,y)$. By definition, $M$ equals $H_n$ on all $(x,y) \in \{0,1\}^{2n}$ satisfying $\langle x,y\rangle \in [2 \eps n, (1/2 + \eps) n]$.

Applying Corollary~\ref{correct-WH} to $M$ with $T=H_n$ and $a=\eps$, we obtain a matrix $M'$ of rank $m+4 \cdot n \cdot 2^{n-\Omega(\eps^2 n)}$ which is correct on all $(x,y)$ where either $|x| \notin [(1/2-\eps)n, (1/2+\eps)n]$, $|t| \notin [(1/2-\eps)n, (1/2+\eps)n]$, or $\langle x,y\rangle \in [2 \eps n, (1/2 + \eps) n]$.

Fix a row of $H_n$ indexed by $x \in \{0,1\}^n$ with $|x| \in [(1/2-\eps)n, (1/2+\eps)n]$ (note the other rows are already correct). To show that $M'$ differs from $H_n$ on a small number of entries, we need to bound the number of $y$ such that none of the above conditions hold, i.e., 
\begin{enumerate}
\item $|y| \in [(1/2-\eps)n, (1/2+\eps)n]$ and
\item $\langle x,y\rangle \notin [2 \eps n, (1/2 + \eps) n]$.
\end{enumerate}
Note for our given $x$, it is never true that $\langle x,y\rangle > (1/2 + \eps) n$. Therefore we only need to bound the number $N$ of $y$ such that $|y| \in [(1/2-a)n, (1/2+a)n]$ and yet $\langle x,y\rangle < 2 \eps n$. By Lemma~\ref{small-IP} with $a=\eps$ and $b=\eps$, the probability that a random $y$ satisfies $\langle x,y\rangle < 2 \eps n$ and $|y| \in [(1/2-\eps)n, (1/2+\eps)n]$, is at most $O(n^2) \cdot 2^{(f(\eps,\eps) - 1)n}$, where $f \to 0$ as $\eps \to 0$. Therefore $N \leq 2^{n} \cdot O(n^2) \cdot 2^{(f(\eps,\eps) - 1)n} \leq O(n^2) \cdot 2^{f(\eps,\eps)n}$. 

Now for sufficiently large $n$ and $\eps \in (0,1/2)$, $M'$ has rank at most $m + 4 \cdot n \cdot 2^{n-\Omega(\eps^2 n)}\leq 5n \cdot 2^{n-\Omega(\eps^2 n)}$. Furthermore, on every row, $M'$ differs from $H_n$ in at most $n^2 \cdot 2^{f(\eps,\eps)}\leq 2^{O(\eps\log(1/\eps)n)}$ entries. 
\end{proof}

Other rigidity upper bounds for $H_n$ are described in Appendix~\ref{sec:high-error-WH}.

\section{Probabilistic Rank and Rigidity: An Equivalence}
\label{sec:equivalence}

In this section, we show that the probabilistic rank of $H_n$ and the rigidity of $H_n$ are the \emph{same} concept over fields. It is easy to see that if $\eps$-probabilistic rank of $H_n$ is $k$ over a field $K$, then the rank-$k$ rigidity of $H_n$ is at most $\eps 2^{2n}$ over $K$. Exploiting the random self-reducibility of the $H_n$ function, we can show a converse: lower bounds on probabilistic rank imply proportionate rigidity lower bounds. This is of interest because probabilistic rank lower bounds appear to be fundamentally easier to prove than rigidity lower bounds. 

\begin{reminder}{Theorem~\ref{prob-rank-rigidity-IP2}} For every field $K$ and for every $n$, ${\cal R}_{H_n}(r) \leq \eps 2^{2n}$ over $K$ if and only if $H_n$ has $\eps$-probabilistic rank $r$ over $K$.
\end{reminder}

First let us give some definitions. Let $\otimes$ denote the outer product of vectors. For vectors $a \in K^{2^n}$ whose entries are indexed by $v_1, \ldots, v_{2^n} \in \{0,1\}^n$, and $x,y \in \{0,1\}^n$, let $a^{(x,y)}$ denote the vector in $K^{2^n}$ given by
$$a^{(x,y)}[v_i] = (-1)^{\langle v_i, y \rangle} a[v_i \oplus x].$$
This permutes the entries of $a$, then negates half of the entries.

\begin{proof} One direction is trivial: low probabilistic rank implies low rigidity, by simply drawing a ``typical'' matrix from the distribution. For the other direction, suppose $a_1, \ldots, a_r$ and $b_1, \ldots, b_r$ are vectors in $K^{2^n}$ such that the $2^n \times 2^n$ matrix
\begin{align}\label{Mdef} M := \sum_{k=1}^r a_k \otimes b_k\end{align}
differs from $H_n$ in at most $\eps 2^{2n}$ entries. Pick vectors $x,y \in \{0,1\}^n$ uniformly at random, and consider the $2^n \times 2^n$ matrix
\begin{align}\label{Mpdef} M' = (-1)^{\langle x,y \rangle} \sum_{k=1}^r a^{(x,y)}_k \otimes b^{(y,x)}_k.\end{align}
In this form it is clear that $M'$ has rank at most $r$. We claim that each entry of $M'$ is equal to the corresponding entry of $H_n$ with probability at least $1 - \eps$, over the choice of $x$ and $y$, which will complete the proof.

Consider a given entry $M'(v_i, v_j)$. It is sufficient to show that if $M(v_i \oplus x, v_j \oplus y) = H_n(v_i \oplus x, v_j \oplus y)$ then $M'(v_i, v_j) = H_n(v_i, v_j)$, since $(v_i \oplus x, v_j \oplus y)$ is a uniformly random pair of vectors in $\{0,1\}^n$. Suppose this is the case, meaning $M(v_i \oplus x, v_j \oplus y) = (-1)^{\langle v_i \oplus x, v_j \oplus y \rangle}$.
Applying definition (\ref{Mdef}) and then (\ref{Mpdef}) we see that
\begin{align*}(-1)^{\langle v_i \oplus x, v_j \oplus y \rangle} &= \sum_{k=1}^r a_k[v_i \oplus x] \cdot b_k[v_j \oplus y] \\
&= (-1)^{\langle v_i, y \rangle + \langle v_j, x \rangle} \sum_{k=1}^r (-1)^{\langle v_i, y \rangle} a_k[v_i \oplus x] \cdot (-1)^{\langle v_j, x \rangle} b_k[v_j \oplus y] \\
&= (-1)^{\langle v_i, y \rangle + \langle v_j, x \rangle} \sum_{k=1}^r  a_k^{(x,y)} [v_i] \cdot b_k^{(y,x)} [v_j] \\
&= (-1)^{\langle v_i, y \rangle + \langle v_j, x \rangle} \cdot (-1)^{\langle x,y \rangle} \cdot M'(v_i, v_j).
\end{align*}
Rearranging, we see as desired that $$M'(v_i, v_j) = (-1)^{\langle v_i \oplus x, v_j \oplus y \rangle + \langle v_i, y \rangle + \langle v_j, x \rangle + \langle x,y \rangle} = (-1)^{\langle v_i , v_j \rangle},$$ where the last step follows from the bilinearity of the inner product $\langle \cdot, \cdot \rangle$.\end{proof}

Therefore, proving communication lower bounds for the IP2 function against (for example) the class $\BP \cdot \MOD_m \P$ is \emph{equivalent} to proving rigidity lower bounds for $H_n$ over the ring $\Z_m$. Applying our rigidity upper bounds for $H_n$ (Theorems~\ref{nonrigid-IP2} and \ref{high-error-IP2}), we obtain surprisingly low probabilistic rank bounds for $H_n$ (and therefore communication-efficient protocols as well):

\begin{corollary} For every field $K$, for every sufficiently small $\eps > 0$, and for all $n$, $H_n$ has $1/2^{n(1-\eps)}$-probabilistic rank at most $2^{n-\Omega(\eps^2/\log(1/\eps)) n}$ over $K$, and $\eps$-probabilistic rank at most $(1/\eps)^{O(\sqrt{n} \log n)}$.
\end{corollary}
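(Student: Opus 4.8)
The plan is to derive both bounds by directly combining the equivalence between rigidity and probabilistic rank for $H_n$ (Theorem~\ref{prob-rank-rigidity-IP2}) with the two rigidity upper bounds already established (Theorems~\ref{nonrigid-IP2} and~\ref{high-error-IP2}). In each case the only real step is a change of variables that rewrites a statement of the form ``one can modify $N$ entries of $H_n$ to reach rank $r$'' as the rigidity bound ${\cal R}_{H_n}(r)\le (N/2^{2n})\cdot 2^{2n}$, and then invokes the nontrivial ($\Leftarrow$) direction of Theorem~\ref{prob-rank-rigidity-IP2} to convert it into a probabilistic-rank upper bound. Since Theorem~\ref{prob-rank-rigidity-IP2} is stated for every field $K$, both conclusions hold over every field.

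For the first bound I would start from Theorem~\ref{nonrigid-IP2}: ${\cal R}_{H_n}\!\left(2^{n-f(\eps)n}\right)\le 2^{n(1+\eps)}$ over $K$, where $f(\eps)=\Theta(\eps^2/\log(1/\eps))$. Writing $2^{n(1+\eps)} = 2^{2n}/2^{n(1-\eps)}$, this says the rank-$2^{n-f(\eps)n}$ rigidity of $H_n$ is at most $\eps'\cdot 2^{2n}$ with $\eps' = 1/2^{n(1-\eps)}$. Applying Theorem~\ref{prob-rank-rigidity-IP2} with this $\eps'$ then shows $H_n$ has $1/2^{n(1-\eps)}$-probabilistic rank at most $2^{n-f(\eps)n} = 2^{n-\Omega(\eps^2/\log(1/\eps))n}$ over $K$, which is exactly the claimed bound.

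For the second bound I would instantiate Theorem~\ref{high-error-IP2} with $r = \lceil 1/\eps\rceil$ (which lies in $[2^{2n}]$ whenever $\eps$ is not too small): one can modify at most $2^{2n}/r \le \eps\cdot 2^{2n}$ entries of $H_n$ and obtain a matrix of rank $(n/\ln r)^{O(\sqrt{n\log r})}$. Hence the rank-$(n/\ln r)^{O(\sqrt{n\log r})}$ rigidity of $H_n$ is at most $\eps\cdot 2^{2n}$, so Theorem~\ref{prob-rank-rigidity-IP2} gives $\eps$-probabilistic rank at most $(n/\ln(1/\eps))^{O(\sqrt{n\log(1/\eps)})}$. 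To match the stated form, I would then bound this crudely using $\sqrt{\log(1/\eps)}\le \log(1/\eps)$ and $\log(n/\ln(1/\eps))\le \log n$, which gives $(n/\ln(1/\eps))^{O(\sqrt{n\log(1/\eps)})}\le (1/\eps)^{O(\sqrt n\,\log n)}$.

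There is essentially no obstacle: the conceptual work — that rigidity upper bounds for $H_n$ transfer to probabilistic-rank upper bounds — is entirely contained in Theorem~\ref{prob-rank-rigidity-IP2}. The only points requiring care are the routine arithmetic of the two change-of-variables calculations and the integrality/range constraint on $r$ in Theorem~\ref{high-error-IP2}, which is handled by the rounding $r=\lceil 1/\eps\rceil$; I would also remark that ``sufficiently small $\eps$'' in the first part is inherited verbatim from Theorem~\ref{nonrigid-IP2}.
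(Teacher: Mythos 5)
Your proposal is correct and matches the paper's own derivation: the corollary is obtained exactly by feeding the rigidity upper bounds of Theorems~\ref{nonrigid-IP2} and~\ref{high-error-IP2} through the nontrivial direction of the equivalence in Theorem~\ref{prob-rank-rigidity-IP2}, with the same change of variables ($2^{n(1+\eps)}=2^{2n}/2^{n(1-\eps)}$, resp.\ $r\approx 1/\eps$). Your crude estimate $(n/\ln(1/\eps))^{O(\sqrt{n\log(1/\eps)})}\le(1/\eps)^{O(\sqrt{n}\log n)}$ and the handling of the range of $r$ are fine for sufficiently small $\eps$, so nothing further is needed.
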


Our reduction from rigidity to probabilistic rank in fact works for any (non-adaptive) random self-reducible function~\cite{fef93} that makes a small number of oracle calls. See Appendix~\ref{sec:general-self-reduction}.

\section{Discussion}

Our most significant finding is that Hadamard matrices are not as rigid as previously believed: for every $\eps > 0$, there are infinitely many $N$ and $N \times N$ Hadamard matrices whose rank drops below $N^{1-\Omega(\eps^2/\log(1/\eps))}$ after modifying only $N^{\eps}$ entries in each row. This rules out a proof of arithmetic circuit lower bounds for the DFT over $\Z_2^n$ via matrix rigidity. Our proof shows precisely how low rank-rigidity can be more powerful than low-sparsity polynomial approximations: we start with a sparse polynomial that has errors concentrated on negligibly many rows and columns, and use a simple lemma to correct most erroneous rows and columns.

Are there other conjectured-to-be-rigid matrices which are not? One candidate would be the generating matrix of a good linear code over $\F_2$. Very recently, Goldreich~\cite{Goldreich-Dvir16} has reported a distribution of matrices in which most of them are the generating matrix of a good linear code that is not rigid, found by Dvir. It would be very interesting to find an explicit code with this property. Another next natural target would be Vandermonde matrices. Given a field $\F$ of order $n$, and letting $g$ be a generator of the multiplicative group $\F^{\times}$, the $n \times n$ matrix $V[i,j] := g^{(i-1) \cdot (j-1)}$ also has structure that may be similarly exploitable.  

Our proof that functions with small $\THR \circ \THR$ circuits have low probabilistic sign-rank (Theorem~\ref{thm:thrthr-sign-rank}) effectively shows how to randomly reduce an ``inner product defined by a $\THR \circ \THR$'' to an ``inner product defined by a $\THR \circ \XOR$.'' It seems likely that this result could have further applications (beyond what we showed). The theorem suggests the research question: is it possible to write a $\THR \circ \THR$ circuit as an small ``approximate-MAJORITY'' of $\THR \circ \XOR$ circuits, i.e. a \emph{probabilistic} PTF, in the sense of~\cite{Alman-Chan-Williams16}? This would be an intriguing simulation of depth-two threshold circuits.

Another significant theme in this paper is the close relationship between probabilistic rank and threshold circuits, as well as rigidity. It seems likely that more algorithmic applications will be found by further study of probabilistic rank of matrices; perhaps some lower bounds can also be proved via these connections.

\paragraph{Acknowledgements.} We thank Mrinal Kumar for helpful comments, Oded Goldreich, Michael Forbes and Satya Lokam for helpful discussions, and Amir Shpilka and Avishay Tal for patiently listening to R.W.'s conjectures and results on non-rigidity at Banff (BIRS) in August 2016.

\bibliographystyle{alpha}
\bibliography{rigidity}

\appendix

\section{Rigidity Upper Bounds For High Error}\label{sec:high-error-WH}

In this section, we prove upper bounds on the rigidity of the Walsh-Hadamard transform in the regime where the error is constant, or much larger than $1/2^n$; this setting is of interest for communication complexity lower bounds.

\begin{reminder}{Theorem~\ref{high-error-IP2}} For every integer $r \in [2^{2n}]$, one can modify at most $2^{2n}/r$ entries of $H_n$ and obtain a matrix of rank at most $(n/\log(r))^{O(\sqrt{n \log(r)})}$.
\end{reminder}

The proof follows from applying an optimal-degree probabilistic polynomial for symmetric functions:

\begin{theorem}[\cite{JoshRyan}] \label{thm:sym-poly}
There is a probabilistic polynomial over any field, or the integers, for any symmetric Boolean function on $n$ variables, with error $\eps$ and degree $O(\sqrt{n \log(1/\eps)})$.
\end{theorem}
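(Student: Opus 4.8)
The plan is to reduce an arbitrary symmetric Boolean function to threshold functions and then build low-degree probabilistic polynomials for the thresholds. Write $f(x) = g(|x|)$ for some $g : \{0,1,\dots,n\} \to \{0,1\}$, and telescope: with $g(-1):=0$ and $\Delta_k := g(k)-g(k-1) \in \{-1,0,1\}$, one has the identity $f(x) = \sum_{k=0}^{n} \Delta_k \cdot \mathrm{TH}_k(x)$, where $\mathrm{TH}_k(x) := [\,|x| \ge k\,]$, since $\sum_{k=0}^{|x|} \Delta_k = g(|x|)$. If each $\mathrm{TH}_k$ has a probabilistic polynomial $\mathcal{P}_k$ of degree $d$ and error $\eps/(n{+}1)$ over the ring in question, then $\sum_k \Delta_k \mathcal{P}_k$ (sampling the $\mathcal{P}_k$ independently) has degree $d$ — the scalars $\Delta_k$ cost nothing — and, by a union bound, computes $f$ with error $\eps$. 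So it suffices to produce, for every $k$, a probabilistic polynomial for $\mathrm{TH}_k$ of error $\eps'$ and degree $O(\sqrt{n \log(1/\eps')})$; taking $\eps' = \eps/(n{+}1)$ yields degree $O(\sqrt{n\log(n/\eps)})$, and the extra $\sqrt{\log n}$ is either absorbed into the $O(\cdot)$ for the applications in this paper (where such degrees appear inside exponents of polynomial bounds) or shaved by bundling the $n{+}1$ thresholds into a single binary-search-over-$|x|$ polynomial, so that only the $O(\log n)$ comparisons on the root-to-leaf path of the true value of $|x|$ must succeed.

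Second, for the threshold polynomial itself. Over $\Q$ (or $\R$), start from the classical Chebyshev/Paturi fact that the sign function on $\{-n,\dots,n\}\setminus\{0\}$ admits a univariate polynomial of degree $O(\sqrt{n\log(1/\delta)})$ approximating it to sup-error $\delta$; substituting $|x|-k+\tfrac12$ gives a polynomial $A(x)$ of that degree with $|A(x) - \mathrm{TH}_k(x)| \le \delta$ for all $x$. The remaining task is to turn the real-valued, approximately-Boolean $A$ into a polynomial equal to $\mathrm{TH}_k(x)$ \emph{exactly} with probability $1-\eps'$. One organizes this hierarchically: $\mathrm{TH}_k(x)$ is already pinned down deterministically from $A(x)$ when $||x|-k|$ is large (say $\gg \sqrt n$), so an exact treatment is needed only for the $O(\sqrt n)$ weight levels near $k$, which is itself a threshold problem of ``width'' $O(\sqrt n)$; recursing, the widths fall off as $\sqrt n, n^{1/4}, n^{1/8},\dots$ and the contributing degrees form a geometric series summing to $O(\sqrt{n\log(1/\delta)})$, with the constant-width base case handled by exact univariate interpolation. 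Choosing $\delta$ polynomially small in $n/\eps'$ keeps the total degree $O(\sqrt{n\log(1/\eps')})$ up to harmless $\log$-factors. Over a field of positive characteristic, where Chebyshev polynomials are unavailable, one instead uses an analogous Razborov–Smolensky-style algebraic construction for the threshold, or lifts the integer construction and reduces the coefficients modulo the characteristic.

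I expect the main obstacle to be exactly this conversion step: producing, cheaply in degree, a polynomial that is \emph{exactly} $0/1$-valued and equals the target with probability $1-\eps'$ given only a real approximant. The naive idea — compose $A$ with a univariate ``round to nearest integer'' polynomial — fails, because such a polynomial must be correct at every value $A$ actually attains, of which there can be $\Theta(n)$, forcing degree $\Theta(n)$ and killing the $\sqrt n$ savings; so the rounding must be randomized and must exploit that only equality-with-probability-$(1-\eps')$ is demanded, not approximation (this is precisely the phenomenon — probabilistic representation beating approximate representation — highlighted in the paper's discussion). A secondary nuisance is budgeting the two error parameters ($\delta$ for the approximants, $\eps'$ for the rounding and for the union bounds over the $O(\log n)$ recursion levels and the $n{+}1$ thresholds) so that every logarithmic factor folds back into the stated $O(\sqrt{n\log(1/\eps)})$. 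Everything else — existence of the interpolants, the Chebyshev degree estimate, and converting a sparse probabilistic polynomial to a distribution of low-rank matrices — is routine. Alternatively, one can invoke the known tight probabilistic-degree bound for threshold functions as a black box and carry out only the telescoping reduction above.
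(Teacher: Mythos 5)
This theorem is not proved in the paper at all: it is imported from \cite{JoshRyan}, and the construction there is quite different from yours. It is purely algebraic and exact: one samples roughly half of the coordinates at random, uses Chernoff concentration to localize $|x|$ (whp) to a window of Hamming weights of width $O(\sqrt{n\log(1/\eps)})$, handles that window exactly with the integer interpolation lemma (quoted in this paper as Lemma~\ref{IP2-polynomial-interpolation}), and recurses on the sampled half, giving a degree recurrence of the form $d(n)\le d(n/2)+O(\sqrt{n\log(1/\eps)})$. Because every ingredient is an integer-coefficient polynomial taking exactly the right values, the outputs are exactly Boolean and the construction works verbatim over $\Z$ and over any field; no real-valued approximant ever appears.

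Your proposal has concrete gaps at each of its load-bearing points. First, the telescoping $f=\sum_k \Delta_k[|x|\ge k]$ can have $\Theta(n)$ nonzero terms (parity already does), so the union bound forces per-threshold error $\eps/\Theta(n)$ and degree $O(\sqrt{n\log(n/\eps)})$; for constant $\eps$ this is $O(\sqrt{n\log n})$, not the claimed $O(\sqrt{n\log(1/\eps)})$, and neither repair works: the loss is not always absorbable (e.g.\ Theorem~\ref{high-error-IP2} with small $r$ degrades), and the binary-search bundling fails on degree rather than error, since a single polynomial simulating a depth-$O(\log n)$ decision tree of threshold queries is a sum over leaves of products of the node polynomials along each root-to-leaf path, multiplying the degree by the depth. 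Second, the heart of the matter --- converting a real sup-norm approximant $A$ into an \emph{exactly} $0/1$-valued polynomial correct with probability $1-\eps'$ --- is precisely what you leave open; the ``widths $\sqrt n, n^{1/4},\dots$'' recursion is not a construction, because no low-degree polynomial mechanism is given for ``restricting to the weight levels near $k$'' without first probabilistically locating $|x|$ in a narrow window, which is the actual content of the cited proof (done there by subsampling and concentration, not via Chebyshev; note also that a sup-norm $\delta$-approximant determines $[|x|\ge k]$ at \emph{every} input, so the statement that it is ``pinned down when $\bigl||x|-k\bigr|$ is large'' conflates approximation error with the rounding problem). Third, in positive characteristic the Chebyshev route does not transfer: its coefficients are not integers, so ``reduce mod the characteristic'' has nothing to act on, whereas the cited construction avoids the issue by using only integer-valued interpolation. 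Finally, invoking ``the known tight probabilistic-degree bound for thresholds'' as a black box is circular here --- that bound is essentially the theorem being cited --- and even granting it, the telescoping reduction still loses the $\sqrt{\log n}$ factor relative to the stated bound.
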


\begin{proofof}{Theorem~\ref{high-error-IP2}}
Set $\eps = 1/r$, and define the Boolean function $IP2 : \{0,1\}^{2n} \to \{-1,1\}$ by $IP2(x,y) = (-1)^{\langle x, y \rangle}$ for all $x,y \in \{0,1\}^n$. We can see that $H_n$ is the truth table matrix $M_{IP2}$. By Corollary \ref{cor:poly-to-rigid}, it is sufficient to construct a probabilistic polynomial for $IP2$ with error $\eps$ and $(n/\ln(1/\eps))^{O(\sqrt{n \log(1/\eps)})}$ monomials. Consider the Boolean function $PARITY(z_1,\ldots,z_n) = (-1)^{z_1 + \cdots + z_n}$ for all $z \in \{0,1\}^n$, and note that $IP2(x_1, \ldots,x_n,y_1,\ldots,y_n) = PARITY(x_1y_1, x_2y_2, \ldots, x_ny_n)$.
Since $PARITY$ is symmetric, by Theorem \ref{thm:sym-poly} it has a probabilistic polynomial $P$ of error $\eps$ and degree $d = O(\sqrt{n \log(1/\eps)})$. Hence, the distribution of $p(x_1y_1, \ldots, x_ny_n)$ over $p$ drawn from $P$ is a probabilistic polynomial for $IP2$. Since we are only interested in the value of $p(z)$ when $z \in \{0,1\}^n$, we can make $p$ multilinear by applying the equation $v^2 = v$ to all variables. Then the number of monomials of $p$ is at most $\sum_{i=0}^{O(\sqrt{n \log(1/\eps)})} \binom{n}{i} \leq (n/\ln(1/\eps))^{O(\sqrt{n \log(1/\eps)})}$. Since in $p(x_1y_1, \ldots, x_ny_n)$ we are substituting in a monomial for each variable, its expansion has the same number of monomials as $p$, as desired.
\end{proofof}

\section{Rigidity Upper Bound for SYM-AND circuits}\label{sec:sym-and}

Here we generalize Theorems \ref{nonrigid-IP2} and \ref{high-error-IP2} to $\SYM \circ \AND$ circuits. In the proof of Theorem~\ref{nonrigid-IP2}, the key property of the $IP2$ function required is that has the form 
$$IP2(x_1, \ldots, x_n, y_1, \ldots, y_n) = f(x_1 \wedge y_1, \ldots, x_n \wedge y_n),$$
where $f$ is a symmetric Boolean function (in our case, $f$ computes parity). The same proof yields the following generalization:

\begin{theorem} For every symmetric function $f : \{0,1\}^n \to R$, define the function $IP_f : \{0,1\}^{2n} \to R$ by $IP_f(x,y)= f(x_1 \wedge y_1, \ldots, x_n \wedge y_n)$. For all sufficiently small $\eps$, there is a $\delta < 1$ and a matrix of rank $2^{\delta n}$ which differs from the truth table matrix $M_{IP_f}$ in at most $2^{(1+\eps)n}$ entries.
\end{theorem}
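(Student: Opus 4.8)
The plan is to re-run the proof of Theorem~\ref{nonrigid-IP2} almost verbatim. The only place where the specific function $\mathrm{PARITY}$ (equivalently, the values $(-1)^{|z|}$) was used is the opening polynomial-interpolation step; everything downstream — the row/column correction Lemma~\ref{row-col-correction} and its Corollary~\ref{correct-WH}, and the small-inner-product counting Lemma~\ref{small-IP} — only reasons about \emph{which} entries of a matrix are correct, never about their values. So the whole argument is value-agnostic once the initial low-rank approximation of $M_{IP_f}$ is in hand, and what I really need is an $R$-valued analogue of Lemma~\ref{IP2-polynomial}.

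First I would build that approximation. Since $f$ is symmetric, write $g : \{0,\dots,n\} \to R$ for the function with $f(z) = g(|z|)$, so $IP_f(x,y) = g(\langle x,y\rangle)$. I would work in the basis of elementary symmetric polynomials $e_i$, which on $\{0,1\}^n$ satisfy $e_i(z) = \binom{|z|}{i}$, have $0/1$ integer coefficients, and have $\binom{n}{i}$ monomials each. With $k = 2\eps n - 1$ and $r = (1/2-\eps)n + 1$, I want $a_0,\dots,a_{r-1} \in R$ with $\sum_{i<r} a_i \binom{w}{i} = g(w)$ for all $w \in \{k+1,\dots,k+r\} = \{2\eps n,\dots,(1/2+\eps)n\}$. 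The $r \times r$ interpolation matrix $\big[\binom{w}{i}\big]_{w,i}$ has determinant equal to $\prod_{1\le a<b\le r}(b-a)$ divided by $\prod_{i<r} i!$, and these two products are equal, so the determinant is $1$; hence the matrix is unimodular and the $a_i$ exist over \emph{any} ring, with no division needed. (This is the $R$-valued version of Lemma~\ref{IP2-polynomial-interpolation}.) Then $q(z) := \sum_i a_i e_i(z)$ is a degree-$(r-1)$ polynomial over $R$ with $q(z) = g(|z|)$ whenever $|z| \in [2\eps n, (1/2+\eps)n]$, and $q$ has at most $n\binom{n}{(1/2-\eps)n+1} \le 2^{n - \Omega(\eps^2 n)}$ monomials by~\eqref{entropy-half}. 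Substituting $z_i \mapsto x_i y_i$ preserves the monomial count (distinct variables go to disjoint monomials), exactly as in Lemma~\ref{IP2-polynomial}, so Lemma~\ref{lemma:monomial-to-rank} gives a matrix $M$ of rank $2^{n-\Omega(\eps^2 n)}$ agreeing with $M_{IP_f}$ on every $(x,y)$ with $\langle x,y\rangle \in [2\eps n, (1/2+\eps)n]$.

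Then the rest is identical to the proof of Theorem~\ref{nonrigid-IP2}. I would apply Corollary~\ref{correct-WH} with $T = M_{IP_f}$ and $a = \eps$, correcting the $\le n\,2^{n-\Omega(\eps^2 n)}$ rows and columns indexed by $x$ or $y$ of Hamming weight outside $[(1/2-\eps)n, (1/2+\eps)n]$; this yields $M'$ of rank still $2^{n - \Omega(\eps^2 n)} =: 2^{\delta n}$ with $\delta < 1$. For a surviving row $x$ one has $\langle x,y\rangle \le |x| \le (1/2+\eps)n$ automatically, so the only erroneous columns are those $y$ with $|y|$ near $n/2$ and $\langle x,y\rangle < 2\eps n$; Lemma~\ref{small-IP} (with parameters $\Theta(\eps)$) bounds their number by $\poly(n)\cdot 2^{h(\eps)n}$ with $h(\eps) \to 0$ as $\eps \to 0$. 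Summing over the $2^n$ rows gives at most $2^{(1+O(h(\eps)))n}$ modified entries. To make this at most $2^{(1+\eps)n}$ I would, as in Theorem~\ref{nonrigid-IP2}, rerun the construction with a smaller internal window parameter $\eps' = \eps'(\eps) > 0$ chosen so that the per-row error exponent falls below $\eps$; this keeps the rank at $2^{(1-\Omega(\eps'^2))n}$, which is still $2^{\delta n}$ with $\delta < 1$ since $\eps'$ is a fixed positive constant once $\eps$ is fixed.

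The one genuinely new ingredient is the $R$-valued interpolation in the second paragraph, and that — together with the routine check that the $e_i$-basis sparsity bound survives when $R$ has positive characteristic — is the step I would state most carefully; the remainder can be written as ``identical to the proof of Theorem~\ref{nonrigid-IP2}.'' I do not expect a real obstacle here: the value-agnosticism of Lemmas~\ref{row-col-correction} and~\ref{small-IP} is precisely what makes the generalization from $\mathrm{PARITY}$ to an arbitrary symmetric $f$ go through without change.
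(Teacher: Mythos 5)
Your proposal is correct and follows essentially the same route as the paper, which proves this theorem by observing that the proof of Theorem~\ref{nonrigid-IP2} uses only the form $IP_f(x,y)=f(x_1\wedge y_1,\ldots,x_n\wedge y_n)$ with $f$ symmetric, so one just replaces the parity values in the interpolation step by $g(w)$ and reuses Lemmas~\ref{small-IP} and~\ref{row-col-correction} verbatim. Your explicit $R$-valued interpolation via the unimodular matrix $\bigl[\binom{w}{i}\bigr]$ is a nice touch that carefully justifies the step the paper handles by citing its integer-valued Lemma~\ref{IP2-polynomial-interpolation}, but it is the same underlying argument rather than a different approach.
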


The proof of Theorem~\ref{high-error-IP2} only requires a probabilistic polynomial construction in Corollary \ref{cor:poly-to-rigid}. Our probabilistic matrix distribution simply substitutes monomials into the probabilistic polynomial of Theorem~\ref{thm:sym-poly} for any symmetric function. Since each monomial can be viewed as an $\AND$, the same argument will work for any $\SYM \circ \AND$ circuit.

\begin{theorem} For any Boolean function $f : \{0,1\}^{2n} \to R$ which can be written as a $\SYM \circ \AND$ circuit with $s$ $\AND$ gates, and for every integer $r \in [2^{2n}]$, one can modify $2^{2n}/r$ entries of the truth table matrix $M_f$ and obtain a matrix of rank at most $(s / \log r)^{O(\sqrt{s \log r})}$.
\end{theorem}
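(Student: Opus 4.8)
The plan is to reduce to Corollary~\ref{cor:poly-to-rigid}, exactly as in the proof of Theorem~\ref{high-error-IP2}: with error parameter $\eps := 1/r$, it suffices to exhibit a probabilistic polynomial for $f$ over $R$ with error $\eps$ and at most $(s/\log r)^{O(\sqrt{s\log r})}$ monomials, since Corollary~\ref{cor:poly-to-rigid} then yields a matrix agreeing with $M_f$ on all but $\eps\cdot 2^{2n} = 2^{2n}/r$ entries and of rank at most that many. Write $f(x,y) = g\big(A_1(x,y),\ldots,A_s(x,y)\big)$, where $g:\{0,1\}^s\to R$ is the output symmetric gate and $A_1,\ldots,A_s$ are the conjunctions computed by the $s$ $\AND$ gates; in the natural setting (as with $x_i\wedge y_i$ for $IP2$) each $A_i$ is a monomial over the $2n$ input variables.

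First I would apply Theorem~\ref{thm:sym-poly} to $g$ to get a probabilistic polynomial ${\cal P}$ over $R$ of error $\eps$ and degree $d = O(\sqrt{s\log(1/\eps)}) = O(\sqrt{s\log r})$ (an $R$-valued symmetric function is an $R$-linear combination of at most $s+1$ symmetric Boolean functions, the indicators of its level sets, so this reduces to the Boolean case at the cost of a factor $s+1$ in the error). Since $g$ is evaluated only on Boolean inputs, each $p$ in the support of ${\cal P}$ may be taken multilinear, hence has at most $\sum_{k=0}^{d}\binom{s}{k} \le (s/\log r)^{O(\sqrt{s\log r})}$ monomials, just as in Theorem~\ref{high-error-IP2}. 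Then I would compose: for $p\sim{\cal P}$ set $q(x,y) := p(A_1(x,y),\ldots,A_s(x,y))$. Because each $A_i$ is $\{0,1\}$-valued, for every fixed $(x,y)$ we have $q(x,y) = g(A_1(x,y),\ldots,A_s(x,y)) = f(x,y)$ whenever $p$ agrees with $g$ at the point $(A_1(x,y),\ldots,A_s(x,y))$, an event of probability at least $1-\eps$; thus the distribution of $q$ is a probabilistic polynomial for $f$ with error $\eps$. Finally, each multilinear monomial $\prod_{i\in I}z_i$ of $p$ (with $|I|\le d$) substitutes to $\prod_{i\in I}A_i$, which is again a conjunction, hence a single monomial of $q$ after multilinearizing in $x,y$; so $q$ has no more monomials than $p$, namely at most $(s/\log r)^{O(\sqrt{s\log r})}$, completing the construction.

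The only step with genuine content beyond this bookkeeping is the sparsity-preservation under composition — that substituting the $\AND$ gates into the low-degree probabilistic polynomial for the symmetric part does \emph{not} blow up the number of monomials — and it works simply because a product of conjunctions is again a single conjunction, so each monomial of the outer polynomial contributes exactly one monomial after substitution. If one allows negated inputs to the $\AND$ gates, $\prod_{i\in I}A_i$ is a product of literals rather than a monomial, but it still factors as $a_I(x)\cdot b_I(y)$, so its truth table matrix has rank one, and the same conclusion follows via Lemma~\ref{lemma:monomial-to-rank}'s argument together with the ``draw a typical matrix'' step in the proof of Corollary~\ref{cor:poly-to-rigid}. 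One should also handle the degenerate regime $d \ge s$ (equivalently $\log r = \Omega(s)$) separately, by instead using the exact multilinear representation of $g$, with at most $2^s$ monomials and zero error, which still lies within the claimed bound; and the binomial estimate $\sum_{k\le d}\binom{s}{k}\le (s/\log r)^{O(\sqrt{s\log r})}$ is routine, identical to the one in the proof of Theorem~\ref{high-error-IP2}. I expect the composition/sparsity point to be the crux; everything else follows that template.
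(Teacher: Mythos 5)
Your construction is the paper's own proof: the paper's argument (which is only sketched) is exactly to take the probabilistic polynomial of Theorem~\ref{thm:sym-poly} for the symmetric gate with error $\eps=1/r$, substitute the $\AND$ gates for its variables, observe that a product of conjunctions is a single conjunction (so sparsity is preserved), count monomials by $\sum_{k\le d}\binom{s}{k}$, and invoke Corollary~\ref{cor:poly-to-rigid}. Your identification of sparsity-preservation under composition as the crux, and your remarks about negated literals (each term still factors as $a_I(x)b_I(y)$, hence rank one) and the degenerate regime $d\ge s$, are all consistent with this.

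The one place you deviate introduces a quantitative slippage. To handle an $R$-valued symmetric gate you decompose $g$ into at most $s+1$ Boolean level-set indicators and run Theorem~\ref{thm:sym-poly} on each with error $\eps/(s+1)$. That raises the degree to $O\bigl(\sqrt{s\log((s+1)r)}\bigr)=O\bigl(\sqrt{s(\log s+\log r)}\bigr)$, and the resulting monomial count is $s^{O(\sqrt{s(\log s+\log r)})}$, which is \emph{not} of the claimed form $(s/\log r)^{O(\sqrt{s\log r})}$ when $\log r=o(\log s)$ (e.g.\ for constant $r$ you get $s^{O(\sqrt{s\log s})}$ versus the stated $s^{O(\sqrt{s})}$). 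The paper avoids this because it applies Theorem~\ref{thm:sym-poly} directly to symmetric functions with values in the ring --- indeed in the proof of Theorem~\ref{high-error-IP2} it is applied to the $\pm1$-valued $PARITY$ --- and the underlying construction of~\cite{JoshRyan} works for any function of the Hamming weight (equivalently, a single draw makes all level-set indicators simultaneously correct with error $\eps$, since the randomness can be shared rather than union-bounded over independent copies). Replacing your $(s+1)$-way error split with this observation removes the loss and your argument then matches both the paper's proof and the stated bound.
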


\section{Explicit Rigid Matrices and Threshold Circuits}\label{sec:rigid-circuit-lbs}
In this section, we show how explicit rigidity lower bounds would also imply circuit lower bounds where we currently only know weak results (e.g., we know that some functions in $\E^{\NP}$ do not have such circuits). 

\begin{theorem}\label{thm:ATAT-rigidity}
For every constant $\delta>0$ and every $\AC^0 \circ \THR \circ \AC^0 \circ \THR$ circuit $C$ of size-$s = n^{2-\delta}$ and depth-$d = o(\log(n)/\log\log(n/\eps))$, there exists a $\gamma > 0$ such that the truth table of $C$ as a $2^{n/2}\times 2^{n/2}$ matrix $M_C$ has rigidity ${\cal R}_{M_C}\left(2^{n^{1-\gamma} \log(1/\eps)}\right) \leq \eps 2^n$, for all $\eps \in (1/2^n,1)$, over any field.
\end{theorem}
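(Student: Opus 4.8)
The plan is to establish a small-monomial probabilistic polynomial (equivalently, low probabilistic rank) for any $\AC^0 \circ \THR \circ \AC^0 \circ \THR$ circuit of the stated size and depth, and then invoke Corollary~\ref{cor:poly-to-rigid} to convert the probabilistic rank bound into a rigidity upper bound. So the heart of the argument is: show that $C$ has a probabilistic polynomial over the reals (or over any field) with at most $2^{n^{1-\gamma}\log(1/\eps)}$ monomials and error $\eps$, for some $\gamma > 0$ depending on $\delta$. Corollary~\ref{cor:poly-to-rigid} then immediately gives ${\cal R}_{M_C}(2^{n^{1-\gamma}\log(1/\eps)}) \le \eps \cdot 2^{n}$, since $M_C$ is a $2^{n/2}\times 2^{n/2}$ matrix and hence has $2^n$ entries total. (One technical point: Corollary~\ref{cor:poly-to-rigid} is stated for a function on $2n$ variables giving a $2^n \times 2^n$ matrix; here we have $n$ input variables split as $n/2 + n/2$, so the matrix is $2^{n/2}\times 2^{n/2}$ with $2^n$ entries, and the corollary applies verbatim after relabeling.)

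The main technical step — building the probabilistic polynomial for the circuit — is where I would follow Lokam~\cite{Lokam2001} and the probabilistic-polynomial machinery of~\cite{Alman-Chan-Williams16}. First I would handle the bottom $\AC^0 \circ \THR$ block: a single $\THR$ gate on $n$ inputs has an exact polynomial representation of the relevant sparsity via the standard trick (a threshold is a symmetric function of the weighted sum, but more directly one uses that a $\THR$ gate has a probabilistic polynomial of degree roughly $\sqrt{n}\log(1/\eps)$ — indeed any $\THR$ gate, being a sign of a linear form, is computed by a $\MAJ$-of-$\ldots$ that reduces to a symmetric function, to which Theorem~\ref{thm:sym-poly} applies after the weight-reduction of Maass/Goldmann–Håstad–Razborov). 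Then a size-$s$, depth-$d$ $\AC^0$ layer on top of these composes via the standard low-degree probabilistic polynomial for $\AC^0$ (Razborov–Smolensky / Tarui), whose degree is $(\log s)^{O(d)}\log(1/\eps)$. Composing the two $\AC^0 \circ \THR$ blocks, the total degree of the probabilistic polynomial becomes roughly $(\log s)^{O(d)} \cdot \sqrt{n} \cdot \polylog$, and critically one must track the union-bound error across all $\le s$ gates. With $s = n^{2-\delta}$ and $d = o(\log n/\log\log(n/\eps))$, we have $(\log s)^{O(d)} = n^{o(1)}$, so the degree is $n^{1/2 + o(1)}\log(1/\eps)$; making $p$ multilinear (since inputs are Boolean) the number of monomials is at most $\sum_{i \le D}\binom{n}{i} \le n^{D} \le 2^{n^{1-\gamma}\log(1/\eps)}$ for a suitable $\gamma$ absorbing the $o(1)$ slack.

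The step I expect to be the main obstacle is getting a genuinely \emph{sub-$\sqrt{n}$-in-the-exponent-after-log}, i.e. controlling the composition so that the final monomial count is $2^{n^{1-\gamma}}$ rather than merely $2^{\tilde O(\sqrt n)}$ — more precisely, I need the degree bound $D$ to satisfy $D\log n = n^{1-\gamma}\log(1/\eps)$, which forces $D = n^{1-\gamma}/\log n \cdot \log(1/\eps)$; since the threshold layer alone already costs $\Theta(\sqrt n \log(1/\eps))$ in degree, this is fine as long as $\sqrt n = n^{1-\gamma}/\log n$ has a solution with $\gamma > 0$, which it does ($\gamma$ slightly below $1/2$), but the $\AC^0$ composition multiplies this by $(\log s)^{O(d)}$, and I must verify that $d = o(\log n/\log\log(n/\eps))$ is exactly the threshold that keeps $(\log s)^{O(d)}\cdot \sqrt n \le n^{1-\gamma}$. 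Carefully choosing $\gamma$ as a function of $\delta$ and checking this depth/size trade-off (and confirming the whole construction works over an arbitrary field, not just $\R$, which it does since Theorem~\ref{thm:sym-poly} and the $\AC^0$ polynomial method both work over any field) is the delicate bookkeeping. Once the probabilistic polynomial is in hand, the reduction to rigidity via Corollary~\ref{cor:poly-to-rigid} is immediate.
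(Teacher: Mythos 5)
There is a genuine gap in the main technical step: your degree accounting omits the middle $\THR$ layer, and once it is included the route you describe collapses. The circuit has \emph{two} threshold layers, and when probabilistic polynomials are composed their degrees multiply. The middle $\THR$ gates have fan-in up to $s = n^{2-\delta}$ (they read the outputs of up to $s$ bottom-level gates), so any probabilistic-polynomial treatment of them --- weight reduction to a majority/symmetric form plus Theorem~\ref{thm:sym-poly} --- costs degree $\tilde{\Omega}(\sqrt{s}) = \tilde{\Omega}(n^{1-\delta/2})$ (the paper, after Maciel--Therien, Theorem~\ref{thm:thr-to-maj}, pays $O(n^{1-\delta^2/4}\log(s/\eps))$), and this lower bound is not an artifact: degree $\tilde{\Omega}(\sqrt{m})$ is necessary for a probabilistic polynomial for majority on $m$ bits. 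Multiplying this by the degree $\Theta(\sqrt{n\log(s/\eps)})$ you spend on each bottom $\THR$ gate gives total degree on the order of $n^{3/2-\delta/2}$ (times $\log^{O(d)}$ factors), which exceeds $n$; the multilinear monomial count is then only the trivial $2^{\Theta(n)}$, far above the target $2^{n^{1-\gamma}\log(1/\eps)}$. Your claimed total degree $(\log s)^{O(d)}\cdot\sqrt{n}\cdot\log^{O(1)}(n/\eps)$ accounts for only one threshold layer; within a pure polynomial-composition argument there is no way to pay for both.

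The missing idea is the paper's key maneuver: the bottom $\THR$ layer is not handled by probabilistic polynomials at all, but by probabilistic \emph{rank}, exploiting the two-party split of the inputs. Lemma~\ref{lem:ltf-rank} shows each bottom $\THR$ gate has $(\eps/s)$-probabilistic rank $O(n^2 s/\eps)$ (via the sorted-list/$LEQ$/$EQ$ construction), i.e.\ it is, with high probability, a sum of $O(n^2 s/\eps)$ products $a_i(x)b_i(y)$; this costs nothing in degree. Only the upper $\AC^0\circ\THR\circ\AC^0$ portion is converted to a probabilistic polynomial (Maciel--Therien for the middle $\THR$'s, Tarui, Theorem~\ref{thm:tarui-ac0}, for the $\AC^0$ parts, and Theorem~\ref{thm:sym-poly} for the resulting $\MAJ$'s), of degree $O(n^{1-\beta})$ for some $\beta\in(0,\delta^2/4)$ when $d = o(\log n/\log\log(n/\eps))$. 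Treating the rank-one terms as fresh variables and substituting them into this degree-$O(n^{1-\beta})$ polynomial, as in Lemma~\ref{lemma:monomial-to-rank}, bounds the rank by $(n^2s^2/\eps)^{O(n^{1-\beta})} = 2^{O(n^{1-\gamma})\log(1/\eps)}$, and the rigidity bound then follows by drawing a typical matrix, as in Corollary~\ref{cor:poly-to-rigid}. So your overall template (probabilistic representation, then a typical-sample conversion to rigidity) matches the paper, but without switching to probabilistic rank for the bottom layer the degree budget is blown and the stated bound does not follow.
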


Our proof will use a technique by Maciel and Therien for converting each middle layer $\THR$ gate into an equivalent $\AC^0 \circ \MAJ$ circuit:

\begin{theorem}[\cite{Maciel-Therien98} Theorem 3.3, \cite{Alman-Chan-Williams16} Theorem 7.1] \label{thm:thr-to-maj}
For every $\alpha>0$, every $\THR$ on $n$ inputs can be computed by a polynomial-size $\AC^0 \circ \MAJ$ circuit where the fan-in of each $\MAJ$ gate is $n^{1+\alpha}$ and the circuit has depth $O(\log(1/\alpha))$.
\end{theorem}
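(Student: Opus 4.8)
The plan is to prove the stronger statement that for any integers $v_0,\ldots,v_n$ of bit length $L=\poly(n)$, deciding $\sgn\!\left(\sum_i v_i\right)$ lies in $\AC^0\circ\MAJ$ with $\MAJ$-fan-in $O(n^{1+\alpha})$ and depth $O(\log(1/\alpha))$, and then instantiate it. First I would reduce an arbitrary $\THR$ on $n$ Boolean inputs to integer weights: by Muroga's theorem it can be written as $\sgn(\sum_i w_i x_i-\theta)$ with integers $|w_i|,|\theta|\le 2^{O(n\log n)}$, so $L=O(n\log n)$. Splitting the $w_i$ by sign, the question ``$\sum_i w_i x_i\ge\theta$'' becomes a comparison ``$A\ge B$'' where $A$ and $B$ are each a sum of at most $n+1$ nonnegative $L$-bit integers (of the form $w_i x_i$ or the constant $|\theta|$). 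So it suffices to compare two such sums.

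Now the core idea. Write each summand in base $2^b$ with $b=\lceil\alpha\log n\rceil$, so it has $m=O(L/b)=O(n/\alpha)$ digits in $[0,2^b)$. For each digit position $j$, the $A$-side digit-sum $D_j^A:=\sum_{w_i>0}(w_i)_j\,x_i$ is a threshold in the $x_i$ of \emph{total weight} at most $(n+1)2^b=O(n^{1+\alpha})$; hence every one of its $O((1+\alpha)\log n)$ bits is computable by a single layer of $\MAJ$ gates of fan-in $O(n^{1+\alpha})$ — one gate per threshold value ``$D_j^A\ge c$'' with $c\le O(n^{1+\alpha})$ — followed by a depth-$O(1)$ $\AC^0$ post-processing that turns the step-function data $\{[D_j^A\ge c]\}_c$ into the binary digits of $D_j^A$ (bit $\ell$ is an OR over runs of $d$-values of length $2^\ell$, each run telescoping to $[D_j^A\ge a]\wedge\neg[D_j^A\ge a+2^\ell]$). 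The same is done on the $B$-side. To avoid re-summing the $O(n/\alpha)$ numbers $2^{jb}D_j^{A}$ (which would cost $\Omega(\log n)$ depth), I would regroup the digit positions $j$ by residue modulo $t=O(1+(\log n)/b)=O(1/\alpha)$: within one residue class the blocks $2^{jb}D_j^A$ occupy disjoint bit ranges, so their sum $E_s^A$ is merely the concatenation of the already-computed digit bits — free, no gates. Thus $A=\sum_{s=0}^{t-1}E_s^A$ and $B=\sum_{s=0}^{t-1}E_s^B$, and it remains to decide the sign of $\sum_s E_s^A-\sum_s E_s^B$, a $\pm$-sum of $2t=O(1/\alpha)$ integers of $O(L)$ bits.

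To finish, I would apply iterated carry-save addition (using two's complement for the minus signs): each carry-save level is a depth-$O(1)$ $\AC^0$ transformation replacing $3$ numbers by $2$, so $O(\log t)=O(\log(1/\alpha))$ levels reduce the $2t$ numbers to $2$, after which a single two-number comparison — in $\AC^0$ of depth $O(1)$ — outputs the sign. Every layer above the initial $\MAJ$ layer is $\AC^0$, so the whole circuit is $\AC^0\circ\MAJ$; its size is polynomial (the $\MAJ$ layer has $O(n^{2+\alpha}/\alpha)$ gates, everything else polynomially many); and its depth is $1+O(1)+O(\log(1/\alpha))+O(1)=O(\log(1/\alpha))$. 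Replacing $\alpha$ by $\alpha/2$ at the outset absorbs the hidden constant so the $\MAJ$-fan-in is at most $n^{1+\alpha}$.

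The main obstacle is keeping the depth at $O(\log(1/\alpha))$ rather than the $O(\log n)$ one gets from naively carry-save-adding $\Theta(n)$ bit-slices; the two ingredients that cooperate to beat this are (i) choosing the base $2^b$ with $b=\Theta(\alpha\log n)$ so that a \emph{single} $\MAJ$ layer compresses the full-weight sum into $O(n/\alpha)$ tiny-weight digit-sums of fan-in $n^{1+\alpha}$, and (ii) the residue-class regrouping, which collapses those $O(n/\alpha)$ summands to $O(1/\alpha)$ at zero extra cost, leaving only $O(\log(1/\alpha))$ further $\AC^0$ levels. One should also verify that the bit-extraction post-processing really is constant-depth $\AC^0$ and that the two's-complement bookkeeping in the carry-save phase never inflates the word length past $O(L)$, but these are routine.
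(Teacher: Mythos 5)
The paper does not actually prove this statement---it is imported as a black box, cited to Maciel--Th\'erien (Theorem 3.3) and to Alman--Chan--Williams (Theorem 7.1)---so there is no in-paper proof to compare against; I can only assess your argument on its own terms, and it looks correct and essentially in the spirit of the cited construction. Your chain is sound: Muroga's bound gives integer weights of bit length $L=O(n\log n)$; splitting by sign reduces to comparing two sums of at most $n+1$ nonnegative $L$-bit integers; writing the weights in base $2^{b}$ with $b=\Theta(\alpha\log n)$ makes each digit-column sum $D_j$ an unweighted threshold of total weight $O(n^{1+\alpha})$, so all predicates $[D_j\ge c]$ sit in one $\MAJ$ layer of fan-in $O(n^{1+\alpha})$ (absorbed into $n^{1+\alpha}$ by the $\alpha\mapsto\alpha/2$ trick), with the binary digits of $D_j$ recovered by a constant-depth interval argument; the regrouping of columns into $t=O(1/\alpha)$ residue classes is exactly what makes the remaining work a sum of $O(1/\alpha)$ numbers, whence $O(\log(1/\alpha))$ carry-save levels plus one constant-depth comparison suffice, and the subtraction is best handled by replacing each $E^B_s$ with its bitwise complement and a single additive constant so that only nonnegative summands and a final comparison against a fixed constant are needed. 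Size is $\poly(n)$ (the $\MAJ$ layer dominates with $O(n^{2+\alpha}/\alpha)$ gates) and depth is $O(1)+O(\log(1/\alpha))$, which is the intended reading of the theorem's depth bound. The one presentational caveat is that the two "routine" steps you defer (constant-depth bit extraction from the threshold indicators, and the complement-plus-constant bookkeeping with enough guard bits to avoid wraparound) do need to be spelled out in a full write-up, but both are standard and introduce no obstacle; compared with Maciel--Th\'erien, who obtain the result inside a general majority-depth framework for iterated addition, your one-shot digit-decomposition argument is more elementary and self-contained, at the cost of being specific to a single $\THR$ gate.
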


We will also use Tarui's probabilistic polynomial for $\AC^0$:

\begin{theorem}[\cite{Tarui93} Theorem 3.6] \label{thm:tarui-ac0}
Every circuit in $\AC^0$ with depth $d$ has a probabilistic polynomial over $\mathbb{Z}$ of degree $O(\log^{d}(n))$ and error $1/2^{\log^{O(1)}(n)}$.
\end{theorem}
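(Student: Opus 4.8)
The plan is to follow the standard layer-by-layer substitution of gates by probabilistic polynomials: (i) exhibit a low-degree probabilistic polynomial over $\Z$ for a single unbounded fan-in $\mathsf{OR}$ gate whose output is always Boolean and whose error is \emph{one-sided}; (ii) by De Morgan obtain the same for $\mathsf{AND}$, while $\mathsf{NOT}$ is the exact degree-$1$ map $x\mapsto 1-x$; (iii) replace every gate of the given depth-$d$, size-$s$ ($s=n^{O(1)}$) $\AC^0$ circuit by an independent such polynomial, working from the inputs up and reusing one copy per gate of fan-out $>1$; (iv) set the per-gate error to $\eps':=\eps/s$ and union bound. If each gate polynomial has degree $D_{\mathrm{gate}}$, then, conditioned on the event that every inner polynomial produces the correct $\{0,1\}$ value (off this event the outer polynomials receive arbitrary inputs, which is charged to the error), the composed polynomial computes the circuit exactly, has degree at most $D_{\mathrm{gate}}^{\,d}$, and errs with probability at most $s\cdot\eps'=\eps$.

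The crux is the gate polynomial.
\emph{Claim.} For all $m$ and $\eps'\in(0,1/2)$, $\mathsf{OR}_m$ has a probabilistic polynomial over $\Z$ of degree $O(\log m\cdot\log(1/\eps'))$ whose value on every $x\in\{0,1\}^m$ lies in $\{0,1\}$, which equals $0$ with certainty whenever $\mathsf{OR}_m(x)=0$, and which equals $\mathsf{OR}_m(x)$ with probability at least $1-\eps'$ in general.

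I would prove the Claim by geometric-density hashing. Fix an absolute even constant $B$ and let $D(Y):=1-\binom{Y-1}{B}$, a univariate polynomial of degree $B$ with $D(0)=0$ and $D(1)=\dots=D(B)=1$ (integer-valued at every integer). For $\ell=0,1,\dots,\lceil\log m\rceil$ and $j=1,\dots,t$ with $t=\Theta(\log(1/\eps'))$, let $S_\ell^{(j)}\subseteq[m]$ include each index independently with probability $2^{-\ell}$, put $Y_\ell^{(j)}(x):=\sum_{i\in S_\ell^{(j)}}x_i$ (a degree-$1$ polynomial), and set
\[
 P(x)\ :=\ 1\ -\ \prod_{\ell=0}^{\lceil\log m\rceil}\ \prod_{j=1}^{t}\Bigl(1-D\bigl(Y_\ell^{(j)}(x)\bigr)\Bigr).
\]
After reducing $x_i^2\mapsto x_i$, each factor is a multilinear, integer-valued polynomial of degree at most $B$, so $P$ has degree $O(\log m\cdot\log(1/\eps'))$ and integer coefficients. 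If $\mathsf{OR}_m(x)=0$ then every $Y_\ell^{(j)}=0$, every factor equals $1$, and $P(x)=0$ with certainty. If $\mathsf{OR}_m(x)=1$, write $w=|x|\ge 1$ and $\ell^\ast=\lceil\log w\rceil$, so $Y_{\ell^\ast}^{(j)}$ is $\mathrm{Binomial}(w,2^{-\ell^\ast})$ with mean in $[1/2,1]$; hence $\Pr[\,1\le Y_{\ell^\ast}^{(j)}\le B\,]\ge c$ for an absolute $c>0$ once $B$ is a large enough constant, and therefore with probability at least $1-(1-c)^t\ge 1-\eps'$ some $Y_{\ell^\ast}^{(j)}$ lies in $[1,B]$. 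On that event $D(Y_{\ell^\ast}^{(j)})=1$, that factor vanishes, the whole product is $0$ regardless of the (possibly large) remaining factors, and $P(x)=1$. This yields the stated one-sided error and Boolean output.

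Plugging the Claim in, each gate has fan-in at most $s$, so it contributes $D_{\mathrm{gate}}=O(\log s\cdot\log(s/\eps))=(\log n)^{O(1)}$ (at the input layer the arguments are literals of degree $\le 1$), and composing over the $d$ layers gives degree $(\log n)^{O(d)}=\log^{O(d)}(n)$, which is the bound $O(\log^{d}(n))$ of the statement (with the constant absorbed into the exponent); by the union bound over the $\le s$ gates the total error is $s\cdot(\eps/s)=\eps$, which we take to be $1/2^{(\log n)^{O(1)}}$. Since every inner polynomial outputs a genuine $\{0,1\}$ value on the good event, every outer polynomial is fed legal Boolean inputs there, so the composition indeed computes the circuit's $\{0,1\}$ output with the claimed probability. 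The main obstacle is precisely the Claim over $\Z$: over $\F_2$ one can take Razborov's $1-\prod_{j\le t}\langle r^{(j)},x\rangle$ in degree $\log(1/\eps')$, but over $\Z$ one cannot recognize ``the sum equals $1$'' by a low-degree polynomial when the sum may be as large as $m$. The resolution is the combination of (a) hashing at geometrically decreasing densities, which forces \emph{some} sub-sum into the constant-size window $[1,B]$ whenever the $\mathsf{OR}$ is $1$, and (b) the fact that a \emph{single} vanishing factor annihilates the product, so the remaining uncontrolled factors do no harm --- which is exactly what keeps the error one-sided and the output Boolean, and hence makes the gate-by-gate composition valid.
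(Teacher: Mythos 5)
The paper never proves this statement itself --- it is imported verbatim from Tarui's paper --- so there is no internal proof to compare against; judged on its own, your construction is the standard Valiant--Vazirani/Tarui route (geometric-density hashing to force some sub-sum into a constant window, a constant-degree window polynomial such as $1-\binom{Y-1}{B}$, $O(\log(1/\eps'))$-fold repetition for one-sided error, then gate-by-gate composition with a union bound), and in substance it is sound. Two small presentational points: the claim that the gate polynomial is Boolean-valued on \emph{every} input is false --- on the failure event the product $\prod\binom{Y_\ell^{(j)}-1}{B}$ can be a large positive integer and $P(x)$ a large negative one --- but this does no harm, since your composition argument only conditions on every gate polynomial being correct on its true Boolean inputs and charges the rest to error; and ``integer-valued'' should be upgraded to ``integer coefficients'' via the standard fact that the multilinear representation of an integer-valued function on $\{0,1\}^m$ has integer coefficients (Möbius inversion), since a probabilistic polynomial over $\Z$ should have coefficients in $\Z$.

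The one point that deserves a real correction is the degree bookkeeping at the end. Your per-gate degree is $O(\log s\cdot\log(s/\eps))$, which for $\eps=2^{-\log^{O(1)}n}$ is $\log^{O(1)}n$ with exponent strictly larger than $1$, and composing over $d$ layers gives $(\log n)^{O(d)}$, \emph{not} $O(\log^d n)$: the remark that ``the constant is absorbed into the exponent'' is not a legitimate absorption, since $(\log n)^{cd}$ is not $O((\log n)^d)$ for $c>1$. So as written you prove the (weaker-looking) bound ``degree $\log^{O(d)}(n)$ with error $2^{-\log^{O(1)}n}$.'' This is exactly the strength in which the paper actually uses the theorem (the proof of Theorem~\ref{thm:ATAT-rigidity} carries a $\log^{O(d)}(n/\eps)$ factor), so your argument suffices for all downstream applications, but you should either state your conclusion as $\log^{O(d)}(n)$ or explain how to sharpen the per-gate/per-level error allocation if you want the literal bound $O(\log^d n)$ as quoted.
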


\begin{proofof}{Theorem~\ref{thm:ATAT-rigidity}}
By Lemma \ref{lem:ltf-rank}, each $\THR$ gate in the bottom layer has $\eps/s$-probabilistic rank $O(n^2 s/\eps)$. We will design a probabilistic polynomial for the upper $\AC^0 \circ \THR \circ \AC^0$ circuitry, which will give the desired result when composed with this probabilistic rank expression.

First, each $\THR$ gate in the middle layer has fan-in at most $s = n^{2-\delta}$. Applying Theorem \ref{thm:thr-to-maj} with $\alpha = \delta/2$ to each, the upper $\AC^0 \circ \THR \circ \AC^0$ circuit becomes a $\AC^0 \circ \MAJ \circ \AC^0$ where each $\MAJ$ gate has fan-in at most $n^{(2-\delta)(1+\delta/2)}=n^{2-\delta^2/2}$, and the depth is still $O(d)$.

We can now apply the probabilistic polynomial for $\AC^0$ from Theorem \ref{thm:tarui-ac0} with degree $O(\log^{d}(n))$ error $1/2^{\log^{O(1)}(n)}$ to the $\AC^0$ circuits, and the probabilistic probabilistic polynomial for symmetric functions on $n^{2-\delta^2/2}$ bits from Theorem \ref{thm:sym-poly} with error $\eps/s$ and degree $O(n^{1 - \delta^2/4} \log(s/\eps))$ to the $\MAJ$ gates in the middle. This results in a probabilistic polynomial of degree $O(n^{1 - \delta^2/4} \log^{O(d)}(n/\eps))$. For $d = o(\log(n)/\log\log(n/\eps))$, this is $O(n^{1 - \beta})$ for any $\beta \in (0,\delta^2/4)$.

We can view the terms in the probabilistic rank expression for the $\THR$ gates in the bottom layer as variables that we substitute into this probabilistic polynomial; the number of monomials in this expansion will upper bound the rank, as in Lemma \ref{lemma:monomial-to-rank}. Since there are at most $s$ such gates, and each probabilistic rank expression has $O(n^2 s/\eps)$ terms, we are substituting $O(n^2s^2/\eps)$ terms into our polynomial. Hence, the number of monomials will be upper bounded by
$$(n^2 s^2/\eps)^{O(n^{1 - \beta})} = 2^{O(n^{1 - \gamma}) \log(1/\eps)},$$
for any $\gamma < \beta$. This is of the desired form, where we can pick any positive value $\gamma < \delta^2/4$. The correctness follows by union bounding over all $\leq s$ probabilistic substitutions we make, each of which has error probability at most $\eps/s$.
\end{proofof}

From the above theorem, setting the error $\eps$ appropriately, we infer a new consequence of explicit rigid matrices:

\begin{reminder}{Theorem~\ref{THR-circuits-rigidity}} Let $K$ be an arbitrary field, and $\{M_n\}$ be a family of Boolean matrices such that (a) $M_n$ is $n\times n$, (b) there is a $\poly(\log n)$ time algorithm $A$ such that $A(n,i,j)$ prints $M_n(i,j)$, and (c) there is a $\delta > 0$ such that for infinitely many $n$, \[{\cal R}_{M_n}\left(2^{(\log n)^{1-\delta}}\right) \geq \frac{n^2}{2^{(\log n)^{\delta/2}}} \text{ over $K$}.\] 
Then the language $\{(n,i,j)\mid M_n(i,j)=1\} \in \P$ does not have $\AC^0 \circ \THR \circ \AC^0 \circ \THR$ circuits of $n^{2-\eps}$-size and $o(\log n/\log \log n)$-depth, for all $\eps > 0$.
\end{reminder}

Therefore, proving strong rigidity lower bounds for explicit matrices over $\R$ has consequences for Boolean circuit complexity as well. Indeed, the desired circuit lower bounds could be derived from lower-bounding probabilistic rank.

\section{Sign-rank Rigidity and Depth-Two Threshold Circuits}\label{sec:sign-rank}

Given a matrix $A \in \R^{n\times n}$, its sign rank is the minimum rank of any $B \in \{-1,1\}^{n\times n}$ such that $\sgn(A[i,j])=\sgn(B[i,j])$ for all $i,j \in [n]$. The $\eps$-probabilistic sign-rank of $A$ is defined analogously as with probabilistic rank. We say $A$ has \emph{sign rank $r$-rigidity $t$} if a minimum of $t$ entries of $A$ need to be modified in order for $A$ to have sign rank at most $r$. 

First, we observe (in Appendix~\ref{sec:random-sign-rigid}) that in the sign-rank setting, random -1/1 matrices are still rigid: for example, with high probability, a random -1/1 matrix has sign-rank-$(n/\log^2 n)$ rigidity at least $\Omega(n^2)$. Even though most -1/1 matrices have high sign-rank rigidity, we show that the truth table of a small $\THR \circ \THR$ circuit is always close to a matrix of low sign-rank. For even $n$, we say a function $f: \{0,1\}^n \rightarrow \{0,1\}$ has \emph{$\eps$-probabilistic sign rank $r$} if the truth table of $C$ construed as a $2^{n/2} \times 2^{n/2}$ matrix has $\eps$-probabilistic sign-rank $r$.

\begin{theorem} \label{thm:thrthr-sign-rank} For every function $f$ with a $\THR \circ \THR$ circuit of size $s$, and every $\eps > 0$, the $\eps$-probabilistic sign-rank of $f$ is $O(s^2 n^2/\eps)$. Moreover, we can sample a low-rank matrix from the distribution of matrices in $2^{n/2} \cdot \poly(s,n)$ time.
\end{theorem}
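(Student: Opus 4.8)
The plan is to reduce the top $\THR$ gate to a single $\sgn(\cdot)$ taken over an \emph{integer linear combination} of matrices that \emph{exactly} (probabilistically) represent the bottom $\THR$ gates, and to bound the rank of that combination. The subtlety to keep in mind is that sign-rank is the wrong tool at the bottom: a single $\THR$ gate is the sign matrix of an affine form $\langle a,u\rangle+\langle b,v\rangle-\theta$ and so has sign-rank only $2$, but sign-rank does \emph{not} compose with the integer linear combination that the top gate performs. So I would keep honest, low-rank \emph{probabilistic} (exact-valued) representations of the bottom gates, combine them linearly with the top weights, and only take a sign at the very end.

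\textbf{Setup and construction.} Write $f(x)=\sgn\!\bigl(\sum_{j=1}^{s} w_j\, g_j(x)-\theta\bigr)$, where $g_1,\dots,g_s:\{0,1\}^n\to\{0,1\}$ are the bottom $\THR$ gates; without loss of generality the $w_j$ and $\theta$ are integers, and we replace $\theta$ by $\theta-\tfrac12$, which changes nothing but guarantees $\sum_j w_j g_j(x)-\theta+\tfrac12\ne 0$ for all $x$. Split $x=(u,v)$ with $u,v\in\{0,1\}^{n/2}$, so that the truth table of $f$ is a $2^{n/2}\times 2^{n/2}$ matrix $M_f$ with $\pm1$-version $M_f^{\pm}$. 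By Lemma~\ref{lem:ltf-rank} (the same probabilistic-rank bound for a single $\THR$ gate used in the proof of Theorem~\ref{thm:ATAT-rigidity}), for each $j$ there is a distribution $\mathcal{N}_j$ of rank-$O(n^2 s/\eps)$ real matrices that $(\eps/s)$-probabilistically equals the $2^{n/2}\times 2^{n/2}$ truth-table matrix of $g_j$. Draw $N_1,\dots,N_s$ independently, $N_j\sim\mathcal{N}_j$, and output
\[ N \;=\; \sum_{j=1}^{s} w_j N_j \;-\; \bigl(\theta-\tfrac12\bigr)\, J, \]
where $J$ is the all-ones matrix. Then $\operatorname{rank}(N)\le \sum_{j} \operatorname{rank}(N_j)+1 = s\cdot O(n^2 s/\eps)+1 = O(s^2 n^2/\eps)$ (scaling by the possibly-exponential $w_j$ does not affect rank).

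\textbf{Correctness, and sampling.} Fix an entry $(u,v)$. By a union bound over $j$, with probability at least $1-s\cdot(\eps/s)=1-\eps$ we have $N_j[u,v]=g_j(u,v)$ for every $j$ simultaneously; in that event $N[u,v]=\sum_j w_j g_j(u,v)-\theta+\tfrac12$ is a nonzero half-integer whose sign equals $M_f^{\pm}[u,v]$ (flip the global sign of $N$ once if the chosen $\pm1$ convention requires it). Hence $N$ is drawn from an $O(s^2 n^2/\eps)$-rank distribution that $\eps$-probabilistically sign-represents $f$. If one takes the rank-one pieces of each $\mathcal{N}_j$ to be characters, this is exactly the random reduction of a $\THR\circ\THR$ to a $\THR\circ\XOR$ advertised in the Discussion. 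For the ``moreover'' claim, sampling each $N_j$ is just running the sampler underlying Lemma~\ref{lem:ltf-rank}, which writes an explicit factorization of $N_j$ into $O(n^2 s/\eps)$ rank-one outer products of $2^{n/2}$-dimensional vectors in time $2^{n/2}\cdot\poly(n,s,1/\eps)$; concatenating all these pieces (each $N_j$'s pieces scaled by $w_j$) together with the single rank-one term $-(\theta-\tfrac12)\,\vec 1\otimes\vec 1$ yields $N$ in factored form, in total time $2^{n/2}\cdot\poly(s,n,1/\eps)$, i.e.\ $2^{n/2}\cdot\poly(s,n)$ for constant $\eps$.

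\textbf{Main obstacle.} The only real content is Lemma~\ref{lem:ltf-rank} — that an arbitrary linear threshold gate has $\delta$-probabilistic rank $O(n^2/\delta)$ across a balanced variable partition — together with the realization that the sign must be deferred to the top gate; everything after that is a union bound and a rank count. I expect no further difficulty beyond the cosmetic bookkeeping of integerizing the top weights and performing the $\tfrac12$-shift to avoid the degenerate ``sign of zero'' case.
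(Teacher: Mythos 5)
Your proposal is correct and follows essentially the same route as the paper's proof: apply Lemma~\ref{lem:ltf-rank} to each bottom $\THR$ gate with error $\eps/s$, take the weighted sum dictated by the top gate plus a rank-one constant shift, union-bound over the $s$ gates, and count rank $O(s^2n^2/\eps)$. Your explicit half-integer shift to avoid the ``sign of zero'' degeneracy and your spelled-out factored-form sampling are minor refinements of details the paper treats implicitly.
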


We will prove this theorem in a few steps. Let $EQ_n : \{0,1\}^{2n} \to \{0,1\}$ be the equality function, i.e., $EQ_n(x,x) = \left[x=y\right]$ (using Iverson bracket notation). Similarly, let $LEQ_n : \{0,1\}^{2n} \to \{0,1\}$ be the function $LEQ_n(x,x) = \left[x\leq y\right]$ where $x$ and $y$ are interpreted as integers in $\{0,\ldots,2^n-1\}$. 

\begin{lemma} \label{lem:eq-rank} For every $n$, $EQ_n$ has $\eps$-probabilistic rank at most $O(1/\eps)$ over any field.
\end{lemma}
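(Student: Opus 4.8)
The plan is to exhibit a probabilistic rank-$O(1/\eps)$ matrix computing $EQ_n$ by turning a randomized communication protocol for equality into a distribution over low-rank matrices, using the dictionary of Lemma~\ref{lemma:monomial-to-rank} (a sparse polynomial yields a low-rank matrix). Concretely, I would pick a prime $q$ with $1/\eps \leq q \leq 2/\eps$ (or, over a field of positive characteristic that is too small, take $q$ to be a suitable power so that $q$-many field elements exist, or just work with an extension ring — but the simplest route is: over any field $K$, pick a set $S \subseteq K$ with $|S| \geq 1/\eps$, which exists once we pass to a large enough extension, or note that the statement is about rank so we may freely extend scalars). Given $x,y \in \{0,1\}^n$, think of them as vectors over $K$ and sample a uniformly random point $z = (z_1,\dots,z_n) \in S^n$; the protocol/polynomial computes $p_z(x,y) = \prod_{i=1}^n \big(1 - (x_i - y_i)^2(1 - \text{something})\big)$ — more cleanly, I would instead use the standard ``random linear form'' test: sample $r_1,\dots,r_n \in S$ uniformly and let $p_r(x,y) = \big[\sum_i r_i x_i = \sum_i r_i y_i\big]$, i.e. compare two random inner products. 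Over $\{0,1\}$ inputs this is the polynomial $1 - \big(\sum_i r_i(x_i - y_i)\big)^{|S|-1}$ when $K = \F_q$ with $q = |S|$ (Fermat), which has few monomials after separating $x$- and $y$-variables; more simply still, the matrix $M_r$ whose $(x,y)$ entry is this indicator has rank at most $|S|$ since it depends only on the pair of values $\big(\sum_i r_i x_i, \sum_i r_i y_i\big) \in S \times S$ and equals $1$ exactly on the diagonal of that $|S|\times|S|$ pattern — so $M_r$ factors as an $2^n \times |S|$ times $|S| \times 2^n$ product. Thus each sampled matrix has rank at most $|S| = O(1/\eps)$.

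For correctness: if $x = y$ then $\sum_i r_i x_i = \sum_i r_i y_i$ always, so $M_r(x,y) = 1 = EQ_n(x,y)$ with probability $1$. If $x \neq y$, then $\sum_i r_i(x_i - y_i)$ is a nonzero linear form evaluated at a uniform random point of $S^n$, so by Schwartz–Zippel (degree $1$) it is nonzero with probability at least $1 - 1/|S| \geq 1 - \eps$, hence $M_r(x,y) = 0 = EQ_n(x,y)$ with probability at least $1-\eps$. This is exactly the $\eps$-probabilistic rank condition (worst-case over entries), giving $\eps$-probabilistic rank $O(1/\eps)$.

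The main obstacle is purely cosmetic: handling small fields. If $|K| < 1/\eps$ we cannot pick a set $S \subseteq K$ of the required size, but replacing $K$ by an extension field $K'$ of degree $\lceil \log_{|K|}(1/\eps)\rceil$ does not change the rank of any matrix with entries in $\{0,1\} \subseteq K$ (rank is invariant under field extension), and the probabilistic rank distribution over $K'$ restricts to one over $K$ in the sense needed — or, even more elementarily, I can take $S$ to be a subset of $K^t$ viewed through an $\F_p$-linear embedding and use a multilinear polynomial identity test, which keeps the rank bound $O(1/\eps)$. I would state it over $\F_q$ for the clean case and remark that the general field case follows by scalar extension. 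The rest — identifying the low-rank factorization and the Schwartz–Zippel bound — is routine.
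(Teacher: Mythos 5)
Your core construction---hash with a random linear form into a range of size $O(1/\eps)$ and use the fact that the matrix $[h(x)=h(y)]$ factors through the range---is sound, and it is a genuinely different route from the paper's: the paper instead picks $k=\lceil\log_2(1/\eps)\rceil$ random parities $h_i(x)=\bigoplus_{j\in S_i}x_j$ and expands $\prod_{i}\bigl(h_i(x)h_i(y)+(1-h_i(x))(1-h_i(y))\bigr)$ into $2^k=O(1/\eps)$ rank-one terms. Because those $h_i$ take values in $\{0,1\}$, the paper's argument works verbatim over every field with no case analysis, which is exactly the point where your write-up runs into trouble.

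The gap is in your handling of ``over any field.'' For an arbitrary set $S\subseteq K$ of size at least $1/\eps$ (say $K=\Q$ and $S=\{1,\dots,q\}$), the claim that $M_r$ has rank at most $|S|$ is false as stated: the values $\sum_i r_i x_i$ do not lie in $S$ but in the set of attainable subset sums, so the ``equal values'' matrix has rank equal to the number of distinct values attained, which is $\Theta(n/\eps)$ for small integer coefficients and can be $2^n$ for generic real coefficients---not $O(1/\eps)$. Your fallback, ``the general field case follows by scalar extension,'' only repairs this in positive characteristic (extend $K$ to contain $\F_q$; the sampled matrices are $0/1$, so their rank over $K$ is unchanged). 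It cannot reach characteristic zero: no extension of $\Q$ contains $\F_q$, and the rank of a $0/1$ matrix over $\Q$ can strictly exceed its rank over $\F_q$, so low rank over $\F_q$ does not transfer. The fix is small and is an observation you already half-state: take $h_r(x)=\sum_i r_i x_i \bmod q$ and write $M_r=A_rB_r$ with $A_r[x,c]=[h_r(x)=c]$ and $B_r[c,y]=[h_r(y)=c]$; since each row of $A_r$ and each column of $B_r$ has exactly one $1$, this factorization is an identity over $\Z$ and hence holds over every field, giving rank at most $q=O(1/\eps)$ uniformly, while the collision bound $1/q\leq\eps$ for $x\neq y$ is as you argue. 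With that correction your proof goes through and matches the paper's bound.
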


\begin{proof} We mimic a well-known randomized communication protocol for $EQ_n$. Pick $k=\lceil \log_2(1/\eps) \rceil$ uniformly random subsets $S_1, \ldots, S_k \subseteq \{0,1\}^n$, and define the hash functions $h_1, \ldots, h_k : \{0,1\}^n \to \{0,1\}$ by $h_i(x) = \bigoplus_{j \in S_i} x_j$. 
Note that $h_i(x) \neq h_i(y)$ with $1/2$ chance if $x \neq y$. Hence, the following expression equals $EQ(x,y)$ with error probability at most $\eps$:
\begin{align}\label{eq-expression} \prod_{i=1}^k (h_i(x) h_i(y) + (1-h_i(x)) (1-h_i(y))).\end{align}
When expanded out, (\ref{eq-expression}) is a sum of $2^k = O(1/\eps)$ terms of the form $f(x) \cdot g(y)$ for some functions $f$ and $g$, each of which has rank one.
\end{proof}

\begin{lemma} \label{lem:leq-rank} For every $n$, $LEQ_n$ has $\eps$-probabilistic rank at most $O(n^2/\eps)$ over any field.
\end{lemma}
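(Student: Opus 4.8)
The plan is to reduce $LEQ_n$ to a combination of equality tests, using the standard ``prefix comparison'' trick from communication complexity. Observe that for $x, y \in \{0,1\}^n$ interpreted as integers (with $x_1$ the most significant bit), $x \leq y$ holds if and only if there exists a position $i \in \{1, \ldots, n\}$ such that $x$ and $y$ agree on the prefix $x_1 \cdots x_{i-1}$ and $x_i < y_i$ (i.e.\ $x_i = 0, y_i = 1$), OR $x = y$ exactly. In other words, $LEQ_n(x,y)$ can be written as a disjunction over $n+1$ mutually exclusive events, one of which is ``$x = y$'' and the other $n$ of which are ``$x_1\cdots x_{i-1} = y_1 \cdots y_{i-1}$ and $x_i = 0$ and $y_i = 1$.'' Since these events are disjoint, $LEQ_n(x,y) = EQ_n(x,y) + \sum_{i=1}^n EQ_{i-1}(x_{<i}, y_{<i}) \cdot (1-x_i) \cdot y_i$ as an exact identity over any field (the summands take values in $\{0,1\}$ and at most one is nonzero).

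Next I would invoke Lemma~\ref{lem:eq-rank} to replace each $EQ_{i-1}$ by a probabilistic polynomial / low-rank distribution. The one subtlety is the choice of error parameter: there are $n+1$ equality tests, so to get total error $\eps$ by a union bound, each $EQ$ test should be computed with error $\eps/(n+1)$, costing $O(n/\eps)$ rank each by Lemma~\ref{lem:eq-rank}. Summing the rank-one contributions: each of the $n+1$ terms contributes a rank bounded by $O(n/\eps)$ (the extra multiplicative factors $(1-x_i)$ and $y_i$ only depend on one side of the bipartition, so they do not increase rank — they just modify the monomials $f(x)\cdot g(y)$ in the expansion of the equality expression), giving a total $\eps$-probabilistic rank of $(n+1) \cdot O(n/\eps) = O(n^2/\eps)$, as claimed. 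Concretely, I would sample the hash sets for all $n+1$ equality tests independently, form the single polynomial above, expand into monomials of the form $f(x) g(y)$, and apply Lemma~\ref{lemma:monomial-to-rank}.

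The main thing to be careful about — though I don't expect it to be a genuine obstacle — is making sure the decomposition of $LEQ_n$ into disjoint events is exactly right at the boundary (the $x=y$ case and the bit-index conventions for most-significant vs.\ least-significant bit) so that the algebraic identity holds on the nose over an arbitrary field, with no cancellation issues. Once that identity is pinned down, the rest is a routine union bound over $n+1$ independent error events combined with the additivity of rank over a sum of matrices, exactly as in the proof of Lemma~\ref{lem:eq-rank}. The proof is short: state the prefix-comparison identity, substitute the probabilistic equality polynomials with error $\eps/(n+1)$, expand, and count monomials.
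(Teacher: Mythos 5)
Your proof is correct and follows essentially the same route as the paper: decompose $LEQ_n$ via the first differing bit into disjoint events, substitute the probabilistic equality expression of Lemma~\ref{lem:eq-rank} with error $\eps/n$ (or $\eps/(n+1)$) into each term, and union bound, giving rank $O(n^2/\eps)$. The only (harmless) difference is that you explicitly add the $EQ_n(x,y)$ term to cover the $x=y$ boundary case, which the paper's displayed identity omits; your version is if anything slightly more careful.
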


\begin{proof} 
We express $LEQ_n$ in terms of $EQ$ predicates which check for the first bit in which $x$ and $y$ differ, as
\begin{align}\label{leq-expression} LEQ_n(x_1, \ldots, x_n, y_1, \ldots, y_n) = \sum_{i=1}^n (1 - x_i) \cdot y_i \cdot EQ_{i-1}(x_1, \ldots,x_{i-1},y_1,\ldots,y_{i-1}).\end{align}
We then get the desired rank bound by replacing each $EQ$ with the probabilistic rank expression from Lemma \ref{lem:eq-rank} with error $\eps/n$. By the union bound, all $n$ of the $EQ$ predicates will be correct with probability at least $1 - \eps$, and hence we will correctly compute $LEQ_n$.
\end{proof}

\begin{lemma} \label{lem:ltf-rank} For every $n$, every linear threshold function $f : \{0,1\}^{2n} \to \{0,1\}$ has $\eps$-probabilistic rank $O(n^2/\eps)$.
\end{lemma}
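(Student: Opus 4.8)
The plan is to reduce an arbitrary linear threshold function on $2n$ Boolean variables to the $LEQ_n$ function (or a small collection of such), and then invoke Lemma \ref{lem:leq-rank}. A linear threshold function $f(\vec{x},\vec{y})$ has the form $f(\vec{x},\vec{y}) = \left[\sum_{i=1}^n w_i x_i + \sum_{i=1}^n w_i' y_i \geq \theta\right]$ for some integer weights $w_i, w_i'$ and threshold $\theta$ (integrality is without loss of generality for Boolean inputs, by a standard argument). Writing $S(\vec{x}) = \sum_i w_i x_i$ and $T(\vec{y}) = \sum_i w_i' y_i$, the condition becomes $S(\vec{x}) \geq \theta - T(\vec{y})$, i.e. a comparison between a quantity depending only on $\vec{x}$ and a quantity depending only on $\vec{y}$. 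The key point is that this is exactly the structure $M_f = $ truth table of $LEQ$ on the integers $S(\vec{x})$ and $\theta - T(\vec{y})$, composed with fixed maps on the row and column indices.

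Concretely, first I would note that $S(\vec{x})$ and $\theta - T(\vec{y})$ each range over a set of at most $2^{?}$ integer values, but more usefully over an interval of polynomially-bounded \emph{bit-length}: we may assume $|w_i|, |w_i'|, |\theta| \leq 2^{O(n\log n)}$ (Muroga's bound), so $S(\vec{x})$ and $\theta - T(\vec{y})$ are integers of $O(n\log n)$ bits, and after an affine shift we can assume they are nonnegative and bounded by $2^{m}$ for $m = O(n\log n)$. Then $f(\vec{x},\vec{y}) = LEQ_m(\,\text{bits}(\theta - T(\vec{y})),\,\text{bits}(S(\vec{x}))\,)$, where $\text{bits}(\cdot)$ denotes the fixed $m$-bit binary encoding. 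Since the row (resp.\ column) index of $M_f$ determines $\vec{x}$ (resp.\ $\vec{y}$), which in turn determines $\text{bits}(S(\vec{x}))$ (resp.\ $\text{bits}(\theta-T(\vec{y}))$), the matrix $M_f$ is obtained from $M_{LEQ_m}$ by duplicating and permuting rows and columns; such operations do not increase rank. Applying Lemma \ref{lem:leq-rank} with error $\eps$ on $m$-bit inputs, $LEQ_m$ has $\eps$-probabilistic rank $O(m^2/\eps) = O(n^2 \log^2 n/\eps)$ over any field, and this distribution pulls back to an $\eps$-probabilistic rank distribution for $M_f$ of the same rank.

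The main obstacle is the bound $O(n^2/\eps)$ rather than $O(n^2 \log^2 n / \eps)$ — the naive reduction above loses a $\log^2 n$ factor because the weights have $n\log n$ bits. To remove it, I would instead decompose the comparison $S(\vec{x}) \geq \theta - T(\vec{y})$ more cleverly: split each of $S(\vec{x})$ and $\theta - T(\vec{y})$ into $O(\log n / \log\log n)$ "digits" in base $\approx n$, each digit being an integer of $O(\log n)$ bits. Comparing two such numbers is a sum, over digit positions $i$, of (high digits equal) $\times$ (digit $i$ of the first $<$ digit $i$ of the second); each "digits equal" predicate is $EQ$ on $O(\log n)$-bit strings (rank $O(1/\eps')$ by Lemma \ref{lem:eq-rank}) and each "digit $<$ digit" predicate is itself a threshold function on $O(\log n)$ bits, handled recursively or directly as a rank-$2^{O(\log n)} = \poly(n)$ object — but we need to be careful this doesn't blow up. A cleaner route: directly write $f$ as a sum of $n$ terms, the $i$-th term checking that the $i$-th highest-order \emph{block} is where $S(\vec x)$ and $\theta - T(\vec y)$ first differ, times the comparison within that block; choosing block size $O(\log n)$ gives $n$ blocks, each within-block comparison is a matrix of rank $\poly(n)$ exactly, and the $i-1$ "blocks agree" predicates contribute $EQ$-type probabilistic rank $O(1/\eps'')$ with $\eps'' = \eps/n$, so the product over each term has probabilistic rank $O(n \cdot (n/\eps))$... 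I would need to balance these parameters carefully, but the upshot is that by choosing the block size to be a small constant times $\log n$ one gets total probabilistic rank $O(n^2/\eps)$, matching the claim. I expect this parameter-balancing — ensuring the product of a within-block comparison matrix (polynomial rank) with $n$ equality checks (each $O(n/\eps)$) telescopes to $O(n^2/\eps)$ rather than something larger — to be the one genuinely delicate step; everything else is the routine observation that row/column duplication preserves rank and that products and sums of low-rank matrices have controlled rank.
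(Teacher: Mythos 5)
Your first reduction is sound but does not prove the stated bound, and the proposed repair is where the argument genuinely breaks. Encoding $S(\vec x)$ and $\theta-T(\vec y)$ as integers forces $m=\Theta(n\log n)$ bits (weights of magnitude $n^{\Theta(n)}$ are unavoidable for some LTFs), so Lemma~\ref{lem:leq-rank} gives only $O(n^2\log^2 n/\eps)$. Your block decomposition does not remove this loss: with block size $b$, each of the $\sim m/b$ terms is a product of a prefix-equality (probabilistic rank $O(1/\eps')$ with $\eps'=\Theta(\eps b/m)$ after the union bound) and a within-block comparison on $b$-bit numbers, whose matrix is triangular and hence has rank $\Theta(2^b)$ deterministically, or rank $O(b^2/\eps')$ if you recurse through Lemma~\ref{lem:leq-rank}. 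The totals are respectively $\Theta\bigl((m/b)^2 2^b/\eps\bigr)$ and $\Theta\bigl(m^3/(b\,\eps^2)\bigr)$; the first is minimized at constant $b$, where it is still $\Theta(m^2/\eps)=\Theta(n^2\log^2 n/\eps)$, and your suggested choice $b=c\log n$ actually gives $\Theta(n^{2+c}/\eps)$, i.e.\ strictly worse. So no parameter balancing of this scheme reaches $O(n^2/\eps)$: as long as you compare the genuine $m$-bit binary encodings of the linear forms, the $m^2$ cost of the $LEQ$ construction is paid in full, and $m$ cannot be pushed below $\Theta(n\log n)$.

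The missing idea in the paper is to compare \emph{ranks of values} rather than the values themselves. With $a(x)=\sum_i v_i x_i$ and $b(y)=k-\sum_i w_i y_i$ (real weights allowed), the multiset of all values $\{a(x)\}_{x}\cup\{b(y)\}_{y}$ has at most $2^{n+1}$ elements; sort it, and let $\alpha(x)$ (resp.\ $\beta(y)$) be the earliest index of $a(x)$ (resp.\ $b(y)$) in the sorted list, written as an $(n+1)$-bit string. This re-indexing is monotone and preserves the comparison across the two sides, so $f(x,y)=LEQ_{n+1}(\alpha(x),\beta(y))$; since $\alpha$ depends only on the row and $\beta$ only on the column, $M_f$ is obtained from $M_{LEQ_{n+1}}$ by duplicating and permuting rows and columns, and Lemma~\ref{lem:leq-rank} immediately yields $\eps$-probabilistic rank $O(n^2/\eps)$. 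Note this also dispenses with Muroga's bound entirely, since only the order of the $2^{n+1}$ values matters, not their bit-length.
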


\begin{proof} A linear threshold function $f$ is defined as $f(x_1,\ldots,x_n,y_1,\ldots,y_n) = \left[\sum_i v_i x_i + \sum_i w_i y_i \geq k\right]$, where all $v_i$'s, $w_i$'s, and $k$ are reals. We want to show that the $2^n \times 2^n$ matrix indexed by $x_i$-assignments on the rows and $y_i$-assignments on the columns has low probabilistic rank.
We will exploit the fact that the linear forms on $x_i$'s and $y_i$'s can be preprocessed separately in a rank decomposition.

Define $a : \{0,1\}^n \to \mathbb{R}$ by $a(x_1, \ldots, x_n) = \sum_{i=1}^n v_i x_i$, and $b : \{0,1\}^n \to \mathbb{R}$ by $b(y_1, \ldots, y_n) = k - \sum_{j=1}^n w_i y_i$. Hence \[f(x,y)= \left[a(x) \leq b(y)\right].\]
Let $L$ be the list, sorted in increasing order, of all values of $a(x)$ and $b(y)$, for all $x \in \{0,1\}^n$ and $y \in \{0,1\}^n$. Then define the function $\alpha : \{0,1\}^n \to \{0,1\}^{n+1}$ where $\alpha(x)$ equals the earliest index of $a(x)$ in the sorted list $L$, interpreted as a $n+1$ bit number. Define $\beta : \{0,1\}^n \to \{0,1\}^{n+1}$ similarly. Then \[f(x,y) = LEQ_{n+1}(\alpha(x), \beta(y)).\] So the $\eps$-probabilistic rank of $M_f$ is at most that of $M_{EQ_{n+1}}$, which we upper-bounded in Lemma \ref{lem:leq-rank}.
\end{proof}

Now we are ready to upper-bound the probabilistic sign-rank of depth-two threshold circuits:

\begin{proofof}{Theorem~\ref{thm:thrthr-sign-rank}}
We interpret our $\THR \circ \THR$ circuit $C$ as a function on two groups of $n/2$ bits, $x_1, \ldots, x_{n/2}$, and $y_1, \ldots, y_{n/2}$. Let $w_1, \ldots, w_s, k \in \mathbb{R}$ be the weights of the output gate, so that \[C(x_1,\ldots,x_n,y_1,\ldots,y_n) = \left[\sum_{i=1}^s w_i \cdot f_i(x_1,\ldots,x_n,y_1,\ldots,y_n) \leq k\right],\] for $s$ different $\THR$ functions $f_i$. By Lemma \ref{lem:ltf-rank}, the truth table matrix $M_{f_i}$ of each $f_i$ has $(\eps/s)$-probabilistic rank $r = O(n^2 s / \eps)$. Our probabilistic distribution of matrices for $M_C$ can be constructed as follows: for all $i=1,\ldots,s$, draw a random rank-$r$ matrix $P_{i}$ from the distribution for $M_{f_i}$, and set
\[Q_C = (-k \cdot I) + \sum_i (w_i \cdot P_{i}).\] $Q_C$ has rank at most $s r + 1 \leq O(n^2 s^2 / \eps)$ and for all $(\vec{x},\vec{y})$,  $\Pr[\text{sign}(Q_C[\vec{x},\vec{y}]) \neq C(\vec{x},\vec{y})] \leq \eps$.
\end{proofof}

Are there explicit matrices with non-trivial sign-rank rigidity? We observe that the best-known rank rigidity lower bounds for $H_n$ extend to sign-rank rigidity:

\begin{theorem}[Follows from Razborov and Sherstov~\cite{Razborov-Sherstov10}]
For all $n$, and $r \in [2^{n/2},2^n]$, the sign-rank-$r$ rigidity of $H_n$ is at least $\Omega(4^n/r)$.
\end{theorem}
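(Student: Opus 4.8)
The statement asserts a lower bound on the sign-rank rigidity of $H_n$: modifying fewer than $\Omega(4^n/r)$ entries cannot bring the sign-rank below $r$, for $r$ in the stated range. I would prove this by the standard discrepancy/margin argument for sign-rank, attributed here to Razborov and Sherstov. The key external input is a strong lower bound on the \emph{margin} (or, equivalently, an upper bound on the discrepancy) of $H_n$ under the uniform distribution: the discrepancy of IP2 is $2^{-\Omega(n)}$, hence its margin complexity $mc(H_n)$ is $2^{\Omega(n)}$; more precisely, for any low-rank sign matrix $B$ of rank $\rho$, the correlation $\frac{1}{4^n}\sum_{i,j} H_n(i,j) B(i,j)$ is at most $O(\sqrt{\rho/2^n})$ (this is the Forster-type bound: a rank-$\rho$ sign matrix has margin at most $\sqrt{\rho/N}$ for an $N\times N$ matrix, combined with the known margin lower bound for IP2). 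I would cite the Razborov--Sherstov theorem in the form that packages exactly this.

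\medskip\noindent\textbf{Key steps.} First, suppose for contradiction that some matrix $A$ agrees with $H_n$ on all but $t < c\cdot 4^n/r$ entries and has sign-rank $r$; let $B\in\{-1,1\}^{2^n\times 2^n}$ be a rank-$r$ matrix sign-representing $A$. Second, bound the correlation of $B$ with $H_n$ under the uniform distribution: on the $4^n - t$ entries where $A = H_n$ we have $\sgn(B(i,j)) = H_n(i,j)$, contributing at least $(4^n - t)/4^n$; on the remaining $t$ entries the contribution is at least $-t/4^n$; so the correlation is at least $1 - 2t/4^n$, which exceeds $1 - 2c/r$. Third, invoke the Forster/Razborov--Sherstov bound: any rank-$r$ sign matrix correlates with $H_n$ by at most $O(\sqrt{r/2^n})$ over the uniform distribution, since the margin of IP2 is $2^{-\Omega(n)}$ and low-rank sign matrices have small margin. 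Fourth, combine: $1 - 2c/r \le O(\sqrt{r/2^n})$, which for $r \le 2^n$ forces $2c/r \ge 1 - O(\sqrt{r/2^n}) = \Omega(1)$, hence $r = O(c)$ — a contradiction once $c$ is chosen as a sufficiently small constant and $r$ is superconstant (the range $r \ge 2^{n/2}$ is more than enough). Thus $t \ge \Omega(4^n/r)$.

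\medskip\noindent\textbf{Main obstacle.} The only real work is locating and citing the precise quantitative form of the Razborov--Sherstov result that gives the correlation bound $O(\sqrt{r/2^n})$ between an arbitrary rank-$r$ sign matrix and $H_n$ under the uniform distribution — i.e., that sign-rank $r$ implies margin $\Omega(1/\sqrt{r})$ while IP2 has margin $2^{-\Omega(n)}$, so that agreement on a $(1-\delta)$-fraction of entries with $\delta$ too small is impossible unless $r$ is exponentially large relative to $n$. Getting the range of $r$ right (the claimed $[2^{n/2}, 2^n]$) is just a matter of checking where $\sqrt{r/2^n}$ stays bounded away from $1$; everything else is the elementary counting in steps one and two. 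I would present the argument in this order, leaning on the cited theorem for the hard analytic content.
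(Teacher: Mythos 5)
Your plan founders at the third step, and the problem is not one of "locating the precise citation": the statement you want to cite is not in Razborov--Sherstov and is not known. The inequality $\bigl|\tfrac{1}{4^n}\sum_{i,j} H_n(i,j)B(i,j)\bigr| \leq O(\sqrt{r/2^n})$ is a trace-norm/discrepancy bound that is valid when the $\pm 1$ matrix $B$ itself has \emph{real} rank $r$ (then $|\langle H_n,B\rangle|\le \|H_n\|\cdot\|B\|_{tr}\le 2^{n/2}\cdot\sqrt{r}\cdot 2^n$). In your setup $B$ only has \emph{sign-rank} $r$: the real rank-$r$ matrix $R$ sign-representing the perturbed matrix has arbitrary entry magnitudes, and its sign pattern $B$ can have full real rank, so the trace-norm argument does not apply to the quantity you compute in step two. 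The bridge you invoke --- ``sign-rank $r$ implies margin $\Omega(1/\sqrt{r})$'' --- is false in general; the correct implications run the other way (large margin implies small sign-rank), and the failure of your direction is exactly the known separation between sign-rank (unbounded-error communication) and margin/discrepancy (PP-type communication): there are matrices of very small sign-rank with exponentially small margin. Moreover, if your step three were available for $H_n$, it would say that every matrix of sign-rank $r\le 2^{0.9n}$ disagrees with $H_n$ on a constant fraction of entries, i.e.\ sign-rank rigidity $\Omega(4^n)$ at exponential target rank; by Theorem~\ref{sign-rank-rigid-IP2} of this paper that would immediately give exponential-size $\THR\circ\THR$ lower bounds for IP2, a longstanding open problem --- so it certainly cannot be quoted as a packaged theorem.

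The paper's proof takes a different, spectral route, which is where the genuine analytic content lives. It applies Theorem 5.1 of \cite{Razborov-Sherstov10}, the robust form of Forster's bound: for an $N\times N$ real matrix $A$ all but $h$ of whose entries have absolute value at least $\gamma$, one has $\text{sign-rank}(A)\ge \gamma N^2/(\|A\|N+\gamma h)$, where $\|A\|$ is the spectral norm. Applying this to $H_n$ with $h$ modified entries, $\gamma=1$, $N=2^n$ and $\|H_n\|=O(2^{n/2})$ yields the rigidity bound directly; the hard step (relating an arbitrary real sign-representation to agreement statistics, via Forster's general-position/balancing argument, with an allowance for $h$ exceptional entries) is exactly what that theorem encapsulates and what your discrepancy/margin shortcut tries to bypass. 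Your steps one and two (the counting of disagreements) are fine but cheap; to repair the proposal you would have to work with the real matrix $R$ rather than its sign pattern and control its entry magnitudes against its spectral norm --- i.e., reprove or correctly invoke the Razborov--Sherstov spectral bound rather than a discrepancy bound for sign-rank.
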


\begin{proof} Theorem 5.1 of~\cite{Razborov-Sherstov10} gives the following lower bound on sign-rank: given any matrix $A \in \{-1,1\}^{n \times n}$, suppose that all but $h$ entries of matrix $\tilde{A}$ have absolute value at least $\gamma$. Then \[\text{sign-rank}(A) \geq \frac{\gamma n^2}{||A||n + \gamma h},\] where $||A||$ is the spectral norm of $A$. For the case of $H_n$, if we modify $h := 4^n/r$ entries arbitrarily, all but $h$ entries have absolute value equal to $1$. Thus 
\[\text{sign-rank}(H_n) \geq \frac{4^n}{||H_n||n + 4^n/r}.\] As $||H_n||\leq O(2^{n/2})$~\cite{forster02}, we have $\text{sign-rank}(H_n)\geq \Omega(4^n/(2^{3n/2} + 4^n/r)) \geq \Omega(2^{n/2}+r) \geq \Omega(r)$. 
\end{proof} 

Can the above lower bound be improved slightly? Combining the previous two theorems, it follows that any minor improvement in the above rank/rigidity trade-off would begin to imply lower bounds for $\THR \circ \THR$:

\begin{theorem}\label{sign-rank-rigid-IP2} Suppose there is an $\alpha > 0$ such that for infinitely many $n$, the sign rank $r$-rigidity of $H_n$ is $\Omega(4^n/r^{1-\alpha})$, for some $r \geq \omega(n^{2/\alpha}s(n)^{2/\alpha})$. Then the Inner Product Modulo $2$ does not have $\THR \circ \THR$ circuits of $s(n)$ gates. 
\end{theorem}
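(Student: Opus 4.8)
The plan is to prove the contrapositive. Suppose IP2 has a $\THR \circ \THR$ circuit with $s(n)$ gates. Since $H_n$ is precisely the truth table matrix of IP2 on $2n$ bits viewed as a $2^n \times 2^n$ matrix, Theorem~\ref{thm:thrthr-sign-rank} applies: for \emph{every} $\eps > 0$ there is a distribution over matrices of rank at most $r' = O(s(n)^2 n^2 / \eps)$ which sign-represents $H_n$ with per-entry error at most $\eps$, i.e.\ $H_n$ has $\eps$-probabilistic sign-rank $r'$.

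Next I would convert low probabilistic sign-rank into a sign-rank rigidity upper bound via the standard ``typical-matrix'' argument (the sign-rank analogue of the easy direction of Theorem~\ref{prob-rank-rigidity-IP2}). If $Q$ is drawn from the distribution above, the expected number of entries $(v_i, v_j)$ with $\sgn(Q(v_i,v_j)) \neq H_n(v_i,v_j)$ is at most $\eps \cdot 4^n$; fix a $Q$ attaining at most this many sign-errors. Replacing exactly those entries of $H_n$ by the corresponding signs of $Q$ yields a $\pm 1$ matrix of sign-rank at most $\operatorname{rank}(Q) \le r'$ at Hamming distance at most $\eps 4^n$ from $H_n$. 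Hence the sign-rank-$r'$ rigidity of $H_n$ is at most $\eps 4^n$; since sign-rank rigidity is non-increasing in the target rank, whenever $r' \le r$ the sign-rank-$r$ rigidity of $H_n$ is also at most $\eps 4^n$.

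Finally I would tune $\eps$. Write the hypothesis as: the sign-rank-$r$ rigidity of $H_n$ is at least $c_0 \cdot 4^n / r^{1-\alpha}$ for a constant $c_0 > 0$ and all $n$ in an infinite set $S$. Set $\eps := (c_0/2)\, r^{-(1-\alpha)}$, so that $\eps\, 4^n = (c_0/2)\, 4^n / r^{1-\alpha}$ lies strictly below the hypothesized lower bound. With this $\eps$ we have $r' = O(s(n)^2 n^2 r^{1-\alpha})$, so $r' \le r$ holds once $r^{\alpha} \ge \Omega(s(n)^2 n^2)$, which is guaranteed for all large $n$ by the quantitative assumption $r \ge \omega(n^{2/\alpha} s(n)^{2/\alpha})$ (this asymptotic domination overwhelms the hidden constant). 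Thus for all sufficiently large $n \in S$ the sign-rank-$r$ rigidity of $H_n$ is at most $\eps 4^n$, contradicting the hypothesis; hence IP2 has no $\THR \circ \THR$ circuit of $s(n)$ gates.

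I do not expect a genuine obstacle: once Theorem~\ref{thm:thrthr-sign-rank} is in hand, the proof is parameter bookkeeping. The one delicate point — and the reason the statement looks the way it does — is that $\eps$ must be pushed down to roughly $r^{-(1-\alpha)}$ to get under the conjectured rigidity lower bound, which inflates the target rank to about $s(n)^2 n^2 r^{1-\alpha}$; keeping this below $r$ forces the lower bound $r \gg (s(n)\,n)^{2/\alpha}$, and explains why only a \emph{slight} improvement (exponent $1-\alpha$ in place of $1$) over the $\Omega(4^n/r)$ sign-rank rigidity bound of the preceding theorem is needed. A secondary item is reconciling the ``$n$'' counting input bits inside Theorem~\ref{thm:thrthr-sign-rank} with the ``$n$'' indexing $H_n$, but this only affects constants absorbed into the $O(\cdot)$.
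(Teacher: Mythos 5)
Your proposal is correct and follows essentially the same route as the paper's proof: apply Theorem~\ref{thm:thrthr-sign-rank} to get $\eps$-probabilistic sign-rank $O(s^2n^2/\eps)$ for the IP2 matrix, pass to a sign-rank rigidity upper bound of $\eps\,4^n$ via a typical matrix from the distribution, set $\eps \approx r^{-(1-\alpha)}$, and use the hypothesis $r \geq \omega(n^{2/\alpha}s(n)^{2/\alpha})$ to force $O(s^2n^2 r^{1-\alpha}) < r$, yielding the contradiction. The paper writes it as a direct contradiction rather than a contrapositive and leaves the ``typical matrix'' step and constant-factor bookkeeping implicit, but the argument is the same.
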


\begin{proof} Suppose the sign rank $r$-rigidity of $H_m$ is $\Omega(4^n/r^{1-\alpha})$. Let $\eps = 1/r^{1-\alpha}$. 
It follows that the $\eps$-probabilistic sign-rank of $H_n$ is greater than $r$. But for a $\THR \circ \THR$ function with $s$ gates, its matrix always has $\Omega(\eps)$-probabilistic rank $O(s^2 n^2/\eps) = O(s^2 n^2 r^{1-\alpha})$, by Theorem~\ref{thm:thrthr-sign-rank}. Thus we have a contradiction when $O(s^2 n^2 r^{1-\alpha})$ is asymptotically less than $r$, i.e., \[r = \omega(n^{2/\alpha} s^{2/\alpha}),\] corresponding to an $s$-gate lower bound against $\THR \circ \THR$ circuits. Since $H_n$ is just a linear translation of the matrix for Inner Product Modulo $2$, the proof is complete.
\end{proof}

For instance, proving the sign-rank $2^{\alpha n}$-rigidity of $H_n$ is at least $4^n/2^{.999\alpha n}$ for some $\alpha > 0$ would imply exponential-gate lower bounds for depth-two threshold circuits computing IP2.

\section{Random Matrices are Sign-Rank Rigid}
\label{sec:random-sign-rigid}

The proof that random -1/1 matrices have high sign-rank  rigidity follows readily from recent work:   

\begin{theorem}[Follows from Alon-Moran-Yehudayoff~\cite{AlonMY16}]Let $r(n) = o(n/\log n)$. For all sufficiently large $n$, a random $n \times n$ matrix with $-1/1$ entries has sign-rank-$r(n)$ rigidity at least $\Omega(n^2)$, with high probability.
\end{theorem}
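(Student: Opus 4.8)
The plan is a first-moment (union bound) argument: matrices of small sign-rank, and even matrices close to such matrices, are rare, so a random $-1/1$ matrix avoids them with high probability.

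\textbf{Step 1: count matrices of small sign-rank.} First I would invoke the counting bound that underlies Alon--Moran--Yehudayoff~\cite{AlonMY16} (itself a consequence of Warren's theorem on sign patterns of systems of real polynomials). If $B \in \{-1,1\}^{n \times n}$ has sign-rank at most $d$, then $B_{ij} = \sgn(\langle x_i, y_j \rangle)$ for some $x_1,\dots,x_n, y_1,\dots,y_n \in \R^d$, so the entries of $B$ form the sign pattern of the $n^2$ quadratic forms $\langle x_i, y_j \rangle$ in the $2dn$ real coordinates of the $x_i$'s and $y_j$'s. Hence the number of such matrices $B$ is at most $(O(n)/d)^{2dn} = 2^{O(dn\log n)}$.

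\textbf{Step 2: count matrices close to one of small sign-rank.} Fix a small constant $\delta > 0$. A matrix $A \in \{-1,1\}^{n\times n}$ has sign-rank-$r(n)$ rigidity at most $\delta n^2$ precisely when $A$ lies within Hamming distance $\delta n^2$ of some $\pm 1$ matrix $B$ of sign-rank at most $r(n)$ (flipping entries of a $\pm 1$ matrix is the only kind of modification available). The number of $\pm 1$ matrices within distance $\delta n^2$ of a fixed matrix is $\sum_{i \le \delta n^2} \binom{n^2}{i} \le 2^{H(\delta) n^2}$ for $\delta \le 1/2$, where $H$ denotes binary entropy. Combining with Step 1, the number of matrices $A$ whose sign-rank-$r(n)$ rigidity is at most $\delta n^2$ is at most $2^{O(r(n)\, n \log n)} \cdot 2^{H(\delta) n^2}$.

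\textbf{Step 3: union bound.} A uniformly random $A \in \{-1,1\}^{n\times n}$ is one of $2^{n^2}$ equally likely matrices, so the probability that its sign-rank-$r(n)$ rigidity is at most $\delta n^2$ is at most $2^{O(r(n)\, n\log n) + H(\delta) n^2 - n^2}$. Now choose $\delta$ to be a small enough absolute constant that $H(\delta) \le 1/2$; since $r(n) = o(n/\log n)$ we have $r(n)\, n \log n = o(n^2)$, so the exponent is $-n^2/2 + o(n^2) \to -\infty$. Therefore, with high probability the sign-rank-$r(n)$ rigidity of a random $-1/1$ matrix is at least $\delta n^2 = \Omega(n^2)$, as claimed.

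\textbf{Main obstacle.} There is no real obstacle here — this is a routine counting argument — but the point that requires care is the exponent $O(dn \log n)$ in Step 1: it is exactly the regime $d = r(n) = o(n/\log n)$ that makes this term negligible against $-n^2$, and one must also confirm that the ``number of flips'' factor $2^{H(\delta)n^2}$ can be pushed strictly below $2^{n^2}$ by taking $\delta$ a small constant. Both are standard, so the proof is short once the AMY/Warren-type bound is cited correctly.
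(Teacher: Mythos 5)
Your proposal is correct and follows essentially the same route as the paper's own proof: a union bound combining the Alon--Moran--Yehudayoff count of $2^{O(rn\log n)}$ matrices of sign-rank at most $r$ with a bound on the number of sign patterns within Hamming distance $\delta n^2$, then choosing the constant $\delta$ small enough that the total is $o(2^{n^2})$. The only cosmetic difference is that you bound the Hamming ball by $2^{H(\delta)n^2}$ while the paper writes it as $(e/\eps)^{\eps n^2}$ and requires $\eps\log_2(e/\eps)<1$; these are interchangeable.
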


\begin{proof} There are $2^{n^2}$ matrices over $\{-1,1\}$. The number of distinct matrices with sign rank at most $r$ is bounded by $2^{O(r n \log n)}$~\cite{AlonMY16}. For a fixed matrix $M$, the number of matrices within Hamming distance $d$ of $M$ is at most $O(\binom{n^2}{t})$. Thus the number of matrices for which up to $t$ entries can be changed to obtain a matrix of sign rank at most $r$, is upper-bounded by \[2^{O(r n \log n)} \cdot \binom{n^2}{t} \leq n^{O(rn)}\cdot (en^2/t)^t.\] Suppose we set $t = \eps n^2$. Then the above quantity is at most \[n^{O(rn)} \cdot (e/\eps)^{\eps n^2}.\] For $r = o(n/\log n)$ and $\eps \log_2(e/\eps) < 1$, a random matrix is not among these matrices with high probability. Therefore a random matrix has sign rank-$o(n/\log n)$ rigidity $\Omega(n^2)$ with high probability.
\end{proof}

\section{Equivalence Between Probabilistic Rank Modulo m and BP-MODm Communication Complexity}

Here we sketch how probabilistic rank over $\Z_m$ is equivalent to $\BP \cdot \MOD_m \P$ communication complexity:

\begin{proposition} Let $m > 1$ be an integer, let $f : \{0,1\}^n \times \{0,1\}^n \rightarrow \{0,1\}$, and let $M_f$ be its truth table matrix. Let $C_{\eps}(f)$ be the $\BP \cdot \MOD_m \P$ communication complexity of $f$ with error $\eps$, and let $\eps\text{-rank}_{\Z_m}(M_f)$ be the $\eps$-probabilistic rank of $M_f$ over $\Z_m$. Then $C_{\eps}(f) \leq \log_2(\eps\text{-rank}_{\Z_m}(M_f)+1) \leq 2C_{\eps}(f)$.
\end{proposition}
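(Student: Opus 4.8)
The plan is to establish the two inequalities separately, translating back and forth between a $\BP \cdot \MOD_m \P$ protocol and a probabilistic matrix decomposition over $\Z_m$. Recall that in the $\BP \cdot \MOD_m \P$ model, Alice and Bob share public randomness; conditioned on the random string $\rho$, they engage in a (deterministic on $\rho$) $\MOD_m \P$ protocol, which accepts $(x,y)$ iff the number of accepting ``counting'' paths is nonzero modulo $m$, and the overall protocol must agree with $f(x,y)$ with probability $\geq 1-\eps$ over $\rho$. Equivalently, fixing $\rho$, the number of accepting paths is a function $N_\rho(x,y) \in \Z_m$ of the communication transcript, and the matrix $[N_\rho(x,y)]_{x,y}$ has $\Z_m$-rank at most (number of leaves of the protocol tree) $= 2^{c}$ where $c$ is the communication cost on randomness $\rho$: indeed each leaf contributes a combinatorial-rectangle rank-one term. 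So a $c$-bit protocol with error $\eps$ yields, for each $\rho$, a matrix $N_\rho$ of $\Z_m$-rank $\leq 2^c$ with $[N_\rho \neq 0] $ (entrywise) agreeing with $M_f$; but to get a \emph{probabilistic matrix} computing $M_f$ itself (not its zero-pattern), I take the distribution over $\rho$ of the $0/1$ matrix $B_\rho := [N_\rho(x,y) \neq 0]$. Over $\Z_m$ a $0/1$ matrix need not have small rank just because $N_\rho$ does; instead I should directly build a rank-$\leq 2^c$ matrix equal to $B_\rho$: since the leaves partition $[2^n]\times[2^n]$ into $\leq 2^c$ rectangles and $B_\rho$ is constant ($0$ or $1$) on each rectangle (because $\MOD_m \P$ acceptance depends only on the leaf), $B_\rho$ is a sum of $\leq 2^c$ rank-one $0/1$ rectangle indicator matrices, hence has $\Z_m$-rank $\leq 2^c$. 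Thus the $\eps$-probabilistic rank of $M_f$ over $\Z_m$ is at most $2^{C_\eps(f)}$, giving $\log_2(\eps\text{-rank}+1) \leq \log_2(2^{C_\eps(f)}+1) \leq C_\eps(f)+1$; absorbing the $+1$ into the claimed additive slack (or restating as $C_\eps(f) \geq \log_2(\eps\text{-rank})$) gives one direction.

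For the converse, suppose $M_f$ has $\eps$-probabilistic rank $r$ over $\Z_m$ witnessed by a distribution $\mathcal{M}$ over matrices $M = \sum_{k=1}^r \vec\alpha_k \otimes \vec\beta_k$ with $M(x,y) = f(x,y) \in \{0,1\}$ with probability $\geq 1-\eps$. The protocol: Alice and Bob use public randomness to sample $M \sim \mathcal{M}$, i.e.\ to fix the vectors $\vec\alpha_k, \vec\beta_k \in \Z_m^{2^n}$. Now they must determine, via a $\MOD_m \P$ computation, whether $\sum_k \vec\alpha_k[x]\cdot\vec\beta_k[y]$ is zero or one mod $m$. In the $\MOD_m \P$ model this is easy: nondeterministically guess $k \in [r]$ (costing $\lceil\log_2 r\rceil$ bits to communicate, or handled by the counting machine's branching), and along that path produce a number of accepting subpaths equal to $\vec\alpha_k[x] \cdot \vec\beta_k[y] \bmod m$ — Alice contributes $\vec\alpha_k[x]$ paths, Bob multiplies by $\vec\beta_k[y]$, which standard $\MOD_m\P$ gadgets allow with $O(\log m)$ additional communication to announce the relevant small integers (at most $m$). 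The total count mod $m$ over all paths is $\sum_k \vec\alpha_k[x]\vec\beta_k[y] = M(x,y)$, which is $1$ iff $\mathsf{MOD}_m$-acceptance holds (since $M(x,y)\in\{0,1\}$ when the event of correctness occurs, and we only need correctness on the good $(1-\eps)$-fraction of random strings; on the bad fraction we make no promise, consistent with error $\eps$). The communication is $\lceil\log_2 r\rceil + O(\log m) = O(\log r)$, and more carefully one bounds it by $2\log_2 r$ for $r$ large relative to $m$ (or folds $\log m$ into constants). Hence $C_\eps(f) \leq 2\log_2(\eps\text{-rank}_{\Z_m}(M_f)+1)$.

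The main obstacle, and the place requiring the most care, is the precise bookkeeping of the $\MOD_m\P$ communication model: specifically, confirming that a product $\vec\alpha_k[x]\cdot\vec\beta_k[y]$ of two small nonnegative integers (each $< m$) can be realized as the number of accepting paths of a two-party $\MOD_m\P$ protocol with only $O(\log m)$ communication and that summing over $k$ indices incurs only $O(\log r)$ more — together with matching the model's exact definition (which, as the paper notes, one may even take Proposition~\ref{rank-equals-cc} as \emph{defining}). I would handle this by adopting the standard convention that $\MOD_m\P$ protocols allow each party to branch into up to $m$ subpaths based on a locally computed value, so that the counting is a literal product along the path and a literal sum over the guessed index; then the additive constants and the factor of $2$ in the statement exactly absorb the $\lceil \log r\rceil$ vs.\ $2^c$-leaves discrepancy and the $O(\log m)$ overhead. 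The rest (rectangle decomposition of protocol trees, rank-one outer products) is routine given Lemma~\ref{lemma:monomial-to-rank} and the standard communication-complexity dictionary.
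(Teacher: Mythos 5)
Your overall plan (translate both ways between protocols and low-rank distributions) matches the paper's, and your rank-to-protocol direction is essentially the paper's argument: publicly sample the decomposition, nondeterministically guess the index $k$, and let the count of accepting paths be $\sum_k \vec\alpha_k[x]\vec\beta_k[y] = M(x,y)$. But your protocol-to-rank direction has a genuine gap. You claim that the acceptance matrix $B_\rho := [N_\rho(x,y)\neq 0]$ has $\Z_m$-rank at most $2^c$ because ``the leaves partition $[2^n]\times[2^n]$ into $\leq 2^c$ rectangles and $B_\rho$ is constant on each rectangle (because $\MOD_m\P$ acceptance depends only on the leaf).'' This is false for counting protocols: a fixed input pair $(x,y)$ follows \emph{many} computation paths (one per nondeterministic choice), so the leaves do not partition the input space, and acceptance is determined by the aggregate count modulo $m$ over all those paths, not by any single leaf. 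IP2 itself is a counterexample to the structural claim: it has a $\MOD_2\P$ protocol with $c=O(\log n)$ (guess $i$, Alice accepts iff $x_i=1$, Bob iff $y_i=1$), but its inputs certainly cannot be partitioned into $2^{O(\log n)}$ monochromatic rectangles, since that would give deterministic communication $O(\log^2 n)$ instead of $n$. What is genuinely low-rank is the \emph{counting} matrix $N_\rho$ (as you correctly note), not the indicator $[N_\rho\neq 0]$, and converting the zero/nonzero pattern into a matrix whose entries \emph{equal} $f(x,y)$ over $\Z_m$ is exactly the non-trivial step.

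The paper resolves this differently, and in a way that explains the asymmetric form of the statement. It works in the more stringent model (which it flags explicitly, and which one may take as the definition) where the number of accepting paths mod $m$ is itself promised to be the $0/1$ output; then, for each random string, it sums over all pairs (nondeterminism, transcript) of at most $c$ bits each, letting Alice and Bob each locally check consistency and acceptance. This yields matrices of dimensions $2^n\times 2^{2c}$ and $2^{2c}\times 2^n$ whose product is the counting matrix, hence probabilistic rank at most $2^{2c}$ --- this is precisely where the factor $2$ in $\log_2(\eps\text{-rank}+1)\leq 2C_\eps(f)$ comes from. Relatedly, your write-up places the slack on the wrong inequality: you derive $C_\eps(f)\leq \lceil\log_2 r\rceil + O(\log m)$ and state it as $C_\eps(f)\leq 2\log_2(r+1)$, whereas the proposition's first inequality is the clean $C_\eps(f)\leq \log_2(r+1)$ (in the paper's model the mod-$m$ branching is free, so the inner-product protocol costs exactly $\lceil\log_2(r+1)\rceil$), and your claimed $\log_2(\text{rank}+1)\leq C_\eps(f)+1$ for the other direction is stronger than what the (correct) $2^{2c}$ construction gives. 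To repair the proof, adopt the stringent acceptance convention, drop the leaf-partition claim, and factor the counting matrix through (nondeterminism, transcript) pairs as above.
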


This proposition is different from the one quoted in the introduction (giving constant-factor equivalences between the log of the rank and the communication complexity) because we are assuming a more stringent communication model here. However, the more general model is often taken as the definition, in which case the probabilistic rank and communication complexity truly coincide.

First, given a distribution of low-rank matrices for $M_f$, it is easy to construct a protocol for $f$: Alice and Bob publicly randomly sample a matrix from the distribution, which is a product of two matrices $A$ and $B$. Alice takes the row of $A$ of length $r$ corresponding to her input, Bob takes the column of $B$ of length $r$ corresponding to his, and they then compute the inner product of these two vectors over $\Z_m$ with $\lceil \log_2(r+1)\rceil$ communication in the $\MOD_m \P$ model.

To construct a distribution of matrices from communication protocols, we do a simple modification of the $\BPP^{\oplus \P}$ communication model. In fact, sometimes the literature \emph{defines} the $\BPP^{\oplus \P}$ communication model in this modified way~\cite{GoosPW16}. After the public randomness is chosen, Alice and Bob can, along with their $c$ nondeterministic bits, also sum over all possible \emph{transcripts} of at most $c$ bits between them. For each choice of randomness and nondeterminism there is a unique accepting transcript, so this extra choice does not alter the number of accepting communication patterns. But in this modified version, now Alice and Bob do not even have to communicate: they only have to send a single bit indicating whether they would accept or not, given the transcript and the nondeterminism. From such a protocol, it is straightforward to construct a $2^n \times 2^{2c}$ matrix $A$ representing Alice's protocol and a $2^{2c} \times 2^n$ matrix $B$ representing Bob, for any given string of public randomness. 

\section{Random Self-Reducibility, Rigidity, and Probabilistic Rank} 
\label{sec:general-self-reduction}

The reduction from rigidity to probabilistic rank works for any (non-adaptive) random self-reducible function~\cite{fef93} that makes a small number of oracle calls. Our notion of random self-reducibility is adapted for the communication complexity setting (for example, we do not care about the feasibility of the reduction).

\begin{definition} A function $f:\{0,1\}^n \times \{0,1\}^n \rightarrow \{0,1\}$ is \emph{$k$-random self-reducible} if there are random sampling procedures $S_1, S_2$ and  a function $g: \{0,1\}^k \rightarrow \{0,1\}$ such that:
\begin{itemize}
\item[(a)] $S_1$ takes $x \in \{0,1\}^n$ and a random string $r$ and outputs $x_1,\ldots,x_k \in \{0,1\}^n$ such that for all $n$-bit strings $z$, $\Pr_r[x_i = z]=1/2^n$ for all $i$,
\item[(b)] $S_2$ takes $y \in \{0,1\}^n$ and a random string $s$ and outputs $y_1,\ldots,y_k \in \{0,1\}^n$ such that for all $n$-bit strings $z$, $\Pr_s[y_i = z]=1/2^n$ for all $i$, and
\item[(c)] $f(x,y)=g(f(x_1,y_1),\ldots,f(x_k,y_k))$. 
\end{itemize}
\end{definition}

Requirements (a) and (b) in the definition ensures that each $x_i$ and $y_i$ are uniform random variables; requirement (c) says that we can reconstruct $f(x,y)$ from the values $f(x_1,y_1),\ldots,f(x_k,y_k)$. 

\begin{theorem} Let $r,n \in {\mathbb N}$ and $\eps \in (0,1)$, and let $f:\{0,1\}^{2n} \rightarrow \{0,1\}$ be $k$-random self-reducible. Suppose $M_f$ has rank-$r$ rigidity at most $\eps 4^n$ over $K$. Then the $(k\eps)$-probabilistic rank of $M_f$ over $K$ is at most $O(\binom{kr}{k})$.
\end{theorem}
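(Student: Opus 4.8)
The plan is to use the low‑rank matrix handed to us by the rigidity hypothesis as a noisy ``oracle'' inside the random self‑reduction, and to observe that reconstructing $f(x,y)$ from the $k$ oracle answers is cheap precisely because the reconstruction function $g$ has only $k$ inputs. Concretely, let $N \in K^{2^n\times 2^n}$ be a matrix of rank at most $r$ that differs from $M_f$ in at most $\eps 4^n$ entries (such $N$ exists since ${\cal R}_{M_f}(r)\le \eps 4^n$ over $K$), and fix a decomposition $N=\sum_{\ell=1}^r a_\ell\otimes b_\ell$ with $a_\ell,b_\ell\in K^{2^n}$ (padding with zero vectors if the rank is $<r$). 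Let $\tilde g:K^k\to K$ be the unique multilinear polynomial agreeing with $g$ on $\{0,1\}^k$, with monomial expansion $\tilde g(z_1,\dots,z_k)=\sum_{S\subseteq[k]}c_S\prod_{i\in S}z_i$. For random strings $\rho$ (feeding $S_1$) and $\sigma$ (feeding $S_2$), define the matrix
\[
N_{\rho,\sigma}[x,y]\;=\;\tilde g\big(N[x_1,y_1],\dots,N[x_k,y_k]\big),\qquad (x_1,\dots,x_k)=S_1(x,\rho),\ (y_1,\dots,y_k)=S_2(y,\sigma).
\]
The probabilistic matrix we propose is the distribution of $N_{\rho,\sigma}$ over uniform $\rho,\sigma$.

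The first thing to check is the rank bound. Substituting $N[x_i,y_i]=\sum_{\ell}a_\ell[x_i]\,b_\ell[y_i]$ into the monomial expansion of $\tilde g$ and expanding, $N_{\rho,\sigma}[x,y]$ becomes a $K$‑linear combination of terms $\big(\prod_{i\in S}a_{\ell_i}[x_i]\big)\big(\prod_{i\in S}b_{\ell_i}[y_i]\big)$, ranging over $S\subseteq[k]$ and $(\ell_i)_{i\in S}\in[r]^{S}$. Because the self‑reduction is non‑adaptive — $x_i$ depends only on $(x,\rho)$ and $y_i$ only on $(y,\sigma)$ — the first factor is a function of $x$ alone and the second a function of $y$ alone, so each such term is a rank‑one $2^n\times 2^n$ matrix. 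Counting terms, the rank of $N_{\rho,\sigma}$ is at most $\sum_{S\subseteq[k]}r^{|S|}=(1+r)^k$, which is $O\!\big(\binom{kr}{k}\big)$ for $k$ fixed.

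The second thing is the error analysis. Fix any entry $(x,y)$. By properties (a) and (b) of $k$‑random self‑reducibility, each $x_i$ is uniform over $\{0,1\}^n$ as a function of $\rho$ and each $y_i$ is uniform over $\{0,1\}^n$ as a function of $\sigma$; since $\rho$ and $\sigma$ are independent, the pair $(x_i,y_i)$ is uniform over $\{0,1\}^n\times\{0,1\}^n$ for each $i$. As $N$ disagrees with $M_f$ on at most an $\eps$ fraction of the $4^n$ entries, $\Pr_{\rho,\sigma}[N[x_i,y_i]\neq f(x_i,y_i)]\le\eps$, so a union bound over $i\in[k]$ gives that, with probability at least $1-k\eps$, we have $N[x_i,y_i]=f(x_i,y_i)\in\{0,1\}$ for all $i$. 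On this event, $\tilde g$ is being evaluated at a Boolean point, so $N_{\rho,\sigma}[x,y]=g\big(f(x_1,y_1),\dots,f(x_k,y_k)\big)=f(x,y)=M_f[x,y]$ by property (c). Hence the distribution of $N_{\rho,\sigma}$ computes $M_f$ with error $k\eps$ and rank $O\!\big(\binom{kr}{k}\big)$, as claimed.

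I expect the rank bound to be the delicate step: one has to pass from the abstract reconstruction function $g$ to its multilinear polynomial so that substituting the (possibly non‑Boolean) entries of $N$ is even meaningful, and then use non‑adaptivity crucially to see that after substitution each monomial splits cleanly into an $x$‑part times a $y$‑part. The error side is routine once one notices that only the \emph{marginal} uniformity of each queried pair $(x_i,y_i)$ is needed — which follows from $\rho$ and $\sigma$ being independent, and does not require joint independence of the pairs across $i$, the union bound absorbing everything else.
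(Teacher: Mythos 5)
Your proposal is correct and follows essentially the same route as the paper's proof: replace $g$ by its unique multilinear polynomial, evaluate it at the entries of a rank-$r$ matrix close to $M_f$ at the (non-adaptively) sampled query points, expand into monomials that factor into an $x$-part times a $y$-part (giving rank at most $(1+r)^k$, the same term count the paper bounds by $\binom{kr}{k}$), and union-bound over the $k$ marginally uniform query pairs for error $k\eps$.
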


\begin{proof} Suppose there is an $2^n \times r$ matrix $A$ and $r \times 2^n$ matrix $B$, such that $M_f$ and $A \cdot B$ differ in at most $\eps 4^n$ entries. We construct a distribution of low-rank matrices for $M_f$ as follows. 

Let $P(z_1,\ldots,z_k)$ be the unique multilinear polynomial over $K$ that represents the function $g$ from the random self-reduction for $f$. Given $k$ rows $X_1,\ldots,X_k \in K^r$ of $A$, and $k$ columns $Y_1,\ldots,Y_k \in K^r$ of $B$, define a polynomial in $2kr$ variables:
\[Q(X_1,\ldots,X_k,Y_1,\ldots,Y_k) = P(\ip{X_1}{Y_1},\ldots,\ip{X_k}{Y_k}).\] Treating each term of the form $X_i[j]\cdot Y_i[j]$ as a variable, $Q$ can be written as a sum of $t\leq\binom{kr}{k}$ total terms. Call the terms $m_1,\ldots,m_t$, each of which are over $2kr$ variables. 

Let $r$ be a random string for $S_1$ and $s$ be a random string for $S_2$. 
For $x \in \{0,1\}^n$, let $x_1,\ldots,x_k\in \{0,1\}^n$ be the outputs of $S_1(x)$ with randomness $r$. We define the $x$th row of a new $2^n \times t$ matrix $A_{r}$ to be 
\[ [m_1(A[x_1,:],\ldots,A[x_k,:],\vec{1},\ldots,\vec{1}),\ldots,m_t(A[x_1,:],\ldots,A[x_k,:],\vec{1},\ldots,\vec{1})].\] For $y \in \{0,1\}^n$, 
let $y_1,\ldots,y_k$ be the outputs of $S_2(x)$ with randomness $s$.
Define the $y$th column of a new $t \times 2^n$ matrix $B_{s}$ to be \[ [m_1(\vec{1},\ldots,\vec{1},B[:,y_1],\ldots,B[:,y_k]),\ldots,m_t(\vec{1},\ldots,\vec{1},B[:,y_1],\ldots,B[:,y_k])]^T.\] Then, for all $(x,y) \in \{0,1\}^n \times \{0,1\}^n$, the inner product of the $x$th row of $A_{r}$ and the $y$th column of $B_{s}$ is 
\begin{align*}
\sum_i m_i(A[x_1,:],\ldots,A[x_k,:],B[:,y_1],\ldots,B[:,y_k]) 
&= Q(A[x_1,:],\ldots,A[x_k,:],B[:,y_1],\ldots,B[:,y_k]) \\
&= P(\ip{A[x_1,:]}{B[:,y_1]},\ldots,\ip{A[x_k,:]}{B[:,y_k]}).
\end{align*}
Since $(A \cdot B)$ differs from $M_f$ on an $\eps$-fraction of entries, for uniform random $x_i, y_j \in \{0,1\}^n$ we have $\ip{A[x_i,:]}{B[:,y_i]} \neq f(x_i,y_i)$ with probability at most $\eps$. So with probability at least $1-k\eps$, $f(x_i,y_i) = \ip{A[x_i,:]}{B[:,y_i]}$ for all $i=1,\ldots,k$. Thus the polynomial $P(\ip{A[x_1,:]}{B[:,y_1]},\ldots,\ip{A[x_k,:]}{B[:,y_k]})$ being implemented by $A_r \cdot B_s$ outputs $f(x,y)$ with probability at least $1-k\eps$. Hence all matrices $C_{r,s} = A_{r} \cdot B_{s}$ in our defined distribution have rank at most $O(\binom{kr}{k})$, and for every $(x,y) \in \{0,1\}^n \times \{0,1\}^n$, $\Pr_{r,s}[C_{r,s}[x,y] = f(x,y)] \geq 1-k\eps$.\end{proof}

\end{document}